\numberwithin{equation}{section}
\renewcommand{\leq}{\leqslant}
\renewcommand{\geq}{\geqslant}
\newtheorem{Theorem}{Theorem}[section]
\newtheorem{Proposition}[Theorem]{Proposition}
\newtheorem{Lemma}[Theorem]{Lemma}
\newtheorem{Claim}[Theorem]{\normalfont {\em Claim}}
\newtheorem{Definition}[Theorem]{Definition}
\newtheorem{Corollary}[Theorem]{Corollary}
\DeclareMathOperator{\QQ}{\mathbb{Q}}
\DeclareMathOperator{\NN}{\mathbb{N}}
\DeclareMathOperator{\STD}{\mathrm{s}}
\DeclareMathOperator{\SHUFFLE}{\bullet}
\DeclareMathOperator{\BINTOPERM}{\mathrm{btp}}
\DeclareMathOperator{\PERMTOBIN}{\mathrm{ptb}}
\DeclareMathOperator{\DMATCHING}{\mathcal{M}}
\newcommand{\OEIS}[1]{\href{http://oeis.org/#1}{{\bf #1}}}
\newcommand{\CrossingLL}{%
  \begin{tikzpicture}[yscale=0.07,xscale=0.07,inner sep=.8pt,
      node distance=.25cm,>=latex',line width=1pt,baseline={(0,0.04)}]
    \draw node [draw,circle,fill=black] (U1)               {};
    \draw node [draw,circle,fill=black] [right of=U1] (U2) {};
    \draw node [draw,circle,fill=black] [right of=U2] (U3) {};
    \draw node [draw,circle,fill=black] [right of=U3] (U4) {};
    \draw [<-] (U1.north) .. controls ($ (U1.north) + (0,6) $) and ($ (U3.north) + (0,6) $) .. (U3.north);
    \draw [<-] (U2.north) .. controls ($ (U2.north) + (0,6) $) and ($ (U4.north) + (0,6) $) .. (U4.north);
\end{tikzpicture}}
\newcommand{\CrossingLR}{%
  \begin{tikzpicture}[yscale=0.07,xscale=0.07,inner sep=.8pt,
      node distance=.25cm,>=latex',line width=1pt,baseline={(0,0.04)}]
    \draw node [draw,circle,fill=black] (U1)               {};
    \draw node [draw,circle,fill=black] [right of=U1] (U2) {};
    \draw node [draw,circle,fill=black] [right of=U2] (U3) {};
    \draw node [draw,circle,fill=black] [right of=U3] (U4) {};
    \draw [<-] (U1.north) .. controls ($ (U1.north) + (0,6) $) and ($ (U3.north) + (0,6) $) .. (U3.north);
    \draw [->] (U2.north) .. controls ($ (U2.north) + (0,6) $) and ($ (U4.north) + (0,6) $) .. (U4.north);
\end{tikzpicture}}
\newcommand{\CrossingRL}{%
  \begin{tikzpicture}[yscale=0.07,xscale=0.07,inner sep=.8pt,
      node distance=.25cm,>=latex',line width=1pt,baseline={(0,0.04)}]
    \draw node [draw,circle,fill=black] (U1)               {};
    \draw node [draw,circle,fill=black] [right of=U1] (U2) {};
    \draw node [draw,circle,fill=black] [right of=U2] (U3) {};
    \draw node [draw,circle,fill=black] [right of=U3] (U4) {};
    \draw [->] (U1.north) .. controls ($ (U1.north) + (0,6) $) and ($ (U3.north) + (0,6) $) .. (U3.north);
    \draw [<-] (U2.north) .. controls ($ (U2.north) + (0,6) $) and ($ (U4.north) + (0,6) $) .. (U4.north);
\end{tikzpicture}}
\newcommand{\CrossingRR}{%
  \begin{tikzpicture}[yscale=0.07,xscale=0.07,inner sep=.8pt,
      node distance=.25cm,>=latex',line width=1pt,baseline={(0,0.04)}]
    \draw node [draw,circle,fill=black] (U1)               {};
    \draw node [draw,circle,fill=black] [right of=U1] (U2) {};
    \draw node [draw,circle,fill=black] [right of=U2] (U3) {};
    \draw node [draw,circle,fill=black] [right of=U3] (U4) {};
    \draw [->] (U1.north) .. controls ($ (U1.north) + (0,6) $) and ($ (U3.north) + (0,6) $) .. (U3.north);
    \draw [->] (U2.north) .. controls ($ (U2.north) + (0,6) $) and ($ (U4.north) + (0,6) $) .. (U4.north);
\end{tikzpicture}}
\newcommand{\InclusionLL}{%
  \begin{tikzpicture}[yscale=0.07,xscale=0.07,inner sep=.8pt,
      node distance=.25cm,>=latex',line width=1pt,baseline={(0,0.04)}]
    \draw node [draw,circle,fill=black] (U1)               {};
    \draw node [draw,circle,fill=black] [right of=U1] (U2) {};
    \draw node [draw,circle,fill=black] [right of=U2] (U3) {};
    \draw node [draw,circle,fill=black] [right of=U3] (U4) {};
    \draw [<-] (U1.north) .. controls ($ (U1.north) + (0,6) $) and ($ (U4.north) + (0,6) $) .. (U4.north);
    \draw [<-] (U2.north) .. controls ($ (U2.north) + (0,5) $) and ($ (U3.north) + (0,5) $) .. (U3.north);
\end{tikzpicture}}
\newcommand{\InclusionLR}{%
  \begin{tikzpicture}[yscale=0.07,xscale=0.07,inner sep=.8pt,
      node distance=.25cm,>=latex',line width=1pt,baseline={(0,0.04)}]
    \draw node [draw,circle,fill=black] (U1)               {};
    \draw node [draw,circle,fill=black] [right of=U1] (U2) {};
    \draw node [draw,circle,fill=black] [right of=U2] (U3) {};
    \draw node [draw,circle,fill=black] [right of=U3] (U4) {};
    \draw [<-] (U1.north) .. controls ($ (U1.north) + (0,6) $) and ($ (U4.north) + (0,6) $) .. (U4.north);
    \draw [->] (U2.north) .. controls ($ (U2.north) + (0,5) $) and ($ (U3.north) + (0,5) $) .. (U3.north);
\end{tikzpicture}}
\newcommand{\InclusionRL}{%
  \begin{tikzpicture}[yscale=0.07,xscale=0.07,inner sep=.8pt,
      node distance=.25cm,>=latex',line width=1pt,baseline={(0,0.04)}]
    \draw node [draw,circle,fill=black] (U1)               {};
    \draw node [draw,circle,fill=black] [right of=U1] (U2) {};
    \draw node [draw,circle,fill=black] [right of=U2] (U3) {};
    \draw node [draw,circle,fill=black] [right of=U3] (U4) {};
    \draw [->] (U1.north) .. controls ($ (U1.north) + (0,6) $) and ($ (U4.north) + (0,6) $) .. (U4.north);
    \draw [<-] (U2.north) .. controls ($ (U2.north) + (0,5) $) and ($ (U3.north) + (0,5) $) .. (U3.north);
\end{tikzpicture}}
\newcommand{\InclusionRR}{%
  \begin{tikzpicture}[yscale=0.07,xscale=0.07,inner sep=.8pt,
      node distance=.25cm,>=latex',line width=1pt,baseline={(0,0.04)}]
    \draw node [draw,circle,fill=black] (U1)               {};
    \draw node [draw,circle,fill=black] [right of=U1] (U2) {};
    \draw node [draw,circle,fill=black] [right of=U2] (U3) {};
    \draw node [draw,circle,fill=black] [right of=U3] (U4) {};
    \draw [->] (U1.north) .. controls ($ (U1.north) + (0,6) $) and ($ (U4.north) + (0,6) $) .. (U4.north);
    \draw [->] (U2.north) .. controls ($ (U2.north) + (0,5) $) and ($ (U3.north) + (0,5) $) .. (U3.north);
\end{tikzpicture}}
\newcommand{\LabeledInclusionRL}[5]{%
  \begin{tikzpicture}[scale=#1,inner sep=.1pt,node distance=.25cm,
      >=latex',text height=1.5ex,text depth=.25ex,line width=1pt,baseline={(0,0.05)}]
    \draw node [font=\tiny] (U1)               {#2};
    \draw node [font=\tiny] [right of=U1] (U2) {#3};
    \draw node [font=\tiny] [right of=U2] (U3) {#4};
    \draw node [font=\tiny] [right of=U3] (U4) {#5};
    \draw [->] (U1.north) .. controls ($ (U1.north) + (0,2) $) and ($ (U4.north) + (0,2) $) .. (U4.north);
    \draw [<-] (U2.north) .. controls ($ (U2.north) + (0,1.5) $) and ($ (U3.north) + (0,1.5) $) .. (U3.north);
\end{tikzpicture}}
\newcommand{\ConfigurationInclusionLL}[5]{%
  \begin{tikzpicture}[scale=#1,inner sep=.1pt,node distance=.7cm,
      >=latex',text height=1.5ex,text depth=.25ex,line width=1pt,baseline={(0,0.05)}]
    \draw node [font=\tiny,draw,shape=circle,circular drop shadow,fill=black!10,inner sep=2pt] (U1)               {#2};
    \draw node [font=\tiny,draw,shape=circle,circular drop shadow,fill=black!10,inner sep=2pt] [right of=U1] (U2) {#3};
    \draw node [font=\tiny,draw,shape=circle,circular drop shadow,fill=black!10,inner sep=2pt] [right of=U2] (U3) {#4};
    \draw node [font=\tiny,draw,shape=circle,circular drop shadow,fill=black!10,inner sep=2pt] [right of=U3] (U4) {#5};
    \draw [<-] (U1.north) .. controls ($ (U1.north) + (0,2) $) and ($ (U4.north) + (0,2) $) .. (U4.north);
    \draw [<-] (U2.north) .. controls ($ (U2.north) + (0,1.5) $) and ($ (U3.north) + (0,1.5) $) .. (U3.north);
\end{tikzpicture}}
\newcommand{\ConfigurationInclusionLR}[5]{%
  \begin{tikzpicture}[scale=#1,inner sep=.1pt,node distance=.7cm,
      >=latex',text height=1.5ex,text depth=.25ex,line width=1pt,baseline={(0,0.05)}]
    \draw node [font=\tiny,draw,shape=circle,circular drop shadow,fill=black!10,inner sep=2pt] (U1)               {#2};
    \draw node [font=\tiny,draw,shape=circle,circular drop shadow,fill=black!10,inner sep=2pt] [right of=U1] (U2) {#3};
    \draw node [font=\tiny,draw,shape=circle,circular drop shadow,fill=black!10,inner sep=2pt] [right of=U2] (U3) {#4};
    \draw node [font=\tiny,draw,shape=circle,circular drop shadow,fill=black!10,inner sep=2pt] [right of=U3] (U4) {#5};
    \draw [<-] (U1.north) .. controls ($ (U1.north) + (0,2) $) and ($ (U4.north) + (0,2) $) .. (U4.north);
    \draw [->] (U2.north) .. controls ($ (U2.north) + (0,1.5) $) and ($ (U3.north) + (0,1.5) $) .. (U3.north);
\end{tikzpicture}}
\newcommand{\ConfigurationInclusionRL}[5]{%
  \begin{tikzpicture}[scale=#1,inner sep=.1pt,node distance=.7cm,
      >=latex',text height=1.5ex,text depth=.25ex,line width=1pt,baseline={(0,0.05)}]
    \draw node [font=\tiny,draw,shape=circle,circular drop shadow,fill=black!10,inner sep=2pt] (U1)               {#2};
    \draw node [font=\tiny,draw,shape=circle,circular drop shadow,fill=black!10,inner sep=2pt] [right of=U1] (U2) {#3};
    \draw node [font=\tiny,draw,shape=circle,circular drop shadow,fill=black!10,inner sep=2pt] [right of=U2] (U3) {#4};
    \draw node [font=\tiny,draw,shape=circle,circular drop shadow,fill=black!10,inner sep=2pt] [right of=U3] (U4) {#5};
    \draw [->] (U1.north) .. controls ($ (U1.north) + (0,2) $) and ($ (U4.north) + (0,2) $) .. (U4.north);
    \draw [<-] (U2.north) .. controls ($ (U2.north) + (0,1.5) $) and ($ (U3.north) + (0,1.5) $) .. (U3.north);
\end{tikzpicture}}
\newcommand{\ConfigurationInclusionRR}[5]{%
  \begin{tikzpicture}[scale=#1,inner sep=.1pt,node distance=.7cm,
      >=latex',text height=1.5ex,text depth=.25ex,line width=1pt,baseline={(0,0.05)}]
    \draw node [font=\tiny,draw,shape=circle,circular drop shadow,fill=black!10,inner sep=2pt] (U1)               {#2};
    \draw node [font=\tiny,draw,shape=circle,circular drop shadow,fill=black!10,inner sep=2pt] [right of=U1] (U2) {#3};
    \draw node [font=\tiny,draw,shape=circle,circular drop shadow,fill=black!10,inner sep=2pt] [right of=U2] (U3) {#4};
    \draw node [font=\tiny,draw,shape=circle,circular drop shadow,fill=black!10,inner sep=2pt] [right of=U3] (U4) {#5};
    \draw [->] (U1.north) .. controls ($ (U1.north) + (0,2) $) and ($ (U4.north) + (0,2) $) .. (U4.north);
    \draw [->] (U2.north) .. controls ($ (U2.north) + (0,1.5) $) and ($ (U3.north) + (0,1.5) $) .. (U3.north);
\end{tikzpicture}}
\newcommand{\LabeledCrossingLL}[5]{%
  \begin{tikzpicture}[scale=#1,inner sep=.1pt,node distance=.25cm,
      >=latex',text height=1.5ex,text depth=.25ex,line width=1pt,baseline={(0,0.05)}]
    \draw node [font=\tiny] (U1)               {#2};
    \draw node [font=\tiny] [right of=U1] (U2) {#3};
    \draw node [font=\tiny] [right of=U2] (U3) {#4};
    \draw node [font=\tiny] [right of=U3] (U4) {#5};
    \draw [<-] (U1.north) .. controls ($ (U1.north) + (0,6) $) and ($ (U3.north) + (0,6) $) .. (U3.north);
    \draw [<-] (U2.north) .. controls ($ (U2.north) + (0,6) $) and ($ (U4.north) + (0,6) $) .. (U4.north);
\end{tikzpicture}}
\newcommand{\LabeledCrossingRL}[5]{%
  \begin{tikzpicture}[scale=#1,inner sep=.1pt,node distance=.25cm,
      >=latex',text height=1.5ex,text depth=.25ex,line width=1pt,baseline={(0,0.05)}]
    \draw node [font=\tiny] (U1)               {#2};
    \draw node [font=\tiny] [right of=U1] (U2) {#3};
    \draw node [font=\tiny] [right of=U2] (U3) {#4};
    \draw node [font=\tiny] [right of=U3] (U4) {#5};
    \draw [->] (U1.north) .. controls ($ (U1.north) + (0,6) $) and ($ (U3.north) + (0,6) $) .. (U3.north);
    \draw [<-] (U2.north) .. controls ($ (U2.north) + (0,6) $) and ($ (U4.north) + (0,6) $) .. (U4.north);
\end{tikzpicture}}
\newcommand{\LabeledCrossingRR}[5]{%
  \begin{tikzpicture}[scale=#1,inner sep=.1pt,node distance=.25cm,
      >=latex',text height=1.5ex,text depth=.25ex,line width=1pt,baseline={(0,0.05)}]
    \draw node [font=\tiny] (U1)               {#2};
    \draw node [font=\tiny] [right of=U1] (U2) {#3};
    \draw node [font=\tiny] [right of=U2] (U3) {#4};
    \draw node [font=\tiny] [right of=U3] (U4) {#5};
    \draw [->] (U1.north) .. controls ($ (U1.north) + (0,6) $) and ($ (U3.north) + (0,6) $) .. (U3.north);
    \draw [->] (U2.north) .. controls ($ (U2.north) + (0,6) $) and ($ (U4.north) + (0,6) $) .. (U4.north);
\end{tikzpicture}}
\newcommand{\ConfigurationCrossingRL}[5]{%
  \begin{tikzpicture}[scale=#1,inner sep=.1pt,node distance=.7cm,
      >=latex',text height=1.5ex,text depth=.25ex,line width=1pt,baseline={(0,0.05)}]
    \draw node [font=\tiny,draw,shape=circle,circular drop shadow,fill=black!10,inner sep=2pt] (U1)               {#2};
    \draw node [font=\tiny,draw,shape=circle,circular drop shadow,fill=black!10,inner sep=2pt] [right of=U1] (U2) {#3};
    \draw node [font=\tiny,draw,shape=circle,circular drop shadow,fill=black!10,inner sep=2pt] [right of=U2] (U3) {#4};
    \draw node [font=\tiny,draw,shape=circle,circular drop shadow,fill=black!10,inner sep=2pt] [right of=U3] (U4) {#5};
    \draw [->] (U1.north) .. controls ($ (U1.north) + (0,2) $) and ($ (U3.north) + (0,2) $) .. (U3.north);
    \draw [<-] (U2.north) .. controls ($ (U2.north) + (0,2) $) and ($ (U4.north) + (0,2) $) .. (U4.north);
\end{tikzpicture}}
\newcommand{\ConfigurationCrossingLR}[5]{%
  \begin{tikzpicture}[scale=#1,inner sep=.1pt,node distance=.7cm,
      >=latex',text height=1.5ex,text depth=.25ex,line width=1pt,baseline={(0,0.05)}]
    \draw node [font=\tiny,draw,shape=circle,circular drop shadow,fill=black!10,inner sep=2pt] (U1)               {#2};
    \draw node [font=\tiny,draw,shape=circle,circular drop shadow,fill=black!10,inner sep=2pt] [right of=U1] (U2) {#3};
    \draw node [font=\tiny,draw,shape=circle,circular drop shadow,fill=black!10,inner sep=2pt] [right of=U2] (U3) {#4};
    \draw node [font=\tiny,draw,shape=circle,circular drop shadow,fill=black!10,inner sep=2pt] [right of=U3] (U4) {#5};
    \draw [<-] (U1.north) .. controls ($ (U1.north) + (0,2) $) and ($ (U3.north) + (0,2) $) .. (U3.north);
    \draw [->] (U2.north) .. controls ($ (U2.north) + (0,2) $) and ($ (U4.north) + (0,2) $) .. (U4.north);
\end{tikzpicture}}
\newcommand{\PrecedenceLL}{%
  \begin{tikzpicture}[yscale=0.07,xscale=0.07,inner sep=.8pt,
      node distance=.25cm,>=latex',line width=1pt,baseline={(0,0.04)}]
    \draw node [draw,circle,fill=black] (U1)               {};
    \draw node [draw,circle,fill=black] [right of=U1] (U2) {};
    \draw node [draw,circle,fill=black] [right of=U2] (U3) {};
    \draw node [draw,circle,fill=black] [right of=U3] (U4) {};
    \draw [<-] (U1.north) .. controls ($ (U1.north) + (0,6) $) and ($ (U2.north) + (0,6) $) .. (U2.north);
    \draw [<-] (U3.north) .. controls ($ (U3.north) + (0,6) $) and ($ (U4.north) + (0,6) $) .. (U4.north);
  \end{tikzpicture}
}
\newcommand{\PrecedenceLR}{%
  \begin{tikzpicture}[yscale=0.07,xscale=0.07,inner sep=.8pt,
      node distance=.25cm,>=latex',line width=1pt,baseline={(0,0.04)}]
    \draw node [draw,circle,fill=black] (U1)               {};
    \draw node [draw,circle,fill=black] [right of=U1] (U2) {};
    \draw node [draw,circle,fill=black] [right of=U2] (U3) {};
    \draw node [draw,circle,fill=black] [right of=U3] (U4) {};
    \draw [<-] (U1.north) .. controls ($ (U1.north) + (0,6) $) and ($ (U2.north) + (0,6) $) .. (U2.north);
    \draw [->] (U3.north) .. controls ($ (U3.north) + (0,6) $) and ($ (U4.north) + (0,6) $) .. (U4.north);
  \end{tikzpicture}
}
\newcommand{\PrecedenceRL}{%
  \begin{tikzpicture}[yscale=0.07,xscale=0.07,inner sep=.8pt,
      node distance=.25cm,>=latex',line width=1pt,baseline={(0,0.04)}]
    \draw node [draw,circle,fill=black] (U1)               {};
    \draw node [draw,circle,fill=black] [right of=U1] (U2) {};
    \draw node [draw,circle,fill=black] [right of=U2] (U3) {};
    \draw node [draw,circle,fill=black] [right of=U3] (U4) {};
    \draw [->] (U1.north) .. controls ($ (U1.north) + (0,6) $) and ($ (U2.north) + (0,6) $) .. (U2.north);
    \draw [<-] (U3.north) .. controls ($ (U3.north) + (0,6) $) and ($ (U4.north) + (0,6) $) .. (U4.north);
  \end{tikzpicture}
}
\newcommand{\PrecedenceRR}{%
  \begin{tikzpicture}[yscale=0.07,xscale=0.07,inner sep=.8pt,
      node distance=.25cm,>=latex',line width=1pt,baseline={(0,0.04)}]
    \draw node [draw,circle,fill=black] (U1)               {};
    \draw node [draw,circle,fill=black] [right of=U1] (U2) {};
    \draw node [draw,circle,fill=black] [right of=U2] (U3) {};
    \draw node [draw,circle,fill=black] [right of=U3] (U4) {};
    \draw [->] (U1.north) .. controls ($ (U1.north) + (0,6) $) and ($ (U2.north) + (0,6) $) .. (U2.north);
    \draw [->] (U3.north) .. controls ($ (U3.north) + (0,6) $) and ($ (U4.north) + (0,6) $) .. (U4.north);
  \end{tikzpicture}
}
\newcommand{\LabeledPrecedenceLL}[5]{%
  \begin{tikzpicture}[scale=#1,inner sep=.1pt,node distance=.25cm,
      >=latex',text height=1.5ex,text depth=.25ex,line width=1pt,baseline={(0,0.05)}]
    \draw node [font=\tiny] (U1)               {#2};
    \draw node [font=\tiny] [right of=U1] (U2) {#3};
    \draw node [font=\tiny] [right of=U2] (U3) {#4};
    \draw node [font=\tiny] [right of=U3] (U4) {#5};
    \draw [<-] (U1.north) .. controls ($ (U1.north) + (0,6) $) and ($ (U2.north) + (0,6) $) .. (U2.north);
    \draw [<-] (U3.north) .. controls ($ (U3.north) + (0,6) $) and ($ (U4.north) + (0,6) $) .. (U4.north);
\end{tikzpicture}}
\newcommand{\LabeledPrecedenceLR}[5]{%
  \begin{tikzpicture}[scale=#1,inner sep=.1pt,node distance=.25cm,
      >=latex',text height=1.5ex,text depth=.25ex,line width=1pt,baseline={(0,0.05)}]
    \draw node [font=\tiny] (U1)               {#2};
    \draw node [font=\tiny] [right of=U1] (U2) {#3};
    \draw node [font=\tiny] [right of=U2] (U3) {#4};
    \draw node [font=\tiny] [right of=U3] (U4) {#5};
    \draw [<-] (U1.north) .. controls ($ (U1.north) + (0,6) $) and ($ (U2.north) + (0,6) $) .. (U2.north);
    \draw [->] (U3.north) .. controls ($ (U3.north) + (0,6) $) and ($ (U4.north) + (0,6) $) .. (U4.north);
\end{tikzpicture}}
\newcommand{\LabeledPrecedenceRL}[5]{%
  \begin{tikzpicture}[scale=#1,inner sep=.1pt,node distance=.25cm,
      >=latex',text height=1.5ex,text depth=.25ex,line width=1pt,baseline={(0,0.05)}]
    \draw node [font=\tiny] (U1)               {#2};
    \draw node [font=\tiny] [right of=U1] (U2) {#3};
    \draw node [font=\tiny] [right of=U2] (U3) {#4};
    \draw node [font=\tiny] [right of=U3] (U4) {#5};
    \draw [->] (U1.north) .. controls ($ (U1.north) + (0,6) $) and ($ (U2.north) + (0,6) $) .. (U2.north);
    \draw [<-] (U3.north) .. controls ($ (U3.north) + (0,6) $) and ($ (U4.north) + (0,6) $) .. (U4.north);
\end{tikzpicture}}
\newcommand{\LabeledPrecedenceRR}[5]{%
  \begin{tikzpicture}[scale=#1,inner sep=.1pt,node distance=.25cm,
      >=latex',text height=1.5ex,text depth=.25ex,line width=1pt,baseline={(0,0.05)}]
    \draw node [font=\tiny] (U1)               {#2};
    \draw node [font=\tiny] [right of=U1] (U2) {#3};
    \draw node [font=\tiny] [right of=U2] (U3) {#4};
    \draw node [font=\tiny] [right of=U3] (U4) {#5};
    \draw [->] (U1.north) .. controls ($ (U1.north) + (0,6) $) and ($ (U2.north) + (0,6) $) .. (U2.north);
    \draw [->] (U3.north) .. controls ($ (U3.north) + (0,6) $) and ($ (U4.north) + (0,6) $) .. (U4.north);
\end{tikzpicture}}
\begin{document}


\title{%
  Algorithmic and algebraic aspects \\
  of unshuffling permutations
}%
\tnotetext[t]{%
  This paper is an extended version of~\cite{GV16}.
}

\author{Samuele Giraudo}
\ead{samuele.giraudo@u-pem.fr}

\author{St\'ephane Vialette}
\ead{stephane.vialette@u-pem.fr}

\address{%
  Universit\'e Paris-Est, LIGM (UMR $8049$), CNRS,
  ENPC, ESIEE Paris, UPEM, F-$77454$, Marne-la-Vall\'ee, France
}


\begin{abstract}
  A permutation is said to be a square if it can be obtained by
  shuffling two order-isomorphic patterns. The definition is intended
  to be the natural counterpart to the ordinary shuffle of words and
  languages. In this paper, we tackle the problem of recognizing square
  permutations from both the point of view of algebra and algorithms.
  On the one hand, we present some algebraic and combinatorial
  properties of the shuffle product of permutations. We follow an
  unusual line consisting in defining the shuffle of permutations by
  means of an unshuffling operator, known as a coproduct. This
  strategy allows to obtain easy proofs for algebraic and combinatorial
  properties of our shuffle product. We besides exhibit a bijection
  between square $(213,231)$-avoiding permutations and square binary
  words. On the other hand, by using a pattern avoidance criterion on
  directed perfect matchings, we prove that recognizing square
  permutations is \NP-complete.
\end{abstract}

\maketitle

\tableofcontents

\section*{Introduction} \label{section:Introduction}
The {\em shuffle product}, denoted by $\shuffle$, is a well-known
operation on words first defined by Eilenberg and
Mac Lane~\cite{Eilenberg:MacLane:1953}. Given three words $u$, $v_1$,
and $v_2$, $u$ is said to be a \emph{shuffle} of $v_1$ and $v_2$ if it
can be formed by interleaving the letters from $v_1$ and $v_2$ in a way
that maintains the left-to-right ordering of the letters from each word.
Besides purely combinatorial questions, the shuffle product of words
naturally leads to the following computational problems:
\begin{enumerate}[label={\it (\roman*)},fullwidth]
\item \label{item:problem_1}
  Given two words $v_1$ and $v_2$, compute the set $v_1 \shuffle v_2$.
\item \label{item:problem_2}
  Given three words $u$, $v_1$, and $v_2$, decide if $u$ is a shuffle
  of $v_1$ and $v_2$.
\item \label{item:problem_3}
  Given words $u$, $v_1$, \dots, $v_k$, decide if $u$ is in
  $v_1 \shuffle \dots \shuffle v_k$.
\item \label{item:problem_4}
  Given a word $u$, decide if there is a word $v$ such that $u$ is
  in $v \shuffle v$.
\end{enumerate}
Even if these problems seem similar, they radically differ in terms of
time complexity. Let us now review some facts about these. In what
follows, $n$ denotes the size of $u$ and $m_i$ denotes the size of each
$v_i$. A solution to Problem~\ref{item:problem_1} can be computed in
\begin{equation}
  O\left((m_1 + m_2) \; \binom{m_1 + m_2}{m_1}\right)
\end{equation}
time~\cite{Spehner:TCS:1986}. An improvement and a generalization
of~Problem~\ref{item:problem_1} has been proposed
in~\cite{Allauzen:IGM:2000}, where it is proven that given words
$v_1$, \dots, $v_k$, the iterated shuffle
$v_1 \shuffle \dots \shuffle v_k$ can be computed in
\begin{equation}
  O\left(\binom{m_1 + \dots + m_k}{m_1, \dots, m_k}\right)
\end{equation}
time. Problem~\ref{item:problem_2} is in \P; it is indeed a
classical textbook exercise to design an efficient dynamic programming
algorithm solving it. It can be tested in $O\left(n^2 / \log(n)\right)$
time~\cite{Leeuwen:Nivat:IPL:1982}. To the best of our knowledge, the
first $O(n^2)$ time algorithm for this problem appeared
in~\cite{Mansfield:DAM:1983}. This algorithm can easily be extended to
check in polynomial-time whether a word is in the shuffle of any
fixed number of given words. Nevertheless, Problem~\ref{item:problem_3}
is \NP-complete~\cite{Mansfield:DAM:1983,Warmuth:Haussler:JCSS:1984}. This
remains true even if the ground alphabet has size
$3$~\cite{Warmuth:Haussler:JCSS:1984}. Of particular interest,
it is shown in~\cite{Warmuth:Haussler:JCSS:1984} that
Problem~\ref{item:problem_3} remains \NP-complete even if all the words $v_i$,
$i \in [k]$, are identical, thereby proving that, for two words $u$ and
$v$, it is \NP-complete to decide whether or not $u$ is in the iterated shuffle
of $v$. Again, this remains true even if the ground alphabet has size $3$.
Let us now finally focus on Problem~\ref{item:problem_4}. It is shown
in~\cite{Buss:Soltys:2014,Rizzi:Vialette:CSR:2013} that it is \NP-complete to
decide if a word $u$ is a \emph{square} (w.r.t. the shuffle), that is,
a word $u$ with the property that there exists a word $v$ such that $u$
is a shuffle of $v$ with itself. Hence, Problem~\ref{item:problem_4}
is \NP-complete.

This paper is intended to study a natural generalization of $\shuffle$,
denoted by $\SHUFFLE$, as a shuffle of permutations. Roughly speaking,
given three permutations $\pi$, $\sigma_1$, and $\sigma_2$, $\pi$ is
said to be a {\em shuffle} of $\sigma_1$ and $\sigma_2$ if $\pi$ (viewed
as a word) is a shuffle of two words that are order-isomorphic to
$\sigma_1$ and $\sigma_2$. This shuffle product was first introduced by
Vargas~\cite{Vargas:2014} under the name of {\em supershuffle}. Our
intention in this paper is to study this shuffle product of permutations
$\SHUFFLE$ both from a combinatorial and from a computational point of
view by focusing on {\em square} permutations, that are permutations
$\pi$ being in the shuffle of a permutation $\sigma$ with itself. Many
other shuffle products on permutations appear in the literature. For
instance, in~\cite{DHT:IJAC:2002}, the authors define the {\em convolution
  product} and the {\em shifted shuffle product}. For this last product,
$\pi$ is a shuffle of $\sigma_1$ and $\sigma_2$ if $\pi$ is in the shuffle,
as words, of $\sigma_1$ and the word obtained by incrementing all the
letters of $\sigma_2$ by the size of $\sigma_1$. It is a simple exercise
to prove that, given three permutations $\pi$, $\sigma_1$, and $\sigma_2$,
deciding if $\pi$ is in the shifted shuffle of $\sigma_1$ and $\sigma_2$
is in~\P.

This paper is organized as follows. In
Section~\ref{section:Shuffle product on permutations}, we provide a
precise definition of $\SHUFFLE$.
We shall define $\SHUFFLE$ in terms of what we call the unshuffling operator $\Delta$.
The operator $\Delta$ is in fact a coproduct,
endowing the linear span of all permutations with a coalgebra structure
(see~\cite{Joni:Rota:1979} or~\cite{Grinberg:Reiner:2014} for the
definition of these algebraic structures). By duality, the unshuffling
operator $\Delta$ leads to the definition of our shuffle operation on
permutations. This approach has many advantages. First, some
combinatorial properties of $\SHUFFLE$ depend on properties of $\Delta$
and those properties are easier
to prove on the coproduct side. Second, this approach allows us
to obtain a clear description of the multiplicities of the
elements appearing in the shuffle of two permutations,
which are of interest in their own right
from a combinatorial point of view.
Section~\ref{section:Binary square words and permutations} is devoted to
showing that the problems related to the shuffle of words has links with
the shuffle of permutations. In particular, we show that binary words
that are square are in one-to-one correspondence with square
permutations avoiding some patterns
(Proposition~\ref{prop:bijection_binary_to_permutations_squares}). Next,
Section~\ref{section:Algebraic issues} presents some algebraic and
combinatorial properties of~$\SHUFFLE$. We show that $\SHUFFLE$ is
associative and commutative
(Proposition~\ref{prop:shuffle_associative_commutative}), and that if a
permutation is a square, its mirror, complement, and inverse are also
squares (Proposition~\ref{prop:square_stability}). Finally,
Section~\ref{section:Algorithmic issues} presents the most important
result of this paper: the fact that deciding if a permutation is a
square is \NP-complete (Proposition~\ref{proposition:hardness}). This result is
obtained by exhibiting a reduction from the \NP-complete pattern involvement
problem~\cite{Bose:Buss:Lubiw:1998}.


\section{Notations and basic definitions} \label{section:Notations}


\subsection*{General notations}

If $S$ is a finite set, the cardinality of $S$ is denoted by $|S|$,
and if $P$ and $Q$ are two disjoint sets, $P \sqcup Q$ denotes the
disjoint union of $P$ and $Q$. For any nonnegative integer $n$, $[n]$
is the set $\{1, \dots, n\}$.


\subsection*{Words and permutations}

We follow the usual terminology on words~\cite{ChoffrutKarhumaki1997}.
Let us recall here the most important ones. Let $u$ be a word. The
length of $u$ (also called {\em size}) is denoted by $|u|$. The
{\em empty word}, the only word of null length, is denoted by $\epsilon$.
We denote by $\widetilde{u}$ the {\em mirror image} of $u$, that is the
word $u_{|u|} u_{|u| - 1} \dots u_1$. If $P$ is a subset of $[|u|]$,
$u_{|P}$ is the subword of $u$ consisting in the letters of $u$ at the
positions specified by the elements of $P$. If $u$ is a word of integers
and $k$ is an integer, we denote by $u[k]$ the word obtained by
incrementing by $k$ all letters of $u$. The {\em shuffle} of two words
$u$ and $v$ is the set recursively defined by
\begin{equation}
  u \shuffle \epsilon = \{u\} = \epsilon \shuffle u
\end{equation}
and
\begin{equation}
  ua \shuffle vb = (u \shuffle vb)a \; \cup \; (ua \shuffle v)b,
\end{equation}
were $a$ and $b$ are letters. For instance,
\begin{equation}
  01 \shuffle 20 = \{0120, 0210, 0201, 2010, 2001\}.
\end{equation}
A word $u$ is a {\em square} if there exists a word $v$ such that $u$
belongs to $v \shuffle v$. For example, $202101$ is a square since this
word belongs to the set $201 \shuffle 201$.

We denote by $S_n$ the set of permutations of size $n$ and by $S$ the
set of all permutations. In this paper, permutations of a size $n$ are
specified by words of length $n$ on the alphabet $[n]$ and without
multiple occurrences of a letter, so that all above definitions about
words remain valid on permutations. The only difference lies on the
fact that we shall denote by $\pi(i)$ (instead of $\pi_i$) the $i$-th
letter of any permutation $\pi$. For any nonnegative integer $n$, we
write $\nearrow_{n}$ (resp. $\searrow_{n}$) for the permutation
$1 2 \dots n$ (resp. $n\,(n-1) \dots 1$). If $\pi$ is a permutation of
$S_n$, we denote by $\bar \pi$ the {\em complement} of $\pi$, that is
the permutation satisfying $\bar \pi(i) = n - \pi(i) + 1$ for all
$i \in [n]$. The {\em inverse} of $\pi$ is denoted by $\pi^{-1}$.

If $u$ is a word of integers where no letter occurs more than once, we define the
\emph{standarization} of $u$, $\STD(u)$, to be the unique permutation of the same size
as $u$ such that for all $i, j \in [|u|]$, $u_i < u_j$ if and only if
$\STD(u)(i) < \STD(u)(j)$.
For instance,
\begin{equation}
  \STD(814637) = 613425.
\end{equation}
In particular, the image of the map $\STD$ is the set $S$ of all
permutations. Two words $u$ and $v$ having the same standarization are
{\em order-isomorphic}.
If $\sigma$ is a permutation, we say that $\sigma$ \emph{occurs} in $\pi$
if there is a set of indices $P$ of $[|\pi|]$ such that $\sigma$ and
$\pi_{|P}$ are order isomorphic.
When $\sigma$ does not occur in $\pi$, $\pi$ is said to {\em avoid} $\sigma$.
The set of permutations of size
$n$ avoiding $\sigma$ is denoted by $S_n(\sigma)$. The
{\em  pattern involvement problem} consists, given two permutations
$\pi$ and $\sigma$, in deciding if $\sigma$ occurs in $\pi$.
This problem is known to be \NP-complete~\cite{Bose:Buss:Lubiw:1998}.


\subsection*{Directed perfect matchings}

A \emph{directed graph} is an ordered pair $\mathcal{G} = (V, A)$ where
$V$ is a set whose elements are called \emph{vertices} and $A$ is a set
of ordered pairs of vertices, called \emph{arcs} (from a \emph{source}
vertex to a \emph{sink} vertex). In this paper, we shall exclusively use
$V \subset \NN$. Notice that the aforementioned definition does not allow
a directed graph to have multiple arcs with same source and target nodes.
We shall not allow directed loops (that is, arcs that connect vertices
with themselves). Two arcs are \emph{independent} if they do not have a
common vertex.
An arc $(i, i')$ \emph{contains} an arc $(j, j')$ if
$\min(i, i') < \min(j, j') < \max(j, j') < \max(i, i')$. If no
arc of $\mathcal{G}$ contains an other arc, we say that $\mathcal{G}$
is \emph{containment-free}.
Two arcs $(i, i')$ and $(j, j')$ are \emph{crossing}
if $\min(i, i') < \min(j, j') < \max(i, i') < \max(j, j')$.
If no arcs of $\mathcal{G}$ are crossing, we say that $\mathcal{G}$
is \emph{crossing-free}.
A directed graph is a
\emph{directed matching} if all its arcs are independent. A directed
matching is \emph{perfect} if every vertex is either a source or a sink.

For any permutation $\pi$ of an even size $2n$, a {\em directed perfect
matching on $\pi$} is a pair $\DMATCHING = (\mathcal{G}, \pi)$ where
$\mathcal{G}$ is a directed perfect matching on the set $[2n]$ of
vertices (see Figure~\ref{fig:permutation with directed perfect
matching}).
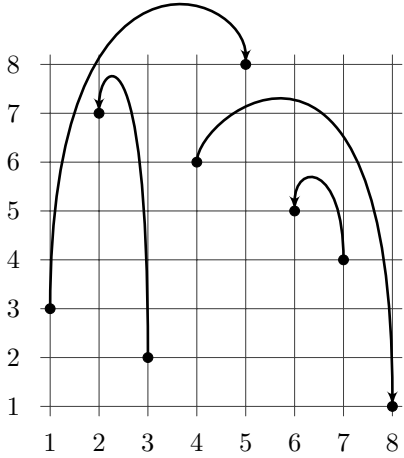
\begin{figure}[htbp!]
    \centering
    \begin{tikzpicture}[scale=.65,label/.style={anchor=base}]
        \draw[step=1cm,black!75,ultra thin,fill=black!50]
        (-0.2,-0.2) grid (7.2,7.2);
        \foreach \x/\y in {0/2,1/6,2/1,3/5,4/7,5/4,6/3,7/0} {
          \draw [fill=black] (\x,\y) circle (0.1);
        }
        \foreach \i in {1,...,8} {
          \node [align=center] at (\i-1,-0.75) {$\i$};
          \node [align=center] at (-0.75,\i-1) {$\i$};
        }
        \draw [black,line width=1pt,->,>=latex']
        (0,2) .. controls +(0,8) and +(0,1.5) .. (4,7);
        \draw [black,line width=1pt,<-,>=latex']
        (1,6) .. controls +(0,1) and +(0,7) .. (2,1);
        \draw [black,line width=1pt,->,>=latex']
        (3,5) .. controls +(0,1) and +(0,9) .. (7,0);
        \draw [black,line width=1pt,<-,>=latex']
        (5,4) .. controls +(0,1) and +(0,2) .. (6,3);
    \end{tikzpicture}
    \caption{%
    A directed perfect matching $\DMATCHING$ on the permutation
    \mbox{$\pi = 37268541$}, represented on the permutation matrix of
    $\pi$. The set of vertices of $\DMATCHING$ is $\{1, \dots, 8\}$ and
    the set of arcs of $\DMATCHING$ is $\{(1,5), (3,2), (4,8), (7,6)\}$.
    }
    \label{fig:permutation with directed perfect matching}
\end{figure}
The {\em word of sources} (resp. {\em word of sinks}) of
$\DMATCHING$ is the subword $\pi(i_1) \pi(i_2) \dots \pi(i_n)$ of $\pi$
where the indexes $i_1 < i_2 < \dots < i_n$ are the sources (resp.
sinks) of the arcs of $\DMATCHING$.
Figure~\ref{fig:example_directed_perfect_matching} shows an example for
these notions.
\begin{figure}[ht]
  \centering
  \begin{tikzpicture}
    [xscale=0.35,yscale=.3,inner sep=2pt,node distance=1.1cm]s
    \tikzstyle{Vertex}=[draw,shape=circle,minimum size=.8cm,
        circular drop shadow,inner sep=2pt,fill=white]
    \draw node[Vertex,fill=black!20](U01){$1$};
    \draw node[Vertex,right of=U01](U02){$2$};
    \draw node[Vertex,fill=black!20,right of=U02](U03){$3$};
    \draw node[Vertex,fill=black!20,right of=U03](U04){$4$};
    \draw node[Vertex,right of=U04](U05){$5$};
    \draw node[Vertex,right of=U05](U06){$6$};
    \draw node[Vertex,fill=black!20,right of=U06](U07){$7$};
    \draw node[Vertex,right of=U07](U08){$8$};
    \draw [thick,->,>=latex']
    (U01.north) .. controls ($ (U01.north) + (0,4.25) $)
    and ($ (U06.north) + (0,4.25) $) .. (U06.north);
    \draw [thick,->,>=latex']
    (U03.north) .. controls ($ (U03.north) + (0,4.25) $)
    and ($ (U08.north) + (0,4.25) $) .. (U08.north);
    \draw [thick,->,>=latex']
    (U04.north) .. controls ($ (U04.north) + (0,2.5) $)
    and ($ (U02.north) + (0,2.5) $) .. (U02.north);
    \draw [thick,->,>=latex']
    (U07.north) .. controls ($ (U07.north) + (0,2.5) $)
    and ($ (U05.north) + (0,2.5) $) .. (U05.north);
    \draw node[below of=U01]{$4$};
    \draw node[below of=U02]{$1$};
    \draw node[below of=U03]{$3$};
    \draw node[below of=U04]{$2$};
    \draw node[below of=U05]{$8$};
    \draw node[below of=U06]{$5$};
    \draw node[below of=U07]{$7$};
    \draw node[below of=U08]{$6$};
  \end{tikzpicture}
  \caption{A directed perfect matching $\DMATCHING$ on the permutation
    \mbox{$\pi = 41328576$}. The word of sources of $\DMATCHING$ is
    $\mathbf{4327}$ and its word of sinks is $1856$. Unlike in
    Figure~\ref{fig:permutation with directed perfect matching},
    $\DMATCHING$ is not drawn on the permutation matrix of~$\pi$.}
  \label{fig:example_directed_perfect_matching}
\end{figure}

We describe here two notions of patterns for directed perfect matchings
on permutations together with the notions of occurrences of patterns
accompanying them. Let $\pi$ be a permutation of size $2n$ and
$\DMATCHING = (\mathcal{G}, \pi)$ be a directed perfect matching on
$\pi$.
\begin{enumerate}[fullwidth]
    \item An {\em unlabeled pattern} is a directed perfect matching
    $\mathcal{U} = ([2k], A)$, where $k \leq n$. We say that
    $\DMATCHING$ contains an {\em unlabeled occurrence} of $\mathcal{U}$
    if there is an increasing map $\phi : [2k] \to [2n]$ ({\em i.e.},
    $i < j \in [2k]$ implies $\phi(i) < \phi(j)$) such that, if $(i, i')$
    is an arc of $\mathcal{U}$ then $(\phi(i), \phi(i'))$ is an arc of
    $\mathcal{G}$. Observe that this first notion of pattern occurrence
    does not depend on the permutation~$\pi$. In other words,
    $\DMATCHING$ contains an unlabeled occurrence of $\mathcal{U}$ if
    $\mathcal{G}$ contains a copy of $\mathcal{U}$ as a subgraph by
    changing some of its labels if necessary.

    \item A {\em labeled pattern} is a directed perfect matching
    $\mathcal{P} = (\mathcal{U}, \sigma)$ on a permutation $\sigma$
    of size $2k$. We say that $\DMATCHING$ contains a {\em labeled
    occurrence} of $\mathcal{P}$ if $\DMATCHING$ contains an unlabeled
    occurrence of the directed perfect matching $\mathcal{U} = ([2k], A)$
    such that $\STD(\pi(\phi(1))\pi(\phi(2))\ldots\pi(\phi(2k))) = \sigma$,
    where $\phi$ is a map defined as above. In other words, $\DMATCHING$
    contains a labeled occurrence of $\mathcal{P}$ if $\mathcal{G}$
    contains a copy of $\mathcal{U}$ as a subgraph and the word
    consisting in the letters of $\pi$ associated with each vertices of
    this copy in $\mathcal{G}$ is order-isomorphic to~$\sigma$.
\end{enumerate}
When $\DMATCHING$ does not contain any unlabeled occurrence (resp.
labeled occurrence) of an unlabeled pattern $\mathcal{U}$ (resp. labeled
pattern $\mathcal{P}$), we say that $\DMATCHING$ {\em avoids}
$\mathcal{U}$ (resp. $\mathcal{P}$). This definition naturally extends
to sets of patterns by setting that $\DMATCHING$ {\em avoids} the set
of unlabeled patterns (resp. labeled patterns)
$U = \{\mathcal{U}_1, \dots, \mathcal{U}_\ell\}$ (resp.
$P = \{\mathcal{P}_1, \dots, \mathcal{P}_\ell\}$) if $\DMATCHING$
avoids every $\mathcal{U}_i$ of~$U$ (resp. $\mathcal{P}_i$ of~$P$).

In this paper, we shall consider only patterns of size $4$. The set of
all unlabeled patterns of this size is
\begin{equation}
    \mathcal{P} =
    \mathcal{P}_{\mathrm{prec}}
    \cup
    \mathcal{P}_{\mathrm{cont}}
    \cup
    \mathcal{P}_{\mathrm{cros}},
\end{equation}
where
\begin{equation}
    \mathcal{P}_{\mathrm{prec}} =
    \left\{
    \begin{split}\PrecedenceLL\end{split},
    \begin{split}\PrecedenceLR\end{split},
    \begin{split}\PrecedenceRL\end{split},
    \begin{split}\PrecedenceRR\end{split}
    \right\},
\end{equation}
\begin{equation}
    \mathcal{P}_{\mathrm{cont}} =
    \left\{
    \begin{split}\InclusionLL\end{split},
    \begin{split}\InclusionLR\end{split},
    \begin{split}\InclusionRL\end{split},
    \begin{split}\InclusionRR\end{split}
    \right\},
\end{equation}
\begin{equation}
    \mathcal{P}_{\mathrm{cros}} =
    \left\{
    \begin{split}\CrossingLL\end{split},
    \begin{split}\CrossingLR\end{split},
    \begin{split}\CrossingRL\end{split},
    \begin{split}\CrossingRR\end{split}
    \right\}.
\end{equation}
In these drawings, the vertices of each pattern are implicitly indexed
from left to right by $1$ to $4$. Besides,  any labeled pattern
$\mathcal{P} = (\mathcal{U}, \sigma)$ is depicted by drawing
$\mathcal{U}$ and by labeling all its vertices $i$ by~$\sigma_i$.

To give
some examples of the previous notions, observe that a directed perfect
matching $\DMATCHING$ on a permutation contains an occurrence of the
unlabeled pattern $\InclusionRL$ if there are four vertices
$i_1 < i_2 < i_3 < i_4$ of $\DMATCHING$ such that $(i_1, i_4)$ and
$(i_3, i_2)$ are arcs of $\DMATCHING$. Moreover, $\DMATCHING$
is \emph{containment-free} (resp. \emph{crossing-free}) if it avoids all
patterns of $\mathcal{P}_{\mathrm{cont}}$ (resp.
$\mathcal{P}_{\mathrm{cros}}$).
For example, the directed perfect matching on
the permutation of Figure~\ref{fig:example_directed_perfect_matching}
\begin{itemize}
    \item contains exactly two unlabeled occurrences of the pattern
    $\InclusionRL$ corresponding to the arcs $(1, 6)$ and $(4, 2)$, or $(3, 8)$ and
    $(7, 5)$;
    \item contains exactly one unlabeled occurrence of $\CrossingLR$ corresponding to
    the arcs $(4, 2)$ and $(3, 8)$;
    \item avoids the unlabeled pattern~$\InclusionRR$.
\end{itemize}
The directed perfect matching on the permutation of
Figure~\ref{fig:permutation with directed perfect matching}
\begin{itemize}
    \item contains a labeled occurrence of the pattern
    $\LabeledInclusionRL{.24}{2}{3}{1}{4}$ corresponding to the arcs $(1,5)$ and
    $(3,2)$;
    \item contains a labeled occurrence of the pattern
    $\LabeledCrossingRR{.08}{2}{3}{4}{1}$ corresponding to the arcs $(1,5)$ and $(4,8)$;
    \item contains a labeled occurrence of the pattern
    $\LabeledPrecedenceRL{.08}{1}{4}{3}{2}$ corresponding to the arcs arcs $(1,5)$
    and~$(7,6)$;
    \item contains a labeled occurrence of the pattern
    $\LabeledPrecedenceLR{.08}{4}{2}{3}{1}$ corresponding to the arcs $(3,2)$
    and~$(4,8)$;
    \item contains a labeled occurrence of the pattern
    $\LabeledPrecedenceLL{.08}{4}{1}{3}{2}$ corresponding to the arcs $(3,2)$
    and~$(7,6)$;
    \item contains a labeled occurrence of the pattern
    $\LabeledInclusionRL{.24}{4}{3}{2}{1}$ corresponding to the arcs $(4,8)$
    and~$(7,6)$;
    \item avoids all other labeled patterns of size~$4$.
\end{itemize}


\section{Shuffle product on permutations}
\label{section:Shuffle product on permutations}

The main purpose of this section is to give a formal definition of the
shuffle product $\SHUFFLE$ on permutations. We shall define $\SHUFFLE$ by first defining a
co-product called the unshuffling operator $\Delta$ on permutations.
Then $\SHUFFLE$ is
defined to be the dual of $\Delta$. The reason that we define $\SHUFFLE$ in terms of $\Delta$ is
due to the fact that many properties of $\SHUFFLE$ depend on properties of $\Delta$ and
those properties are easier to prove on the co-product side.
We invite the reader unfamiliar with the
concepts of coproduct and duality to consult~\cite{Joni:Rota:1979}
or~\cite{Grinberg:Reiner:2014}.

Let us denote by $\QQ[S]$ the linear span of all permutations. We
define a linear coproduct $\Delta$ on $\QQ[S]$ in the following way. For
any permutation $\pi$, we set
\begin{equation} \label{equ:unshuffling_coproduct}
  \Delta(\pi) =
  \sum_{P_1 \sqcup P_2 = [|\pi|]}
  \STD\left(\pi_{|P_1}\right) \otimes \STD\left(\pi_{|P_2}\right).
\end{equation}
We call $\Delta$ the {\em unshuffling coproduct of permutations}. For
instance,
\begin{equation}
  \Delta(213) =
  \epsilon \otimes 213 + 2 \cdot 1 \otimes 12 +
  1 \otimes 21 + 2 \cdot 12 \otimes 1 + 21 \otimes 1
  + 213 \otimes \epsilon,
\end{equation}
\begin{equation}
  \Delta(1234) =
  \epsilon \otimes 1234 + 4 \cdot 1 \otimes 123 + 6 \cdot 12 \otimes 12 +
  4 \cdot 123 \otimes 1 + 1234 \otimes \epsilon,
\end{equation}
\begin{equation}\begin{split} \label{equ:example_unshuffling_coproduct}
    \Delta(1432) & =
    \epsilon \otimes 1432 + 3 \cdot {\bf 1 \otimes 132} + 1 \otimes 321 +
    3 \cdot 12 \otimes 21 \\ & + 3 \cdot 21 \otimes 12
    + 3 \cdot 132 \otimes 1 + 321 \otimes 1 + 1432 \otimes \epsilon.
\end{split}\end{equation}
Observe that the coefficient of the tensor $1 \otimes 132$ is $3$
in~\eqref{equ:example_unshuffling_coproduct} because there are exactly
three ways to extract from the permutation $1432$ two disjoint subwords which are,
respectively, order-isomorphic to the permutations $1$ and~$132$.

We can now define our shuffle product $\SHUFFLE$ as the product that corresponds to the
co-product $\Delta$ under duality.
From~\eqref{equ:unshuffling_coproduct}, for any permutation $\pi$, we
have
\begin{equation} \label{equ:schematic_coproduct}
  \Delta(\pi) =
  \sum_{\sigma, \nu \in S} \lambda_{\sigma, \nu}^\pi \;
  \sigma \otimes \nu,
\end{equation}
where the $\lambda_{\sigma, \nu}^\pi$ are nonnegative integers. By the
definition~\eqref{equ:unshuffling_coproduct} of $\Delta$, the
$\lambda_{\sigma, \nu}^\pi$ are equal to the number of different ways to
extract from $\pi$ two disjoint subwords respectively order-isomorphic
to $\sigma$ and $\nu$. Now, by definition of duality, the dual product
of $\Delta$, denoted by $\SHUFFLE$, is a linear binary product on
$\QQ[S]$. It satisfies, for any permutations $\sigma$ and~$\nu$,
\begin{equation}
  \sigma \SHUFFLE \nu =
  \sum_{\pi \in S}
  \lambda_{\sigma, \nu}^\pi \; \pi,
\end{equation}
where the coefficients $\lambda_{\sigma, \nu}^\pi$ are the ones
of~\eqref{equ:schematic_coproduct}. We call $\SHUFFLE$ the
{\em shuffle product of permutations}. For instance,
\begin{equation}\begin{split} \label{equ:example_shuffle_product}
    12 \SHUFFLE 21 & =
    1243 + 1324 + 2 \cdot 1342 + 2 \cdot 1423 + 3 \cdot {\bf 1432} +
    2134 + 2 \cdot 2314 \\
    & + 3 \cdot 2341 + 2413 + 2 \cdot 2431 + 2 \cdot 3124 + 3142 +
    3 \cdot 3214 + 2 \cdot 3241 \\
    & + 3421 + 3 \cdot 4123 + 2 \cdot 4132 + 2 \cdot 4213 + 4231 + 4312.
\end{split}\end{equation}
Observe that the coefficient $3$ of the permutation $1432$
in~\eqref{equ:example_shuffle_product} comes from the fact that the
coefficient of the tensor $12 \otimes 21$ is $3$
in~\eqref{equ:example_unshuffling_coproduct}.

Intuitively, the product $\SHUFFLE$ shuffles the values and the positions of the
letters of the permutations. One can observe that the empty permutation
$\epsilon$ is a unit for $\SHUFFLE$ and that this product is graded by
the sizes of the permutations ({\em i.e.}, the product of a permutation
of size $n$ with a permutation of size $m$ produces a sum of permutations
of size $n + m$).

We say that a permutation $\pi$ {\em appears} in the shuffle
$\sigma \SHUFFLE \nu$ of two permutations $\sigma$ and $\nu$ if the
coefficient $\lambda_{\sigma, \nu}^\pi$ defined above is different from
zero. In a more combinatorial way, this is equivalent to say that there
are two sets $P_1$ and $P_2$ of disjoints indexes of letters of $\pi$
satisfying $P_1 \sqcup P_2 = [|\pi|]$ such that the subword $\pi_{|P_1}$
is order-isomorphic to $\sigma$ and the subword $\pi_{|P_2}$ is
order-isomorphic to $\nu$.

A permutation $\pi$ is a {\em square} if there is a permutation
$\sigma$ such that $\pi$ appears in $\sigma \SHUFFLE \sigma$.
In this case, we say that $\sigma$ is a {\em square root} of $\pi$.
Equivalently, $\pi$ is a square with $\sigma$ as square root if and only
if in the expansion of $\Delta(\pi)$, there is a tensor
$\sigma \otimes \sigma$ with a nonzero coefficient. In a more
combinatorial way, this is equivalent to saying that there are two sets
$P_1$ and $P_2$ of disjoints indexes of letters of $\pi$ satisfying
$P_1 \sqcup P_2 = [|\pi|]$ such that the subwords $\pi_{|P_1}$ and
$\pi_{|P_2}$ are order-isomorphic. Computer experiments give us the
first numbers of square permutations with respects to their size, which
are, from size $0$ to $10$,
\begin{equation}
  1, 0, 2, 0, 20, 0, 504, 0, 21032, 0, 1293418.
\end{equation}
This sequence (and its subsequence obtained by removing the $0$'s) is for
the time being not listed in~\cite{Slo}. The first square permutations
are listed in Table~\ref{tab:first_square_permutations}.
\begin{table}[ht!]
  \centering
  \begin{tabular}{c|c|c}
    Size $0$ \, & Size $2$ & Size $4$ \\ \hline
    \multirow{2}{*}{\, $\epsilon$} &
    \multirow{2}{*}{\, $12$, $21$ \,} &
    $1234$, $1243$, $1423$, $1324$, $1342$, $4132$, $3124$, $3142$,
    $3412$, $4312$, \\
    & & $2134$, $2143$, $2413$, $4213$, $2314$, $2431$, $4231$,
    $3241$, $3421$, $4321$
  \end{tabular}
  \bigskip

  \caption{The square permutations of sizes $0$ to~$4$.}
  \label{tab:first_square_permutations}
\end{table}


\section{Binary square words and permutations}
\label{section:Binary square words and permutations}

In this section, we shall show that the square binary words are in
one-to-one correspondence with square permutations avoiding some
patterns. This property establishes a link between the shuffle of binary
words and our shuffle of permutations and allows us to obtain a new
description of square binary words.

Let $u$ be a binary word of length $n$ with $k$ occurrences of $0$.
We denote by $\BINTOPERM$ (\emph{Binary word To Permutation}) the map sending
any such word $u$ to the permutation obtained by replacing from left to
right each occurrence of $0$ in $u$ by $1$, $2$, \dots, $k$, and from
right to left each occurrence of $1$ in $u$ by $k + 1$, $k + 2$, \dots, $n$.
For instance,
\begin{equation}
  \BINTOPERM({\bf 1}00{\bf 1}0{\bf 1}{\bf 1}0{\bf 1}000) =
            {\bf C} 12 {\bf B} 3 {\bf A} {\bf 9} 4 {\bf 8} 567,
\end{equation}
where $\mathrm{A}$, $\mathrm{B}$, and $\mathrm{C}$ respectively stand
for $10$, $11$, and $12$. Observe that for any nonempty permutation
$\pi$ in the image of $\BINTOPERM$, there is exactly one binary word $u$
such that $\BINTOPERM(u0) = \BINTOPERM(u1) = \pi$. In support of this
observation, when $\pi$ has an even size, we denote by $\PERMTOBIN(\pi)$
(\emph{Permutation To Binary word}) the word $ua$ such that $|ua|_0$ and $|ua|_1$
are both even, where $a \in \{0, 1\}$. For instance,
\begin{equation}
    \PERMTOBIN({\bf 6}1{\bf 5}{\bf 4}23) = {\bf 1}0{\bf 1}{\bf 1}00
    \qquad \mbox{and} \qquad
    \PERMTOBIN(1{\bf 4}2{\bf 3}) = 0{\bf 1}0{\bf 1}.
\end{equation}

\begin{Proposition} \label{prop:bijection_binary_to_permutations_squares}
  For any $n \geq 0$, the map $\BINTOPERM$ restricted to the set of
  square binary words of length $2n$ is a bijection between this last
  set and the set of square permutations of size $2n$ avoiding the
  patterns $213$ and $231$.
\end{Proposition}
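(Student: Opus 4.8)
The plan is to establish that $\BINTOPERM$ is injective on all binary words, maps squares to squares while respecting size, has image exactly the $(213,231)$-avoiding permutations, and conversely reflects squares; putting these together yields the claimed bijection. First I would record that $\BINTOPERM$ is injective on the set of all binary words of a fixed length: given the image permutation $\pi$, the positions occupied by the values $1,\dots,k$ (the former $0$'s) are exactly the positions where $\pi$ takes a value $\le k$, and this already recovers $u$ up to knowing $k$; but $k$ is recoverable from $\pi$ as well, since among binary words of length $n$ the parameter $k = |u|_0$ is determined by which of the two preimages of $\pi$ under $\BINTOPERM(\,\cdot\,a)$ we are looking at — more simply, distinct binary words of the same length have distinct images because $0$'s get increasing labels and $1$'s get increasing labels, so the relative order of the letters in $\pi$ literally encodes the $0/1$-pattern of $u$. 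This gives injectivity on each graded piece, hence on square binary words of length $2n$.

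Next I would characterize the image. I claim $\pi$ lies in the image of $\BINTOPERM$ (restricted to length-$n$ words) if and only if $\pi \in S_n(213,231)$. For the forward direction: in $\BINTOPERM(u)$ the values $1,\dots,k$ appear in increasing order (left to right) and the values $k+1,\dots,n$ appear in decreasing order (left to right); any occurrence of $213$ or $231$ in $\pi$ would need its "$2$" and "$3$" — the two larger entries — either both among the small values (impossible, since those are increasing, so they can only form the pattern $12\cdot$, not $21\cdot$ as in $213$, wait) — here I would argue case by case on how the three positions of a putative $213$ or $231$ occurrence split between the increasing block of small values and the decreasing block of large values, and check that in each of the $2^3$ distribution cases the required pattern cannot occur, the key facts being that small values are globally to the "left in value" and increasing in position, large values are "right in value" and decreasing in position, and crucially every small value sits to the left of... no: actually I need the structural fact that $\BINTOPERM(u)$, as a permutation, decomposes so that reading positions left to right, the subsequence of small values is increasing and the subsequence of large values is decreasing — a permutation with this property for *some* threshold $k$ is precisely a $(213,231)$-avoiding permutation (these are the permutations that are "increasing then the complement", or more precisely whose plot is a union of an increasing run of low values and a decreasing run of high values, interleaved arbitrarily in position). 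For the converse inclusion one shows every $\pi \in S_n(213,231)$ arises: avoiding $213$ and $231$ means that for the position $i$ of the value $1$... actually the clean statement is that $\pi$ avoids $\{213,231\}$ iff $\pi$ avoids $213$ and $231$ iff every entry to the left of the maximum is smaller than every entry to the right is false — I would instead use the known characterization $S_n(213,231) = \{\pi : \pi = \alpha\,n\,\beta \text{ with } \alpha \text{ a prefix, and recursively }\ldots\}$, or simply: $\pi$ avoids both iff the set $\{i : \pi(i) \le k\}$-pattern is increasing and the complementary pattern is decreasing for $k = \#\{i \le$ position of some pivot$\}$ — and then define $u$ by $u_i = 0$ if $\pi(i)$ is "small", $u_i = 1$ if "large", and verify $\BINTOPERM(u) = \pi$.

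Then I would handle the square property. Using the characterization of "square" from the preceding section — $\pi$ is a square iff there is a partition $P_1 \sqcup P_2 = [|\pi|]$ with $\pi_{|P_1}$ and $\pi_{|P_2}$ order-isomorphic, and $u$ is a square iff there is a partition of positions into $Q_1 \sqcup Q_2$ with $u_{|Q_1} = u_{|Q_2}$ — I would show these transport along $\BINTOPERM$. If $u$ is a square binary word witnessed by $Q_1 \sqcup Q_2$ with $u_{|Q_1} = u_{|Q_2} =: v$, I take the same position sets in $\pi = \BINTOPERM(u)$ and check $\pi_{|Q_1}$ and $\pi_{|Q_2}$ are order-isomorphic: within $\pi$, the relative order of two positions $i < j$ is determined by whether $u_i u_j$ is $00$, $11$ (then $\pi(i)<\pi(j)$ resp. $\pi(i)>\pi(j)$), $01$ (then $\pi(i)<\pi(j)$), or $10$ (then $\pi(i)>\pi(j)$); since $u_{|Q_1}$ and $u_{|Q_2}$ are equal as words, the induced orders on $\pi_{|Q_1}$ and $\pi_{|Q_2}$ agree, so $\STD(\pi_{|Q_1}) = \STD(\pi_{|Q_2})$. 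Conversely, if $\pi = \BINTOPERM(u)$ is a square via $P_1 \sqcup P_2$, then because the $0/1$-type of a position is recoverable from $\pi$ as noted (position $i$ is a "$0$" iff $\pi(i) \le k$), and order-isomorphic subwords of $\pi$ must have matching $0/1$-type patterns (an ascent in position between two small values corresponds to $00$, etc.), one deduces $u_{|P_1} = u_{|P_2}$, so $u$ is a square binary word. Finally, $\BINTOPERM$ sends length-$2n$ words to size-$2n$ permutations, so the restriction is well-defined, and combining injectivity, the image characterization, and the two-way transport of squareness gives the bijection between square binary words of length $2n$ and square $(213,231)$-avoiding permutations of size $2n$.

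The main obstacle I anticipate is the image characterization — pinning down cleanly that "the small-value subsequence is increasing and the large-value subsequence is decreasing, for some threshold $k$" is *exactly* $(213,231)$-avoidance, and doing the converse construction (recovering $k$ and hence $u$ from an arbitrary $(213,231)$-avoider) without hand-waving; the square-transport steps are then essentially bookkeeping once the order-comparison rule "$\pi(i) < \pi(j) \iff$ ($u_i u_j \ne 11$ and not ($u_i u_j$ forces the other way)$)$" is stated precisely, since order-isomorphism of subwords is detected position-pair by position-pair.
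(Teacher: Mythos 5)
Your plan follows the same general route as the paper (characterize the image of $\BINTOPERM$ as the $(213,231)$-avoiders, then transport squareness in both directions), and your forward transport (equal binary subwords give order-isomorphic subwords of the permutation) is sound. But there is a genuine error at the start that propagates: $\BINTOPERM$ is \emph{not} injective on binary words of a fixed length. By its very definition $\BINTOPERM(u0)=\BINTOPERM(u1)$ for every $u$ (for instance $\BINTOPERM(00)=\BINTOPERM(01)=12$ and $\BINTOPERM(10)=\BINTOPERM(11)=21$), so the map is exactly $2$-to-$1$ onto its image, and your claim that ``the relative order of the letters in $\pi$ literally encodes the $0/1$-pattern of $u$'' is false: the threshold $k=|u|_0$ is not recoverable from $\pi$ alone. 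Injectivity of the restriction to \emph{square} binary words is still true, but it requires a parity argument you never make: a square binary word has an even number of $0$s and an even number of $1$s, and of the two preimages $u0$, $u1$ of a permutation of size $2n$ exactly one has both letter-counts even; this is precisely why the paper introduces the map $\PERMTOBIN$.

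The same missing ingredient breaks your backward transport. You assert that the $0/1$-type of a position is recoverable from $\pi$, and that order-isomorphic complementary subwords of $\pi$ ``must have matching $0/1$-type patterns''; the first assertion is the false injectivity again, and the second fails for an arbitrary preimage: $\pi=21$ is a square (take $P_1=\{1\}$, $P_2=\{2\}$) and $\BINTOPERM(10)=21$, yet $u=10$ gives $u_{|P_1}=1\neq 0=u_{|P_2}$ (consistently, $10$ is not a square word, while the even-count preimage $11$ is). To repair this step you must (a) work with the preimage having even letter-counts, i.e.\ $\PERMTOBIN(\pi)$, and (b) prove that for that preimage the parts $P_1,P_2$ contain the same number of ``small'' values: since the small values of $\pi$ increase and the large values decrease, the two counts $s_1,s_2$ can differ by at most one (otherwise some block of values in the common standardization would have to be simultaneously increasing and decreasing), and evenness of $s_1+s_2$ then forces $s_1=s_2$, after which the $0/1$-patterns of the two subwords agree. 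Without this parity device, both the injectivity and the surjectivity onto square $(213,231)$-avoiding permutations remain unproved, so the proposal has a real gap rather than mere bookkeeping left to do.
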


\begin{proof}
  [Proof of Proposition~\ref{prop:bijection_binary_to_permutations_squares}]
  The statement of the proposition is a consequence of the
  following claims implying that $\PERMTOBIN$ is the inverse
  map of $\BINTOPERM$ over the set of square binary words.

  \begin{Claim} \label{claim:binary_to_permutation_avoiding}
    The image of $\BINTOPERM$ is the set of all permutations
    avoiding $213$ and $231$.
  \end{Claim}
  \begin{proof}[Proof of Claim~\ref{claim:binary_to_permutation_avoiding}]
    Let us first show that the image of $\BINTOPERM$ contains only
    permutations avoiding $213$ and $231$. Let $u$ be a binary word,
    $\pi = \BINTOPERM(u)$, and $P_0$ (resp. $P_1$) be the set of the
    positions of the occurrences of $0$ (resp. $1$) in $u$. By
    definition of $\BINTOPERM$, from left to right, the subword
    $v = \pi_{|P_0}$ is increasing and the subword $w = \pi_{|P_1}$
    is decreasing, and all letters of $w$ are greater than those
    of~$v$. Now, assume that $123$ occurs in $\pi$.
    Then, since $v$ is increasing and $w$ is decreasing, there is an
    occurrence of $3$ (resp. $13$, $23$) in $v$ and a relative
    occurrence of $21$ (resp. $2$, $1$) in $w$. All these three cases
    contradict the fact that all letters of $w$ are greater than
    those of $v$. A similar argument shows that $\pi$ avoids~$231$
    as well.

    Finally, observe that any permutation $\pi$ avoiding $213$ and
    $231$ necessarily starts by the smallest possible letter or the
    greatest possible letter. This property is then true for the
    suffix of $\pi$ obtained by deleting its first letter,
    and so on for all of its suffixes. Thus, by
    replacing each letter $a$ of $\pi$ by $0$ (resp. $1$) if $a$ has the
    role of a smallest (resp. greatest) letter, one obtains a binary
    word $u$ such that $\BINTOPERM(u) = \pi$. Hence, all permutations
    avoiding $213$ and $231$ are in the image of~$\BINTOPERM$.
  \end{proof}

  \begin{Claim} \label{claim:square_binary_to_square_permutation}
    If $u$ is a square binary word, $\BINTOPERM(u)$ is a square
    permutation.
  \end{Claim}
  \begin{proof}
    [Proof of Claim~\ref{claim:square_binary_to_square_permutation}]
    Since $u$ is a square binary word, there is a binary word $v$
    such that $u \in v \shuffle v$. Then, there are two disjoint
    sets $P$ and $Q$ of positions of letters of $u$ such that
    $u_{|P} = v = u_{|Q}$. Now, by definition of $\BINTOPERM$, the
    words $\BINTOPERM(u)_{|P}$ and $\BINTOPERM(u)_{|Q}$ have the
    same standarization $\sigma$. Hence, and by definition of
    the shuffle product of permutations, $\BINTOPERM(u)$ appears in
    $\sigma \SHUFFLE \sigma$, showing that $\BINTOPERM(u)$ is a
    square permutation.
  \end{proof}

  \begin{Claim} \label{claim:square_permutation_to_square_binary}
    If $\pi$ is a square permutation avoiding $213$ and $231$,
    $\PERMTOBIN(\pi)$ is a square binary word.
  \end{Claim}
  \begin{proof}
    [Proof of Claim~\ref{claim:square_permutation_to_square_binary}]
    Let $\pi$ be a square permutation avoiding $213$ and $231$. By
    Claim~\ref{claim:binary_to_permutation_avoiding}, $\pi$ is in
    the image of $\BINTOPERM$ and hence, $u = \PERMTOBIN(\pi)$ is a
    well-defined binary word. Since $\pi$ is a square permutation,
    there are two disjoint sets $P_1$ and $P_2$ of indexes of letters
    of $\pi$ such that $\pi_{|P_1}$ and $\pi_{|P_2}$ are
    order-isomorphic. This implies, by the definitions of $\BINTOPERM$
    and $\PERMTOBIN$, that $u_{|P_1} = u_{|P_2}$, showing that $u$
    is a square binary word.
  \end{proof}

  This ends the proof of
  Proposition~\ref{prop:bijection_binary_to_permutations_squares}
\end{proof}

The number of square binary words is Sequence \OEIS{A191755} of~\cite{Slo}
beginning by
\begin{equation}
  1, 0, 2, 0, 6, 0, 22, 0, 82, 0, 320, 0, 1268, 0, 5102, 0, 020632.
\end{equation}
According to
Proposition~\ref{prop:bijection_binary_to_permutations_squares}, this
is also the sequence enumerating square permutations avoiding $213$
and~$231$.
Notice that it is conjectured in \cite{Henshall:Rampersad:Shallit:2011} that the
number of square binary words of length $2n$ is
$\binom{2n}{n} \frac{2^n}{n+1} - \binom{2n-1}{n+1}2^{n-1} + O(2^{n-2})$.


\section{Algebraic issues}
\label{section:Algebraic issues}

The aim of this section is to establish some of properties of the
shuffle product of permutations $\SHUFFLE$. It is worth to note that, as
we will see, algebraic properties of the unshuffling coproduct $\Delta$
of permutations defined in
Section~\ref{section:Shuffle product on permutations} lead to
combinatorial properties of $\SHUFFLE$.

\begin{Proposition} \label{prop:shuffle_associative_commutative}
  The shuffle product $\SHUFFLE$ of permutations is associative and
  commutative.
\end{Proposition}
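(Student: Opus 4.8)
The plan is to prove associativity and commutativity of $\SHUFFLE$ by passing through the dual statement for the coproduct $\Delta$: a binary product is commutative (resp. associative) if and only if its dual coproduct is cocommutative (resp. coassociative). So it suffices to show that $\Delta$ is cocommutative and coassociative, and these properties are transparent on the combinatorial description of $\Delta$ given in~\eqref{equ:unshuffling_coproduct}.

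For cocommutativity, I would argue directly from~\eqref{equ:unshuffling_coproduct}: the sum is taken over ordered pairs $(P_1, P_2)$ with $P_1 \sqcup P_2 = [|\pi|]$, and swapping the two tensor factors corresponds to swapping $P_1$ and $P_2$, which is an involution on the index set of the sum. Hence $\tau \circ \Delta = \Delta$, where $\tau$ is the flip $\tau(\sigma \otimes \nu) = \nu \otimes \sigma$; dualizing gives $\sigma \SHUFFLE \nu = \nu \SHUFFLE \sigma$. For coassociativity, I would compute both $(\Delta \otimes \mathrm{id}) \circ \Delta(\pi)$ and $(\mathrm{id} \otimes \Delta) \circ \Delta(\pi)$ and check that each equals
\begin{equation} \label{equ:triple_coproduct}
  \sum_{P_1 \sqcup P_2 \sqcup P_3 = [|\pi|]}
  \STD\left(\pi_{|P_1}\right) \otimes \STD\left(\pi_{|P_2}\right)
  \otimes \STD\left(\pi_{|P_3}\right).
\end{equation}
The key point here is that standardization is "transitive": if one first extracts $\pi_{|P_1 \sqcup P_2}$, standardizes it to obtain a permutation $\rho$, and then further splits $\rho$ along the images of $P_1$ and $P_2$, the two resulting standardized subwords coincide with $\STD(\pi_{|P_1})$ and $\STD(\pi_{|P_2})$, because relative order of letters is preserved under taking subwords and standardizing. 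Once~\eqref{equ:triple_coproduct} is established symmetrically for both iterated coproducts, coassociativity follows, and dualizing yields associativity of $\SHUFFLE$.

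The main obstacle, such as it is, is bookkeeping rather than conceptual: one must be careful that the multiplicities match, i.e.\ that the coefficient $\lambda^{\pi}_{\sigma_1, \sigma_2, \sigma_3}$ counting ordered set-partitions $(P_1, P_2, P_3)$ of $[|\pi|]$ with $\STD(\pi_{|P_i}) = \sigma_i$ is genuinely the same whether obtained by first grouping $P_1 \sqcup P_2$ or by first grouping $P_2 \sqcup P_3$. This reduces to the transitivity-of-standardization lemma mentioned above, which I would state and prove as a short auxiliary fact: for any word $u$ of distinct integers and any $P \subseteq [|u|]$, $\STD\!\left(u_{|P}\right) = \STD\!\left(\STD(u)_{|P}\right)$. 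Granting that, both associativity and commutativity are immediate, and — as advertised in the introduction — the proof is genuinely easier on the coproduct side, where there is no need to reason about interleavings of letters at all.
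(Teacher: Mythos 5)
Your proposal is correct and follows essentially the same route as the paper: dualize, prove cocommutativity by the involution $(P_1,P_2)\mapsto(P_2,P_1)$, and prove coassociativity by showing both iterated coproducts equal the sum over ordered triples $P_1\sqcup P_2\sqcup P_3=[|\pi|]$. The only difference is presentational — you isolate the fact $\STD\left(u_{|P}\right)=\STD\left(\STD(u)_{|P}\right)$ as an explicit auxiliary lemma, which the paper uses implicitly in its chain of equalities.
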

\begin{proof}
  [Proof of Proposition~\ref{prop:shuffle_associative_commutative}]
  To prove the associativity of $\SHUFFLE$, it is convenient to show
  that its dual coproduct $\Delta$ is coassociative, that is
  \begin{equation}
    (\Delta \otimes I) \Delta = (I \otimes \Delta) \Delta,
  \end{equation}
  where $I$ denotes the identity map. This strategy relies on the fact
  that a product is associative if and only if its dual coproduct is
  coassociative. For any permutation $\pi$, we have
  \begin{equation} \begin{split}
      \label{equ:shuffle_associative_commutative}
      (\Delta \otimes I) \Delta(\pi) & =
      (\Delta \otimes I)
      \sum_{P_1 \sqcup P_2 = [|\pi|]}
      \STD\left(\pi_{|P_1}\right) \otimes \STD\left(\pi_{|P_2}\right) \\
      & =
      \sum_{P_1 \sqcup P_2 = [|\pi|]}
      \Delta\left(\STD\left(\pi_{|P_1}\right)\right)
      \otimes I\left(\STD\left(\pi_{|P_2}\right)\right) \\
      & =
      \sum_{P_1 \sqcup P_2 = [|\pi|]} \;
      \sum_{Q_1 \sqcup Q_2 = [|P_1|]}
      \STD\left(\STD\left(\pi_{|P_1}\right)_{|Q_1}\right)
      \otimes
      \STD\left(\STD\left(\pi_{|P_1}\right)_{|Q_2}\right)
      \otimes \STD\left(\pi_{|P_2}\right) \\
      & =
      \sum_{P_1 \sqcup P_2 \sqcup P_3 = [|\pi|]}
      \STD\left(\pi_{|P_1}\right) \otimes
      \STD\left(\pi_{|P_2}\right) \otimes
      \STD\left(\pi_{|P_3}\right).
  \end{split} \end{equation}
  An analogous computation shows that $(I \otimes \Delta) \Delta(\pi)$
  is equal to the last member
  of~\eqref{equ:shuffle_associative_commutative}, whence the
  associativity of $\SHUFFLE$.

  Finally, to prove the commutativity of $\SHUFFLE$, we shall show
  that $\Delta$ is cocommutative, that is for any permutation $\pi$,
  if in the expansion of $\Delta(\pi)$ there is a tensor
  $\sigma \otimes \nu$ with a coefficient $\lambda$, there is in the
  same expansion the tensor $\nu \otimes \sigma$ with the same
  coefficient $\lambda$. Clearly, a product is commutative if and only
  if its dual coproduct is cocommutative. Now, from the
  definition~\eqref{equ:unshuffling_coproduct} of $\Delta$, one
  observes that if the pair $(P_1, P_2)$ of subsets of $[|\pi|]$
  contributes to the coefficient of
  $\STD\left(\pi_{|P_1}\right) \otimes \STD\left(\pi_{|P_2}\right)$,
  the pair $(P_2, P_1)$ contributes to the coefficient of
  $\STD\left(\pi_{|P_2}\right) \otimes \STD\left(\pi_{|P_1}\right)$.
  This shows that $\Delta$ is cocommutative and hence, that $\SHUFFLE$
  is commutative.
\end{proof}

Proposition~\ref{prop:shuffle_associative_commutative} shows that
$\QQ[S]$ under the unshuffling coproduct $\Delta$ is a co-associative co-commutative
coalgebra which implies, by duality, that $\QQ[S]$ under $\SHUFFLE$ is an associative
commutative algebra

\begin{Lemma} \label{lem:endomorphisms}
  The three linear maps
  \begin{equation}
    \phi_1, \phi_2, \phi_3 : \QQ[S] \to \QQ[S]
  \end{equation}
  linearly sending a permutation $\pi$ to, respectively,
  $\widetilde{\pi}$, $\bar \pi$, and $\pi^{-1}$ are endomorphisms of
  associative algebras.
\end{Lemma}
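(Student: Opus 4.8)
The plan is to prove that each of $\phi_1$, $\phi_2$, $\phi_3$ respects the product $\SHUFFLE$, i.e.\ that $\phi_i(\sigma \SHUFFLE \nu) = \phi_i(\sigma) \SHUFFLE \phi_i(\nu)$ for all permutations $\sigma$, $\nu$, and that $\phi_i(\epsilon) = \epsilon$ (the latter is immediate since $\epsilon$ is fixed by mirror, complement, and inverse). Following the paper's general philosophy, the cleanest route is again to dualise: a linear map $\phi$ is an algebra endomorphism of $(\QQ[S], \SHUFFLE)$ if and only if it is a \emph{coalgebra endomorphism} of $(\QQ[S], \Delta)$, that is, $\Delta \circ \phi = (\phi \otimes \phi) \circ \Delta$. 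Here I would use the fact that each $\phi_i$ is an involution, hence self-dual up to the identification of $\QQ[S]$ with its graded dual via the permutation basis, so checking the coproduct identity suffices. I would state this reduction as the first step, citing the duality references already invoked in the excerpt.

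Next I would verify $\Delta \circ \phi_i = (\phi_i \otimes \phi_i)\circ\Delta$ combinatorially for each $i$, using the description of $\lambda^\pi_{\sigma,\nu}$ as the number of ways to write $[|\pi|] = P_1 \sqcup P_2$ with $\STD(\pi_{|P_1}) = \sigma$ and $\STD(\pi_{|P_2}) = \nu$. For $\phi_1$ (mirror): the map $P \mapsto \{\,|\pi| + 1 - p : p \in P\,\}$ is a bijection on subsets of $[|\pi|]$ sending a splitting of $\pi$ to a splitting of $\widetilde{\pi}$, and it satisfies $\STD\bigl(\widetilde{\pi}_{|\,\widetilde{P}}\bigr) = \widetilde{\STD(\pi_{|P})}$ because reversing positions commutes with standardisation. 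Hence $\lambda^{\widetilde{\pi}}_{\widetilde{\sigma}, \widetilde{\nu}} = \lambda^{\pi}_{\sigma,\nu}$, which is exactly the claimed coproduct identity. For $\phi_2$ (complement): here the \emph{same} position-subsets $(P_1,P_2)$ work, because taking complements of values does not move letters, and $\STD(\bar\pi_{|P}) = \overline{\STD(\pi_{|P})}$ since $a \mapsto n - a + 1$ is order-reversing on each extracted subword's value set; so $\lambda^{\bar\pi}_{\bar\sigma,\bar\nu} = \lambda^\pi_{\sigma,\nu}$. For $\phi_3$ (inverse): the subtlety is that inversion swaps the roles of positions and values, so a position-subset $P$ of $\pi$ should be matched with the \emph{value}-subset $\pi(P) \subseteq [|\pi|]$, which is a position-subset of $\pi^{-1}$; one checks that $P_1 \sqcup P_2 = [|\pi|]$ iff $\pi(P_1) \sqcup \pi(P_2) = [|\pi|]$, and that $\STD\bigl((\pi^{-1})_{|\pi(P)}\bigr) = \bigl(\STD(\pi_{|P})\bigr)^{-1}$, giving $\lambda^{\pi^{-1}}_{\sigma^{-1}, \nu^{-1}} = \lambda^\pi_{\sigma,\nu}$.

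The main obstacle, and the only place requiring genuine care, is the inverse case: the identity $\STD\bigl((\pi^{-1})_{|\pi(P)}\bigr) = \bigl(\STD(\pi_{|P})\bigr)^{-1}$ is intuitively clear from "inverting a permutation transposes its diagram" but deserves an explicit verification. I would prove it by unwinding definitions: if $P = \{i_1 < \dots < i_k\}$ with values $\pi(i_1), \dots, \pi(i_k)$, then $\STD(\pi_{|P})$ is determined by the relative order of these values; the subword of $\pi^{-1}$ at the sorted positions $\pi(P)$ records, for each such value $v$, its rank among $P$ — which is precisely the inverse permutation of $\STD(\pi_{|P})$. Once this lemma-within-a-proof is in hand, the three coproduct identities are immediate, and the duality reduction closes the argument. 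I would also remark, as a sanity check with the earlier Proposition~\ref{prop:square_stability} on stability of squares, that these identities specialise at $\sigma = \nu$ to show mirror, complement, and inverse send square roots to square roots.
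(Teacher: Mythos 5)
Your proposal is correct and follows essentially the same route as the paper: dualising the algebra-morphism property to the coalgebra identity $\Delta\circ\phi_j=(\phi_j\otimes\phi_j)\circ\Delta$, then verifying it with the reversed-position bijection for the mirror, the same position sets for the complement, and the position-to-value bijection $P\mapsto\pi(P)$ together with $\STD\left(\pi^{-1}_{|\pi(P)}\right)=\STD\left(\pi_{|P}\right)^{-1}$ for the inverse. Your extra remarks (the involution/self-duality point and the explicit verification of the inverse identity) only make explicit what the paper leaves implicit.
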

\begin{proof}[Proof of Lemma~\ref{lem:endomorphisms}]
  To prove, for $j = 1, 2, 3$, that $\phi_j$ is a morphism of
  associative algebras, we have to prove that for all permutations
  $\sigma$ and $\nu$,
  \begin{equation}
    \phi_j(\sigma \SHUFFLE \nu) =
    \phi_j(\sigma) \SHUFFLE \phi_j(\nu).
  \end{equation}
  By duality, this is equivalent
  to showing that $\phi_j$ is a morphism of coalgebras, that is,
  \begin{equation}
    \Delta \phi_j = (\phi_j \otimes \phi_j) \Delta.
  \end{equation}
  In the sequel, $\pi$ is a permutation.

  If $P$ is a set of indexes of letters of $\pi$, we denote by
  $\widetilde{P}$ the set $\{|\pi| - i  + 1 : i \in P\}$. Now, since
  the operation $\widetilde{\,}$ defines a bijection on the set of the
  subsets of $[|\pi|]$, and since the standardization operation
  commutes with the mirror operation on words without multiple
  occurrence of a letter, we have
  \begin{equation} \begin{split}
      \Delta(\phi_1(\pi))
      & = \sum_{P_1 \sqcup P_2 = [|\pi|]}
      \STD\left(\phi_1(\pi)_{|P_1}\right)
      \otimes \STD\left(\phi_1(\pi)_{|P_2}\right) \\
      & = \sum_{P_1 \sqcup P_2 = [|\pi|]}
      \STD\left(\widetilde{\pi}_{|P_1}\right)
      \otimes \STD\left(\widetilde{\pi}_{|P_2}\right) \\
      & = \sum_{P_1 \sqcup P_2 = [|\pi|]}
      \STD\left(\widetilde{\pi}_{|\widetilde{P_1}}\right)
      \otimes \STD\left(\widetilde{\pi}_{|\widetilde{P_2}}\right) \\
      & = \sum_{P_1 \sqcup P_2 = [|\pi|]}
      \widetilde{\STD\left(\pi_{|P_1}\right)}
      \otimes \widetilde{\STD\left(\pi_{|P_2}\right)} \\
      & = \sum_{P_1 \sqcup P_2 = [|\pi|]}
      \phi_1\left(\STD\left(\pi_{|P_1}\right)\right)
      \otimes \phi_1\left(\STD\left(\pi_{|P_2}\right)\right) \\
      & = (\phi_1 \otimes \phi_1) \Delta(\pi).
  \end{split} \end{equation}
  This shows that $\phi_1$ is a morphism of coalgebras and hence, that
  $\phi_1$ is a morphism of associative algebras.

  Next, since by definition of the complementation operation on
  permutations, for any permutation $\tau$ and any indexes $i$ and $k$,
  we have $\tau(i) < \tau(k)$ if and only if $\bar \tau(i) > \bar \tau(k)$,
  we have
  \begin{equation} \begin{split}
      \Delta(\phi_2(\pi))
      & = \sum_{P_1 \sqcup P_2 = [|\pi|]}
      \STD\left(\phi_2(\pi)_{|P_1}\right)
      \otimes \STD\left(\phi_2(\pi)_{|P_2}\right) \\
      & = \sum_{P_1 \sqcup P_2 = [|\pi|]}
      \STD\left(\bar \pi_{|P_1}\right)
      \otimes \STD\left(\bar \pi_{|P_2}\right) \\
      & = \sum_{P_1 \sqcup P_2 = [|\pi|]}
      \phi_2\left(\STD\left(\pi_{|P_1}\right)\right)
      \otimes \phi_2\left(\STD\left(\pi_{|P_2}\right)\right) \\
      & = (\phi_2 \otimes \phi_2) \Delta(\pi).
  \end{split} \end{equation}
  This shows that $\phi_2$ is a morphism of coalgebras and hence, that
  $\phi_2$ is a morphism of associative algebras.

  Finally, for any permutation $\tau$, if $P$ is a set of indexes of
  letters of $\tau$, we denote by $P_\tau^{-1}$ the set
  $\{\tau(i) : i \in P\}$. Since the map sending a subset $P$ of
  $[|\pi|]$ to $P_\pi^{-1}$ is a bijection, and since
  $\STD\left(\pi_{|P}\right)^{-1} = \STD\left(\pi^{-1}_{|P_\pi^{-1}}\right)$,
  we have
  \begin{equation} \begin{split}
      \Delta(\phi_3(\pi))
      & = \sum_{P_1 \sqcup P_2 = [|\pi|]}
      \STD\left(\phi_3(\pi)_{|P_1}\right)
      \otimes \STD\left(\phi_3(\pi)_{|P_2}\right) \\
      & = \sum_{P_1 \sqcup P_2 = [|\pi|]}
      \STD\left(\pi^{-1}_{|P_1}\right)
      \otimes \STD\left(\pi^{-1}_{|P_2}\right) \\
      & = \sum_{P_1 \sqcup P_2 = [|\pi|]}
      \STD\left(\pi^{-1}_{|{P_1}_\pi^{-1}}\right)
      \otimes \STD\left(\pi^{-1}_{|{P_2}_\pi^{-1}}\right) \\
      & = \sum_{P_1 \sqcup P_2 = [|\pi|]}
      \STD\left(\pi_{|P_1}\right)^{-1}
      \otimes \STD\left(\pi_{|P_2}\right)^{-1} \\
      & = \sum_{P_1 \sqcup P_2 = [|\pi|]}
      \phi_3\left(\STD\left(\pi_{|P_1}\right)\right)
      \otimes \phi_3\left(\STD\left(\pi_{|P_2}\right)\right) \\
      & = (\phi_3 \otimes \phi_3) \Delta(\pi).
  \end{split} \end{equation}
  This shows that $\phi_3$ is a morphism of coalgebras and hence, that
  $\phi_3$ is a morphism of associative algebras.
\end{proof}

We now use the algebraic properties of $\SHUFFLE$ exhibited by
Lemma~\ref{lem:endomorphisms} to obtain combinatorial properties
of square permutations.

\begin{Proposition} \label{prop:square_stability}
  Let $\pi$ be a square permutation and $\sigma$ be a square root of
  $\pi$. Then,
  \begin{enumerate}[label={\it (\roman*)},fullwidth]
  \item \label{item:square_stability_1}
    the permutation $\widetilde{\pi}$ is a square and
    $\widetilde{\sigma}$ is one of its square roots;
  \item \label{item:square_stability_2}
    the permutation $\bar \pi$ is a square and $\bar \sigma$ is one of
    its square roots;
  \item \label{item:square_stability_3}
    the permutation $\pi^{-1}$ is a square and $\sigma^{-1}$ is one of
    its square roots.
  \end{enumerate}
\end{Proposition}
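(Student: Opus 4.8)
The plan is to deduce all three items from Lemma~\ref{lem:endomorphisms}, which states that each $\phi_j$ is an endomorphism of the associative algebra $(\QQ[S], \SHUFFLE)$, combined with the elementary observation that each $\phi_j$ permutes the basis of $\QQ[S]$: indeed, the mirror, complement, and inverse operations are involutions on $S$, hence bijections of $S$. The point is that an algebra endomorphism which merely permutes the basis transports nonzero structure constants to nonzero structure constants, and this is exactly what turns the algebraic identity of Lemma~\ref{lem:endomorphisms} into the combinatorial statement about square roots.

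First I would recall that, by definition, $\pi$ being a square with square root $\sigma$ means precisely that the coefficient $\lambda_{\sigma, \sigma}^\pi$ of $\pi$ in $\sigma \SHUFFLE \sigma$ is nonzero. For item~\ref{item:square_stability_1}, I apply the linear map $\phi_1$ to the expansion of $\sigma \SHUFFLE \sigma$ and obtain
\begin{equation}
  \phi_1(\sigma \SHUFFLE \sigma)
  = \sum_{\pi' \in S} \lambda_{\sigma, \sigma}^{\pi'} \; \widetilde{\pi'}.
\end{equation}
On the other hand, Lemma~\ref{lem:endomorphisms} gives $\phi_1(\sigma \SHUFFLE \sigma) = \phi_1(\sigma) \SHUFFLE \phi_1(\sigma) = \widetilde{\sigma} \SHUFFLE \widetilde{\sigma}$. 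Since $\pi' \mapsto \widetilde{\pi'}$ is a bijection of $S$, the two resulting expansions are reindexings of one another, so comparing the coefficient of the basis element $\widetilde{\pi}$ in each yields $\lambda_{\widetilde{\sigma}, \widetilde{\sigma}}^{\widetilde{\pi}} = \lambda_{\sigma, \sigma}^{\pi} \neq 0$. Hence $\widetilde{\pi}$ appears in $\widetilde{\sigma} \SHUFFLE \widetilde{\sigma}$, that is, $\widetilde{\pi}$ is a square and $\widetilde{\sigma}$ is one of its square roots.

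Items~\ref{item:square_stability_2} and~\ref{item:square_stability_3} follow by exactly the same argument, replacing $\phi_1$ by $\phi_2$ (resp. $\phi_3$) and using that complementation (resp. inversion) is a bijection of $S$, so that the coefficient of $\bar\pi$ in $\bar\sigma \SHUFFLE \bar\sigma$ (resp. of $\pi^{-1}$ in $\sigma^{-1} \SHUFFLE \sigma^{-1}$) equals $\lambda_{\sigma, \sigma}^{\pi} \neq 0$. I do not anticipate a genuine obstacle here; the only subtlety worth making explicit is the passage from the algebra identity of Lemma~\ref{lem:endomorphisms} to the nonvanishing of the relevant coefficient, which is precisely where the bijectivity of $\phi_j$ on the basis is used. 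Everything else is bookkeeping.
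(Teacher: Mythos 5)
Your proof is correct and takes essentially the same route as the paper: both deduce all three items by applying the algebra-morphism property of $\phi_j$ from Lemma~\ref{lem:endomorphisms} to $\sigma \SHUFFLE \sigma$ and reading off that $\phi_j(\pi)$ appears in $\phi_j(\sigma) \SHUFFLE \phi_j(\sigma)$. Your explicit remark that $\phi_j$ permutes the basis (so nonzero coefficients are preserved) just spells out what the paper leaves implicit.
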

\begin{proof}[Proof of Proposition~\ref{prop:square_stability}]
  All statements~\ref{item:square_stability_1},
  \ref{item:square_stability_2}, and~\ref{item:square_stability_3} are
  consequences of Lemma~\ref{lem:endomorphisms}. Indeed,
  since $\pi$ is a square permutation and $\sigma$ is a square root of
  $\pi$, by definition, $\pi$ appears in the product
  $\sigma \SHUFFLE \sigma$. Now, by Lemma~\ref{lem:endomorphisms},
  for any $j = 1, 2, 3$, since $\phi_j$ is a morphism of associative
  algebras from $\QQ[S]$ to $\QQ[S]$, $\phi_j$ commutes with the
  shuffle product of permutations $\SHUFFLE$. Hence, in particular,
  one has
  \begin{equation}
    \phi_j(\sigma \SHUFFLE \sigma) =
    \phi_j(\sigma) \SHUFFLE \phi_j(\sigma).
  \end{equation}
  Then, since $\pi$ appears in $\sigma \SHUFFLE \sigma$, $\phi_j(\pi)$
  appears in $\phi_j(\sigma \SHUFFLE \sigma)$ and appears also in
  $\phi_j(\sigma) \SHUFFLE \phi_j(\sigma)$. This shows that
  $\phi_j(\sigma)$ is a square root of $\phi_j(\pi)$ and
  implies~\ref{item:square_stability_1}, \ref{item:square_stability_2},
  and~\ref{item:square_stability_3}.
\end{proof}

Let us make an observation about Wilf-equivalence classes of permutations
restrained on square permutations. Recall that two permutations $\sigma$
and $\nu$ of the same size are {\em Wilf equivalent} if
$|S_n(\sigma)| = |S_n(\nu)|$ for all $n \geq 0$. The
well-known~\cite{Simion:Schmidt:EJC:1985} fact that there is a single
Wilf-equivalence class of permutations of size $3$ together with
Proposition~\ref{prop:square_stability} imply that $123$ and $321$ are
in the same Wilf-equivalence class of square permutations, and that
$132$, $213$, $231$, and $312$ are in the same Wilf-equivalence class of
square permutations. Computer experiments show us that there are two
Wilf-equivalence classes of square permutations of size $3$. Indeed, the
number of square permutations avoiding $123$ begins by
\begin{equation} \label{equ:sequence_square_123}
  1, 0, 2, 0, 12, 0, 118, 0, 1218, 0, 14272,
\end{equation}
while the number of square permutations avoiding $132$ begins by
\begin{equation} \label{equ:sequence_square_132}
  1, 0, 2, 0, 11, 0, 84, 0, 743, 0, 7108.
\end{equation}

Another consequence of Proposition~\ref{prop:square_stability}
is that its makes sense to enumerate the sets of square permutations
quotiented by the operations of mirror image, complement, and
inverse. The sequence enumerating these sets begins by
\begin{equation} \label{equ:sequence_square_classes}
  1, 0, 1, 0, 6, 0, 81, 0, 2774, 0, 162945.
\end{equation}

All Sequences~\eqref{equ:sequence_square_123}, \eqref{equ:sequence_square_132},
and~\eqref{equ:sequence_square_classes} (and their subsequences obtained
by removing the $0$s) are for the time being not listed in~\cite{Slo}.


\section{Algorithmic issues}
\label{section:Algorithmic issues}

This section is devoted to proving the \NP-hardness of recognizing square
permutations.
As in the case of words, we shall use a
linear graph framework where deciding whether a permutation is a square
reduces to computing some specific matching in the associated linear
graph~\cite{Rizzi:Vialette:CSR:2013,Buss:Soltys:2014}. We have, however,
to deal with directed graphs/perfect matchings satisfying some precise properties.
Let us first define two properties.

\begin{Definition}[Property $\mathbf{P_1}$]
  \label{definition:Property P_1}
  Let $\pi$ be a permutation. A directed perfect matching
  $\DMATCHING$ on $\pi$ is said to have property $\mathbf{P_1}$ if it
  avoids the following set of unlabeled patterns:
  \begin{equation} \label{equ:property_P_1}
    \mathcal{P}_1 =
    \left\{
    \begin{split}\InclusionLL\end{split},\;
    \begin{split}\InclusionLR\end{split},\;
    \begin{split}\InclusionRL\end{split},\;
    \begin{split}\InclusionRR\end{split},\;
    \begin{split}\CrossingLR\end{split},\;
    \begin{split}\CrossingRL\end{split}
    \right\}.
  \end{equation}
\end{Definition}
Observe that the unlabeled patterns of $\mathcal{P}_1$ are the four
of $\mathcal{P}_{\mathrm{cont}}$ and the two of $\mathcal{P}_{\mathrm{cros}}$
that have crossing edges in the opposite directions.

\begin{Definition}[Property $\mathbf{P_2}$]
  \label{definition:Property P_2}
  Let $\pi$ be a permutation. A directed perfect matching $\DMATCHING$
  on $\pi$ is said to have property $\mathbf{P_2}$ if, for any two
  distinct arcs $(i, i')$ and $(j, j')$ of $\DMATCHING$, we have
  $\pi(i) < \pi(j)$ if and only if $\pi(i') < \pi(j')$.
\end{Definition}

The rationale for introducing properties $\mathbf{P_1}$ and
$\mathbf{P_2}$ stems from the following lemma.

\begin{Lemma}
  \label{lemma:matching}
  Let $\pi$ be a permutation. The following statements are equivalent:
  \begin{enumerate}[label={\it (\roman*)},fullwidth]
  \item \label{item:matching_1}
    The permutation $\pi$ is a square.
  \item \label{item:matching_2}
    There exists a directed perfect matching $\DMATCHING$
    on $\pi$ satisfying properties~$\mathbf{P_1}$ and~$\mathbf{P_2}$.
  \end{enumerate}
\end{Lemma}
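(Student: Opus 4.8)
The plan is to prove the two implications separately, working throughout with the combinatorial reformulation of ``square'' recorded in Section~\ref{section:Shuffle product on permutations}: a permutation $\pi$ of size $2n$ is a square with square root $\sigma$ if and only if there is a partition $P_1 \sqcup P_2 = [2n]$ with $|P_1| = |P_2| = n$ such that $\STD(\pi_{|P_1}) = \STD(\pi_{|P_2}) = \sigma$. If $|\pi|$ is odd then neither statement~\ref{item:matching_1} nor~\ref{item:matching_2} can hold (a directed perfect matching forces even size, and so does being a square), so we may assume $|\pi| = 2n$. The natural bridge between the two notions is to let $P_1$ be the set of source positions and $P_2$ the set of sink positions of $\DMATCHING$, pairing the two sets through the arcs.

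For \ref{item:matching_1} $\Rightarrow$ \ref{item:matching_2}: given a square witnessed by $P_1 = \{p_1 < \dots < p_n\}$ and $P_2 = \{q_1 < \dots < q_n\}$, I would define $\DMATCHING$ by taking, for each $k \in [n]$, an arc with source $p_k$ and sink $q_k$. Since $P_1 \sqcup P_2 = [2n]$ this is a directed perfect matching on $\pi$. Property $\mathbf{P_2}$ is immediate: for two arcs $(p_i,q_i)$ and $(p_j,q_j)$ one has $\pi(p_i) < \pi(p_j) \iff \sigma(i) < \sigma(j) \iff \pi(q_i) < \pi(q_j)$. For $\mathbf{P_1}$ I would run through the six possible interleavings of the increasing pairs $p_i < p_j$ and $q_i < q_j$ and observe that the configuration of the two arcs is a precedence pattern in four of them, a $\CrossingRR$ pattern in the case $p_i < p_j < q_i < q_j$, and a $\CrossingLL$ pattern in the case $q_i < q_j < p_i < p_j$; in particular it is never a nesting pattern and never a crossing with opposite orientations, so it never lies in $\mathcal{P}_1$.

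For \ref{item:matching_2} $\Rightarrow$ \ref{item:matching_1}: let $\DMATCHING$ satisfy $\mathbf{P_1}$ and $\mathbf{P_2}$, write $P_1$ (resp. $P_2$) for the set of sources (resp. sinks), list them in increasing position order as $s_1 < \dots < s_n$ (resp. $t_1 < \dots < t_n$), and let $\mu$ be the permutation with $s_i$ matched to $t_{\mu(i)}$. The key step is to show that $\mathbf{P_1}$ forces $\mu = \mathrm{id}$: if $i < j$ but $\mu(i) > \mu(j)$, then the arcs $(s_i, t_{\mu(i)})$ and $(s_j, t_{\mu(j)})$ have $s_i < s_j$ and $t_{\mu(j)} < t_{\mu(i)}$, and a short case analysis over the six interleavings of these four vertices shows that the pair is always either nested or a crossing with opposite orientations — that is, always a pattern of $\mathcal{P}_1$ — contradicting $\mathbf{P_1}$. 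Once $\mu = \mathrm{id}$, i.e. $s_k$ is matched to $t_k$ for every $k$, property $\mathbf{P_2}$ says exactly that $\pi(s_i) < \pi(s_j) \iff \pi(t_i) < \pi(t_j)$ for all $i \neq j$, which is precisely $\STD(\pi_{|P_1}) = \STD(\pi_{|P_2})$; hence $\pi$ is a square, with this common standardization as a square root.

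The main obstacle is the pair of case analyses on the relative order of the four endpoints of two arcs, especially the backward direction, where one must check that \emph{every} inversion of the source/sink correspondence is caught by one of the six forbidden patterns of $\mathcal{P}_1$. The delicate bookkeeping is the arc orientations: one has to track which endpoint of each arc is the source, since the two patterns of $\mathcal{P}_{\mathrm{cros}}$ with parallel arrows ($\CrossingLL$ and $\CrossingRR$) are exactly the crossings that $\mathbf{P_1}$ tolerates, while those with opposing arrows are forbidden.
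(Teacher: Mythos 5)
Your proof is correct and follows essentially the same route as the paper: pair the $j$-th source with the $j$-th sink for the forward direction (checking that the only configurations produced are precedences and same-orientation crossings), and for the converse use Property~$\mathbf{P_1}$ to force the source-to-sink correspondence to be order-preserving, after which Property~$\mathbf{P_2}$ gives the order-isomorphism of the source and sink words. Your phrasing of the backward step as ``the matching permutation $\mu$ has no inversions'' is just a cleaner packaging of the paper's iterative argument starting from the largest index, so there is nothing substantive to flag.
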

\begin{proof}[Proof of Lemma~\ref{lemma:matching}]
  Assume that~\ref{item:matching_1} holds. Since $\pi$ is a square,
  $\pi$ has a square root, say~$\sigma$. Let $2n = |\pi|$ (and
  hence $|\sigma| = n$). Then, by definition, there exist two sets
  \begin{equation}
    I^1 = \left\{i^1_1 < i^1_2 < \dots < i^1_n\right\}
    \quad \mbox{and} \quad
    I^2 = \left\{i^2_1 < i^2_2 < \dots < i^2_n\right\}
  \end{equation}
  of disjoint indexes of letters of $\pi$ such that $\pi_{|I^1}$ and
  $\pi_{|I^2}$ are both order-isomorphic to $\sigma$. Let
  $\mathcal{G} = (V, E)$ be the directed graph such that $V = [2n]$
  and
  \begin{math}
    E = \left\{\left(i^1_j, i^2_j\right) : j \in [n]\right\}.
  \end{math}
  It is easily seen that $\DMATCHING = (\mathcal{G}, \pi)$ is a
  directed perfect matching since $I^1 \cap I^2 = \emptyset$ and
  $I^1 \cup I^2 = [2n]$. We first show that $\DMATCHING$ avoids the
  unlabeled patterns of $\mathcal{P}_{\mathrm{cont}}$. Indeed, suppose, aiming at a
  contradiction, that such an occurrence appears for, say, arcs
  $\left(i^1_j, i^2_j\right)$ and $\left(i^1_k, i^2_k\right)$ of
  $\DMATCHING$. Assuming without loss of generality $i^1_j < i^1_k$,
  we are left with the four configurations
  \begin{equation}
    \begin{split}
      \ConfigurationInclusionLL{.3}{$i^2_k$}{$i^2_j$}{$i^1_j$}{$i^1_k$}
    \end{split},
    \qquad
    \begin{split}
      \ConfigurationInclusionLR{.3}{$i^2_k$}{$i^1_j$}{$i^2_j$}{$i^1_k$}
    \end{split},
    \qquad
    \begin{split}
      \ConfigurationInclusionRL{.3}{$i^1_j$}{$i^2_k$}{$i^1_k$}{$i^2_j$}
    \end{split},
    \qquad
    \begin{split}
      \ConfigurationInclusionRR{.3}{$i^1_j$}{$i^1_k$}{$i^2_k$}{$i^2_j$}
    \end{split},
  \end{equation}
  where shadow nodes give the position in the permutation $\pi$.
  Then it follows that $i^2_j > i^2_k$. This is a contradiction since
  $i^1_j < i^1_k$ implies $j < k$, and hence, $i^2_j < i^2_k$.
  We now turn to proving that $\DMATCHING$ also avoids the unlabeled patterns
  $\CrossingLR$ and $\CrossingRL$.
  Indeed, suppose, aiming at a contradiction, that such
  an occurrence appears for, say, arcs
  $\left(i^1_j, i^2_j\right)$ and $\left(i^1_k, i^2_k\right)$ of
  $\DMATCHING$. Assuming without loss of generality $i^1_j < i^1_k$,
  we are left with the two configurations
  \begin{equation}
    \begin{split}
      \ConfigurationCrossingLR{.3}{$i^2_k$}{$i^1_j$}{$i^1_k$}{$i^2_j$}
    \end{split},
    \qquad
    \begin{split}
      \ConfigurationCrossingRL{.3}{$i^1_j$}{$i^2_k$}{$i^2_j$}{$i^1_k$}
    \end{split}.
  \end{equation}
  Then it follows that $i^2_j > i^2_k$. Again, this is a contradiction
  since $i^1_j < i^1_k$ implies $j < k$, and hence, $i^2_j < i^2_k$.
  Finally, for any two distinct arcs
  $\left(i^1_j, i^2_j\right)$ and $\left(i^1_k, i^2_k\right)$ of
  $\DMATCHING$, we have $\pi\left(i^1_j\right) < \pi\left(i^1_k\right)$
  if and only if $\pi\left(i^2_j\right) < \pi\left(i^2_k\right)$ since
  we are comparing in both cases two elements (at positions $j$ and
  $k$) in two patterns that are order-isomorphic to $\sigma$.
  Therefore, $\DMATCHING$ satisfies properties~$\mathbf{P_1}$ and $\mathbf{P_2}$,
  so that~\ref{item:matching_2} holds.

  Assume now that~\ref{item:matching_2} holds. Let
  \begin{equation}
    I^1 = \left\{i^1_1 < i^1_2 < \dots < i^1_n\right\}
    \quad \mbox{and} \quad
    I^2 = \left\{i^2_1 < i^2_2 < \dots < i^2_n\right\}
  \end{equation}
  such that $I^1$ is the set the sources of the arcs of $\DMATCHING$
  and $I^2$ is the set of the sinks of the arcs of $\DMATCHING$.
  Let us first show that, for every $j \in [n]$,
  $\left(i^1_j, i^2_j\right)$ is an arc of $\DMATCHING$. For that,
  we show that $\left(i^1_n, i^2_n\right)$ is an arc of $\DMATCHING$.
  Suppose, aiming at a contradiction that this is false. Then, there
  exist two vertices $i^2_p$ and $i^1_q$ of $\DMATCHING$ such
  that $\left(i^1_n, i^2_p\right)$ and $\left(i^1_q, i^2_n\right)$
  are arcs of $\DMATCHING$. Since $p < n$ and $q < n$, there is in
  $\DMATCHING$ one of the four configurations
  \begin{equation}
    \begin{split}
      \ConfigurationInclusionLR{.3}{$i^2_p$}{$i^1_q$}{$i^2_n$}{$i^1_n$}
    \end{split},
    \qquad
    \begin{split}
      \ConfigurationCrossingRL{.3}{$i^1_q$}{$i^2_p$}{$i^2_n$}{$i^1_n$}
    \end{split},
    \qquad
    \begin{split}
      \ConfigurationCrossingLR{.3}{$i^2_p$}{$i^1_q$}{$i^1_n$}{$i^2_n$}
    \end{split},
    \qquad
    \begin{split}
      \ConfigurationInclusionRL{.3}{$i^1_q$}{$i^2_p$}{$i^1_n$}{$i^2_n$}
    \end{split}.
  \end{equation}
  This is a contradiction since $\DMATCHING$ satisfies property~$\mathbf{P_1}$
  and hence avoids the unlabeled patterns $\InclusionLR$, $\CrossingRL$, $\CrossingLR$,
  and $\InclusionRL$. Therefore, $\left(i^1_n, i^2_n\right)$ is an
  arc of $\DMATCHING$. By iteratively applying the same reasoning,
  this also shows that all $\left(i^1_j, i^2_j\right)$, $j \in [n - 1]$,
  are arcs of $\DMATCHING$. Now, let
  $p^1$ be the word of sources and $p^2$ be the word of sinks of $\DMATCHING$.
  Clearly $p^1$ and $p^2$
  are disjoint in $\pi$ (since $\DMATCHING$ is a matching) and cover
  $\pi$ (since $\DMATCHING$ is perfect). Moreover, the fact that
  $\DMATCHING$ satisfies $\mathbf{P_2}$ implies immediately that
  $p^1$ and $p^2$ are order-isomorphic. Hence, this shows that $\pi$
  is a square, so that~\ref{item:matching_1} holds.
\end{proof}

Observe that, given a square permutation $\pi \in S_{2n}$ and a directed
perfect matching $\DMATCHING$ on $\pi$ satisfying properties~$\mathbf{P_1}$ and
$\mathbf{P_2}$, one can recover a square root of $\pi$ by considering
the standarization permutation of the word of sources (or, equivalently,
the word of sinks) of $\DMATCHING$.
Figure~\ref{fig:example containment-free matching} provides an
illustration of Lemma~\ref{lemma:matching} and of this observation.
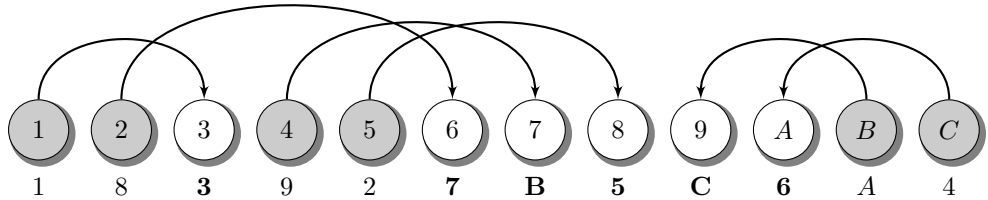
\begin{figure}[ht!]
  \centering
  \begin{tikzpicture}
    [xscale=0.35,yscale=.3,inner sep=2pt,node distance=1.1cm]
    \tikzstyle{Vertex}=[draw,shape=circle,minimum size=.8cm,
        circular drop shadow,inner sep=2pt]
    \draw node[Vertex,fill=black!20](U01){$1$};
    \draw node[Vertex,fill=black!20,right of=U01](U02){$2$};
    \draw node[Vertex,fill=white,right of=U02](U03){$3$};
    \draw node[Vertex,fill=black!20,right of=U03](U04){$4$};
    \draw node[Vertex,fill=black!20,right of=U04](U05){$5$};
    \draw node[Vertex,fill=white,right of=U05](U06){$6$};
    \draw node[Vertex,fill=white,right of=U06](U07){$7$};
    \draw node[Vertex,fill=white,right of=U07](U08){$8$};
    \draw node[Vertex,fill=white,right of=U08](U09){$9$};
    \draw node[Vertex,fill=white,right of=U09](U10){$A$};
    \draw node[Vertex,fill=black!20,right of=U10](U11){$B$};
    \draw node[Vertex,fill=black!20,right of=U11](U12){$C$};
    \draw [thick,->,>=latex']
    (U01.north) .. controls ($ (U01.north) + (0,3.5) $) and ($ (U03.north) + (0,3.5) $) .. (U03.north);
    \draw [thick,->,>=latex']
    (U02.north) .. controls ($ (U02.north) + (0,5.5) $) and ($ (U06.north) + (0,5.5) $) .. (U06.north);
    \draw [thick,->,>=latex']
    (U04.north) .. controls ($ (U04.north) + (0,4.5) $) and ($ (U07.north) + (0,4.5) $) .. (U07.north);
    \draw [thick,->,>=latex']
    (U05.north) .. controls ($ (U05.north) + (0,4.5) $) and ($ (U08.north) + (0,4.5) $) .. (U08.north);
    \draw [thick,->,>=latex']
    (U11.north) .. controls ($ (U11.north) + (0,3.5) $) and ($ (U09.north) + (0,3.5) $) .. (U09.north);
    \draw [thick,->,>=latex']
    (U12.north) .. controls ($ (U12.north) + (0,3.5) $) and ($ (U10.north) + (0,3.5) $) .. (U10.north);
    \draw node[below of=U01,node distance=.75cm]{$1$};
    \draw node[below of=U02,node distance=.75cm]{$8$};
    \draw node[below of=U03,node distance=.75cm]{$\mathbf{3}$};
    \draw node[below of=U04,node distance=.75cm]{$9$};
    \draw node[below of=U05,node distance=.75cm]{$2$};
    \draw node[below of=U06,node distance=.75cm]{$\mathbf{7}$};
    \draw node[below of=U07,node distance=.75cm]{$\mathbf{B}$};
    \draw node[below of=U08,node distance=.75cm]{$\mathbf{5}$};
    \draw node[below of=U09,node distance=.75cm]{$\mathbf{C}$};
    \draw node[below of=U10,node distance=.75cm]{$\mathbf{6}$};
    \draw node[below of=U11,node distance=.75cm]{$A$};
    \draw node[below of=U12,node distance=.75cm]{$4$};
  \end{tikzpicture}
  \caption{\label{fig:example containment-free matching}%
    A directed perfect matching $\DMATCHING$ on the permutation
    \mbox{$\pi = 183927B5C6A4$} satisfying the properties $\mathbf{P_1}$
    and $\mathbf{P_2}$. From $\DMATCHING$, it follows that $\pi$ is
    a square as it appears in the shuffle of
    ${1892A4}$ and
    $\mathbf{37B5C6}$, both being order-isomorphic to
    $145263$, a square root of~$\pi$.}
\end{figure}

Let $\pi$ be a permutation. For the sake of clarity, we will say that a
bunch of consecutive positions $P$ of $\pi$ is \emph{above} (resp.
\emph{below}) another bunch of consecutive positions $P'$ in $\pi$
if $\pi(i) > \pi(j)$ (resp. $\pi(i) < \pi(j)$) for every $i \in P$ and
every $j \in P'$. For example, $\sigma_1$ is above $\sigma_2$ (in an
equivalent manner, $\sigma_2$ is below $\sigma_1$) in
Figure~\ref{subfig:increasing before and above decreasing}, whereas
$\sigma_1$ is below $\sigma_2$ (in an equivalent manner, $\sigma_2$ is
above $\sigma_1$) in Figure~\ref{subfig:decreasing before and below
increasing}.

Moreover, if $\pi$ is a permutation satisfying
$\pi = \pi_1 \sigma_1 \pi_2 \sigma_2 \pi_3$ and $\DMATCHING$ is a
directed perfect matching on $\pi$, a {\em $(\sigma_1, \sigma_2)$-arc}
(resp. {\em $(\sigma_2, \sigma_1)$-arc}) of $\DMATCHING$ is any arc
$(i, j)$ (resp. $(j, i)$) of $\DMATCHING$ such that the $i$-th letter
of $\pi$ belongs to $\sigma_1$ and the $j$-th letter of $\pi$ belongs
to~$\sigma_2$.

Let us now state and prove some lemmas that will prove extremely useful
for simplifying the proof of upcoming
Proposition~\ref{proposition:hardness}.
First, whereas Lemma~\ref{lemma:matching} states that a directed perfect
matching on a permutation with
Property $\mathbf{P_1}$ avoids some unlabeled patterns of length $4$
(more specifically, it avoids the unlabeled patterns of $\mathcal{P}_1$),
the following two lemmas state that a directed perfect
matching on a permutation with Property
$\mathbf{P_2}$ also avoids some additional labeled patterns.
These two lemmas are easily proved by requiring Property~$\mathbf{P_2}$.
For example, an occurrence of the labeled pattern
$\LabeledCrossingRR{0.08}{3}{4}{2}{1}$ induces the existence of two arcs $(i_1, i_3)$ and
$(i_2, i_4)$ with $i_1 < i_2 < i_3 < i_4$ and
$\pi(i_4) < \pi(i_3) < \pi(i_2) < \pi(i_1)$.

\begin{figure}[htbp!]
  \centering
 \begin{tabular}{cccc}
    \begin{tikzpicture}[scale=0.165,label/.style={anchor=base}]
      \begin{scope}[]
        \draw[step=1cm,black!75,ultra thin,fill=black!50]
        (-0.2,-0.2) grid (8.2,8.2);
        \foreach \x/\y in {1/1,3/3,5/7,7/5} {
          \draw [fill=black] (\x,\y) circle (0.2);
        }
        \draw [black,line width=1pt,->,>=latex']
        (1,1) .. controls +(0,10) and +(0,4) .. (5,7);
        \draw [black,line width=1pt,->,>=latex']
        (3,3) .. controls +(0,8) and +(0,6) .. (7,5);
      \end{scope}
      \begin{scope}[xshift=10cm]
        \draw[step=1cm,black!75,ultra thin,fill=black!50]
        (-0.2,-0.2) grid (8.2,8.2);
        \foreach \x/\y in {1/1,3/3,5/7,7/5} {
          \draw [fill=black] (\x,\y) circle (0.2);
        }
        \draw [black,line width=1pt,<-,>=latex']
        (1,1) .. controls +(0,10) and +(0,4) .. (5,7);
        \draw [black,line width=1pt,<-,>=latex']
        (3,3) .. controls +(0,8) and +(0,6) .. (7,5);
      \end{scope}
      %
      \node [align=center] at (9,-2.5) {$1243$};
    \end{tikzpicture}
    &
    \begin{tikzpicture}[scale=0.165,label/.style={anchor=base}]
      \begin{scope}[]
        \draw[step=1cm,black!75,ultra thin,fill=black!50]
        (-0.2,-0.2) grid (8.2,8.2);
        \foreach \x/\y in {1/1,3/5,5/7,7/3} {
          \draw [fill=black] (\x,\y) circle (0.2);
        }
        \draw [black,line width=1pt,->,>=latex']
        (1,1) .. controls +(0,10) and +(0,4) .. (5,7);
        \draw [black,line width=1pt,->,>=latex']
        (3,5) .. controls +(0,4) and +(0,10) .. (7,3);
      \end{scope}
      \begin{scope}[xshift=10cm]
        \draw[step=1cm,black!75,ultra thin,fill=black!50]
        (-0.2,-0.2) grid (8.2,8.2);
        \foreach \x/\y in {1/1,3/5,5/7,7/3} {
          \draw [fill=black] (\x,\y) circle (0.2);
        }
        \draw [black,line width=1pt,<-,>=latex']
        (1,1) .. controls +(0,10) and +(0,4) .. (5,7);
        \draw [black,line width=1pt,<-,>=latex']
        (3,5) .. controls +(0,4) and +(0,10) .. (7,3);
      \end{scope}
      %
      \node [align=center] at (9,-2.5) {$1342$};
    \end{tikzpicture}
    &
        \begin{tikzpicture}[scale=0.165,label/.style={anchor=base}]
      \begin{scope}[]
        \draw[step=1cm,black!75,ultra thin,fill=black!50]
        (-0.2,-0.2) grid (8.2,8.2);
        \foreach \x/\y in {1/1,3/7,5/5,7/3} {
          \draw [fill=black] (\x,\y) circle (0.2);
        }
        \draw [black,line width=1pt,->,>=latex']
        (1,1) .. controls +(0,12) and +(0,6) .. (5,5);
        \draw [black,line width=1pt,->,>=latex']
        (3,7) .. controls +(0,4) and +(0,10) .. (7,3);
      \end{scope}
      \begin{scope}[xshift=10cm]
        \draw[step=1cm,black!75,ultra thin,fill=black!50]
        (-0.2,-0.2) grid (8.2,8.2);
        \foreach \x/\y in {1/1,3/7,5/5,7/3} {
          \draw [fill=black] (\x,\y) circle (0.2);
        }
        \draw [black,line width=1pt,<-,>=latex']
        (1,1) .. controls +(0,12) and +(0,6) .. (5,5);
        \draw [black,line width=1pt,<-,>=latex']
        (3,7) .. controls +(0,4) and +(0,10) .. (7,3);
      \end{scope}
      %
      \node [align=center] at (9,-2.5) {$1432$};
    \end{tikzpicture}
    &
        \begin{tikzpicture}[scale=0.165,label/.style={anchor=base}]
      \begin{scope}[]
        \draw[step=1cm,black!75,ultra thin,fill=black!50]
        (-0.2,-0.2) grid (8.2,8.2);
        \foreach \x/\y in {1/3,3/1,5/5,7/7} {
          \draw [fill=black] (\x,\y) circle (0.2);
        }
        \draw [black,line width=1pt,->,>=latex']
        (1,3) .. controls +(0,8) and +(0,6) .. (5,5);
        \draw [black,line width=1pt,->,>=latex']
        (3,1) .. controls +(0,10) and +(0,4) .. (7,7);
      \end{scope}
      \begin{scope}[xshift=10cm]
        \draw[step=1cm,black!75,ultra thin,fill=black!50]
        (-0.2,-0.2) grid (8.2,8.2);
        \foreach \x/\y in {1/3,3/1,5/5,7/7} {
          \draw [fill=black] (\x,\y) circle (0.2);
        }
        \draw [black,line width=1pt,<-,>=latex']
        (1,3) .. controls +(0,8) and +(0,6) .. (5,5);
        \draw [black,line width=1pt,<-,>=latex']
        (3,1) .. controls +(0,10) and +(0,4) .. (7,7);
      \end{scope}
      %
      \node [align=center] at (9,-2.5) {$2134$};
    \end{tikzpicture} \\
    \begin{tikzpicture}[scale=0.165,label/.style={anchor=base}]
      \begin{scope}[]
        \draw[step=1cm,black!75,ultra thin,fill=black!50]
        (-0.2,-0.2) grid (8.2,8.2);
        \foreach \x/\y in {1/3,3/5,5/7,7/1} {
          \draw [fill=black] (\x,\y) circle (0.2);
        }
        \draw [black,line width=1pt,->,>=latex']
        (1,3) .. controls +(0,10) and +(0,4) .. (5,7);
        \draw [black,line width=1pt,->,>=latex']
        (3,5) .. controls +(0,6) and +(0,12) .. (7,1);
      \end{scope}
      \begin{scope}[xshift=10cm]
        \draw[step=1cm,black!75,ultra thin,fill=black!50]
        (-0.2,-0.2) grid (8.2,8.2);
        \foreach \x/\y in {1/3,3/5,5/7,7/1} {
          \draw [fill=black] (\x,\y) circle (0.2);
        }
        \draw [black,line width=1pt,<-,>=latex']
        (1,3) .. controls +(0,10) and +(0,4) .. (5,7);
        \draw [black,line width=1pt,<-,>=latex']
        (3,5) .. controls +(0,6) and +(0,12) .. (7,1);
      \end{scope}
      %
      \node [align=center] at (9,-2.5) {$2341$};
    \end{tikzpicture}
    &
        \begin{tikzpicture}[scale=0.165,label/.style={anchor=base}]
      \begin{scope}[]
        \draw[step=1cm,black!75,ultra thin,fill=black!50]
        (-0.2,-0.2) grid (8.2,8.2);
        \foreach \x/\y in {1/3,3/7,5/5,7/1} {
          \draw [fill=black] (\x,\y) circle (0.2);
        }
        \draw [black,line width=1pt,->,>=latex']
        (1,3) .. controls +(0,10) and +(0,4) .. (5,5);
        \draw [black,line width=1pt,->,>=latex']
        (3,7) .. controls +(0,4) and +(0,10) .. (7,1);
      \end{scope}
      \begin{scope}[xshift=10cm]
        \draw[step=1cm,black!75,ultra thin,fill=black!50]
        (-0.2,-0.2) grid (8.2,8.2);
        \foreach \x/\y in {1/3,3/7,5/5,7/1} {
          \draw [fill=black] (\x,\y) circle (0.2);
        }
        \draw [black,line width=1pt,<-,>=latex']
        (1,3) .. controls +(0,10) and +(0,4) .. (5,5);
        \draw [black,line width=1pt,<-,>=latex']
        (3,7) .. controls +(0,4) and +(0,10) .. (7,1);
      \end{scope}
      %
      \node [align=center] at (9,-2.5) {$2431$};
    \end{tikzpicture}
    &
        \begin{tikzpicture}[scale=0.165,label/.style={anchor=base}]
      \begin{scope}[]
        \draw[step=1cm,black!75,ultra thin,fill=black!50]
        (-0.2,-0.2) grid (8.2,8.2);
        \foreach \x/\y in {1/5,3/1,5/3,7/7} {
          \draw [fill=black] (\x,\y) circle (0.2);
        }
        \draw [black,line width=1pt,->,>=latex']
        (1,5) .. controls +(0,4) and +(0,10) .. (5,3);
        \draw [black,line width=1pt,->,>=latex']
        (3,1) .. controls +(0,10) and +(0,4) .. (7,7);
      \end{scope}
      \begin{scope}[xshift=10cm]
        \draw[step=1cm,black!75,ultra thin,fill=black!50]
        (-0.2,-0.2) grid (8.2,8.2);
        \foreach \x/\y in {1/5,3/1,5/3,7/7} {
          \draw [fill=black] (\x,\y) circle (0.2);
        }
        \draw [black,line width=1pt,<-,>=latex']
        (1,5) .. controls +(0,4) and +(0,10) .. (5,3);
        \draw [black,line width=1pt,<-,>=latex']
        (3,1) .. controls +(0,10) and +(0,4) .. (7,7);
      \end{scope}
      %
      \node [align=center] at (9,-2.5) {$3124$};
    \end{tikzpicture}
    &
        \begin{tikzpicture}[scale=0.165,label/.style={anchor=base}]
      \begin{scope}[]
        \draw[step=1cm,black!75,ultra thin,fill=black!50]
        (-0.2,-0.2) grid (8.2,8.2);
        \foreach \x/\y in {1/5,3/3,5/1,7/7} {
          \draw [fill=black] (\x,\y) circle (0.2);
        }
        \draw [black,line width=1pt,->,>=latex']
        (1,5) .. controls +(0,4) and +(0,10) .. (5,1);
        \draw [black,line width=1pt,->,>=latex']
        (3,3) .. controls +(0,10) and +(0,4) .. (7,7);
      \end{scope}
      \begin{scope}[xshift=10cm]
        \draw[step=1cm,black!75,ultra thin,fill=black!50]
        (-0.2,-0.2) grid (8.2,8.2);
        \foreach \x/\y in {1/5,3/3,5/1,7/7} {
          \draw [fill=black] (\x,\y) circle (0.2);
        }
        \draw [black,line width=1pt,<-,>=latex']
        (1,5) .. controls +(0,4) and +(0,10) .. (5,1);
        \draw [black,line width=1pt,<-,>=latex']
        (3,3) .. controls +(0,10) and +(0,4) .. (7,7);
      \end{scope}
      %
      \node [align=center] at (9,-2.5) {$3214$};
    \end{tikzpicture} \\
        \begin{tikzpicture}[scale=0.165,label/.style={anchor=base}]
      \begin{scope}[]
        \draw[step=1cm,black!75,ultra thin,fill=black!50]
        (-0.2,-0.2) grid (8.2,8.2);
        \foreach \x/\y in {1/5,3/7,5/3,7/1} {
          \draw [fill=black] (\x,\y) circle (0.2);
        }
        \draw [black,line width=1pt,->,>=latex']
        (1,5) .. controls +(0,6) and +(0,8) .. (5,3);
        \draw [black,line width=1pt,->,>=latex']
        (3,7) .. controls +(0,4) and +(0,10) .. (7,1);
      \end{scope}
      \begin{scope}[xshift=10cm]
        \draw[step=1cm,black!75,ultra thin,fill=black!50]
        (-0.2,-0.2) grid (8.2,8.2);
        \foreach \x/\y in {1/5,3/7,5/3,7/1} {
          \draw [fill=black] (\x,\y) circle (0.2);
        }
        \draw [black,line width=1pt,<-,>=latex']
        (1,5) .. controls +(0,6) and +(0,8) .. (5,3);
        \draw [black,line width=1pt,<-,>=latex']
        (3,7) .. controls +(0,4) and +(0,10) .. (7,1);
      \end{scope}
      %
      \node [align=center] at (9,-2.5) {$3421$};
    \end{tikzpicture}
    &
        \begin{tikzpicture}[scale=0.165,label/.style={anchor=base}]
      \begin{scope}[]
        \draw[step=1cm,black!75,ultra thin,fill=black!50]
        (-0.2,-0.2) grid (8.2,8.2);
        \foreach \x/\y in {1/7,3/1,5/3,7/5} {
          \draw [fill=black] (\x,\y) circle (0.2);
        }
        \draw [black,line width=1pt,->,>=latex']
        (1,7) .. controls +(0,4) and +(0,10) .. (5,3);
        \draw [black,line width=1pt,->,>=latex']
        (3,1) .. controls +(0,10) and +(0,4) .. (7,5);
      \end{scope}
      \begin{scope}[xshift=10cm]
        \draw[step=1cm,black!75,ultra thin,fill=black!50]
        (-0.2,-0.2) grid (8.2,8.2);
        \foreach \x/\y in {1/7,3/1,5/3,7/5} {
          \draw [fill=black] (\x,\y) circle (0.2);
        }
        \draw [black,line width=1pt,<-,>=latex']
        (1,7) .. controls +(0,4) and +(0,10) .. (5,3);
        \draw [black,line width=1pt,<-,>=latex']
        (3,1) .. controls +(0,10) and +(0,4) .. (7,5);
      \end{scope}
      %
      \node [align=center] at (9,-2.5) {$4123$};
    \end{tikzpicture}
    &
        \begin{tikzpicture}[scale=0.165,label/.style={anchor=base}]
      \begin{scope}[]
        \draw[step=1cm,black!75,ultra thin,fill=black!50]
        (-0.2,-0.2) grid (8.2,8.2);
        \foreach \x/\y in {1/7,3/3,5/1,7/5} {
          \draw [fill=black] (\x,\y) circle (0.2);
        }
        \draw [black,line width=1pt,->,>=latex']
        (1,7) .. controls +(0,4) and +(0,10) .. (5,1);
        \draw [black,line width=1pt,->,>=latex']
        (3,3) .. controls +(0,10) and +(0,4) .. (7,5);
      \end{scope}
      \begin{scope}[xshift=10cm]
        \draw[step=1cm,black!75,ultra thin,fill=black!50]
        (-0.2,-0.2) grid (8.2,8.2);
        \foreach \x/\y in {1/7,3/3,5/1,7/5} {
          \draw [fill=black] (\x,\y) circle (0.2);
        }
        \draw [black,line width=1pt,<-,>=latex']
        (1,7) .. controls +(0,4) and +(0,10) .. (5,1);
        \draw [black,line width=1pt,<-,>=latex']
        (3,3) .. controls +(0,10) and +(0,4) .. (7,5);
      \end{scope}
      %
      \node [align=center] at (9,-2.5) {$4213$};
    \end{tikzpicture}
    &
        \begin{tikzpicture}[scale=0.165,label/.style={anchor=base}]
      \begin{scope}[]
        \draw[step=1cm,black!75,ultra thin,fill=black!50]
        (-0.2,-0.2) grid (8.2,8.2);
        \foreach \x/\y in {1/7,3/5,5/1,7/3} {
          \draw [fill=black] (\x,\y) circle (0.2);
        }
        \draw [black,line width=1pt,->,>=latex']
        (1,7) .. controls +(0,4) and +(0,10) .. (5,1);
        \draw [black,line width=1pt,->,>=latex']
        (3,5) .. controls +(0,6) and +(0,8) .. (7,3);
      \end{scope}
      \begin{scope}[xshift=10cm]
        \draw[step=1cm,black!75,ultra thin,fill=black!50]
        (-0.2,-0.2) grid (8.2,8.2);
        \foreach \x/\y in {1/7,3/5,5/1,7/3} {
          \draw [fill=black] (\x,\y) circle (0.2);
        }
        \draw [black,line width=1pt,<-,>=latex']
        (1,7) .. controls +(0,4) and +(0,10) .. (5,1);
        \draw [black,line width=1pt,<-,>=latex']
        (3,5) .. controls +(0,6) and +(0,8) .. (7,3);
      \end{scope}
      %
      \node [align=center] at (9,-2.5) {$4312$};
    \end{tikzpicture}
  \end{tabular}
  \caption{\label{fig:forbidden patterns - crossing}%
  The labeled patterns with crossing edges avoided by any directed
  perfect matching on a permutation
  satisfying Property~$\mathbf{P_2}$.}
\end{figure}
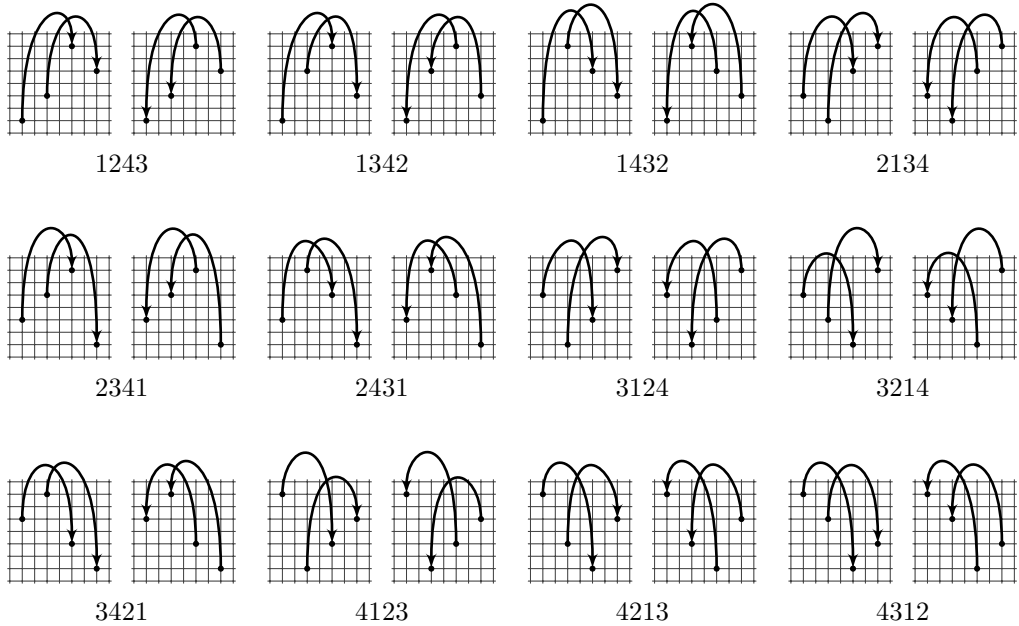


\begin{Lemma}[Forbidden crossing patterns]
  \label{lemma:Forbidden patterns crossing}
  Let $\pi$ be a permutation and $\DMATCHING$ be a directed perfect
  matching on $\pi$ satisfying Property~$\mathbf{P_2}$.
  Then $\DMATCHING$ avoids the following labeled patterns
  \begin{equation}
    \label{equ:Forbidden crossing patterns}
    \begin{array}{cccccccccc}
      \LabeledCrossingRR{0.1}{1}{2}{4}{3}\;,\;\;& \LabeledCrossingLL{0.1}{1}{2}{4}{3}\;,\;\;&
      \LabeledCrossingRR{0.1}{1}{3}{4}{2}\;,\;\;& \LabeledCrossingLL{0.1}{1}{3}{4}{2}\;,\;\;&
      \LabeledCrossingRR{0.1}{1}{4}{3}{2}\;,\;\;& \LabeledCrossingLL{0.1}{1}{4}{3}{2}\;,\;\;&
      \LabeledCrossingRR{0.1}{2}{1}{3}{4}\;,\;\;& \LabeledCrossingLL{0.1}{2}{1}{3}{4}\;,\;\; \\
      \LabeledCrossingRR{0.1}{2}{3}{4}{1}\;,\;\;& \LabeledCrossingLL{0.1}{2}{3}{4}{1}\;,\;\;&
      \LabeledCrossingRR{0.1}{2}{4}{3}{1}\;,\;\;& \LabeledCrossingLL{0.1}{2}{4}{3}{1}\;,\;\;&
      \LabeledCrossingRR{0.1}{3}{1}{2}{4}\;,\;\;& \LabeledCrossingLL{0.1}{3}{1}{2}{4}\;,\;\;&
      \LabeledCrossingRR{0.1}{3}{2}{1}{4}\;,\;\;& \LabeledCrossingLL{0.1}{3}{2}{1}{4}\;,\;\; \\
      \LabeledCrossingRR{0.1}{3}{4}{2}{1}\;,\;\;& \LabeledCrossingLL{0.1}{3}{4}{2}{1}\;,\;\;&
      \LabeledCrossingRR{0.1}{4}{1}{2}{3}\;,\;\;& \LabeledCrossingLL{0.1}{4}{1}{2}{3}\;,\;\;&
      \LabeledCrossingRR{0.1}{4}{2}{1}{3}\;,\;\;& \LabeledCrossingLL{0.1}{4}{2}{1}{3}\;,\;\;&
      \LabeledCrossingRR{0.1}{4}{3}{1}{2}\;,\;\;& \LabeledCrossingLL{0.1}{4}{3}{1}{2}\;;
    \end{array}
  \end{equation}
  see Figure~\ref{fig:forbidden patterns - crossing}.
\end{Lemma}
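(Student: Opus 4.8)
The plan is to obtain all twenty-four forbidden labeled patterns at once from a single constraint that Property~$\mathbf{P_2}$ imposes on any crossing occurrence, and then to finish with a one-line inspection of twelve permutations of size~$4$.

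First I would suppose, aiming at a contradiction, that $\DMATCHING$ contains a labeled occurrence of one of the patterns displayed in~\eqref{equ:Forbidden crossing patterns}, and I would name its four vertices $i_1 < i_2 < i_3 < i_4$. Every pattern in that list is a crossing pattern, so the two arcs of $\DMATCHING$ realizing the occurrence lie on the vertex pairs $\{i_1, i_3\}$ and $\{i_2, i_4\}$, and there are two cases according to their common orientation. If both arcs point to the right, they are $(i_1, i_3)$ and $(i_2, i_4)$; applying Property~$\mathbf{P_2}$ (Definition~\ref{definition:Property P_2}) to this pair of arcs, whose sources are $\pi(i_1), \pi(i_2)$ and whose sinks are $\pi(i_3), \pi(i_4)$, gives $\pi(i_1) < \pi(i_2)$ if and only if $\pi(i_3) < \pi(i_4)$. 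If both arcs point to the left, they are $(i_3, i_1)$ and $(i_4, i_2)$; now $\pi(i_3), \pi(i_4)$ are the sources and $\pi(i_1), \pi(i_2)$ are the sinks, and Property~$\mathbf{P_2}$ yields the identical biconditional. Hence, letting $\sigma = \STD\left(\pi(i_1)\pi(i_2)\pi(i_3)\pi(i_4)\right)$ be the permutation labeling the occurrence, any crossing occurrence in a matching with Property~$\mathbf{P_2}$ must satisfy
\[
  \sigma(1) < \sigma(2) \iff \sigma(3) < \sigma(4).
\]

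Then I would simply check that each of the twelve permutation words appearing as labels in~\eqref{equ:Forbidden crossing patterns}, namely $1243$, $1342$, $1432$, $2341$, $2431$, $3421$ (first two letters increasing, last two decreasing) together with $2134$, $3124$, $4123$, $3214$, $4213$, $4312$ (first two letters decreasing, last two increasing), fails this biconditional. These twelve are exactly the $\sigma \in S_4$ whose restrictions to positions $\{1, 2\}$ and to positions $\{3, 4\}$ are oppositely ordered; this is why the $4! = 24$ a priori possibilities for each of the two orientations collapse to the stated list, and why the right-pointing and left-pointing versions contribute the same set of labels. Since a labeled occurrence of any such pattern would contradict the biconditional just established, $\DMATCHING$ avoids all of them, which is exactly the statement of the lemma; see Figure~\ref{fig:forbidden patterns - crossing}.

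I do not expect any genuine obstacle: the whole argument is an immediate consequence of Property~$\mathbf{P_2}$ once the two arcs of a crossing are written down explicitly. The only point deserving a little care is the orientation bookkeeping --- confirming that swapping the common direction of a crossing swaps the roles of sources and sinks but leaves the induced order constraint on the four underlying values unchanged --- and then matching the twelve words ruled out by that constraint with the twelve words that actually appear in~\eqref{equ:Forbidden crossing patterns}.
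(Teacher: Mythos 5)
Your proof is correct and follows essentially the same route as the paper, which only sketches the argument by noting that Property~$\mathbf{P_2}$ applied to the two arcs of any crossing occurrence forces the sources and the sinks to be ordered consistently (illustrated there with the single label $3421$). Your unified biconditional $\sigma(1)<\sigma(2)\iff\sigma(3)<\sigma(4)$, together with the verification that the twelve listed labels are exactly the ones violating it for both orientations, is simply a more systematic write-up of that same observation.
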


\begin{figure}[htbp!]
  \centering
  \begin{tabular}{cccc}
    \begin{tikzpicture}[scale=0.165,label/.style={anchor=base}]
      \begin{scope}[]
        \draw[step=1cm,black!75,ultra thin,fill=black!50]
        (-0.2,-0.2) grid (8.2,8.2);
        \foreach \x/\y in {1/1,3/7,5/3,7/5} {
          \draw [fill=black] (\x,\y) circle (0.2);
        }
        \draw [black,line width=1pt,->,>=latex']
        (1,1) .. controls +(0,10) and +(0,4) .. (3,7);
        \draw [black,line width=1pt,->,>=latex']
        (5,3) .. controls +(0,6) and +(0,4) .. (7,5);
      \end{scope}
      \begin{scope}[yshift=12cm]
        \draw[step=1cm,black!75,ultra thin,fill=black!50]
        (-0.2,-0.2) grid (8.2,8.2);
        \foreach \x/\y in {1/1,3/7,5/3,7/5} {
          \draw [fill=black] (\x,\y) circle (0.2);
        }
        \draw [black,line width=1pt,->,>=latex']
        (1,1) .. controls +(0,10) and +(0,4) .. (3,7);
        \draw [black,line width=1pt,<-,>=latex']
        (5,3) .. controls +(0,6) and +(0,4) .. (7,5);
      \end{scope}
      \begin{scope}[xshift=10cm]
        \draw[step=1cm,black!75,ultra thin,fill=black!50]
        (-0.2,-0.2) grid (8.2,8.2);
        \foreach \x/\y in {1/1,3/7,5/3,7/5} {
          \draw [fill=black] (\x,\y) circle (0.2);
        }
        \draw [black,line width=1pt,<-,>=latex']
        (1,1) .. controls +(0,10) and +(0,4) .. (3,7);
        \draw [black,line width=1pt,->,>=latex']
        (5,3) .. controls +(0,6) and +(0,4) .. (7,5);
      \end{scope}
      \begin{scope}[xshift=10cm,yshift=12cm]
        \draw[step=1cm,black!75,ultra thin,fill=black!50]
        (-0.2,-0.2) grid (8.2,8.2);
        \foreach \x/\y in {1/1,3/7,5/3,7/5} {
          \draw [fill=black] (\x,\y) circle (0.2);
        }
        \draw [black,line width=1pt,<-,>=latex']
        (1,1) .. controls +(0,10) and +(0,4) .. (3,7);
        \draw [black,line width=1pt,<-,>=latex']
        (5,3) .. controls +(0,6) and +(0,4) .. (7,5);
      \end{scope}
      %
      \node [align=center] at (9,-4.5) {$1423$};
    \end{tikzpicture}
    &
    \begin{tikzpicture}[scale=0.165,label/.style={anchor=base}]
      \begin{scope}[]
        \draw[step=1cm,black!75,ultra thin,fill=black!50]
        (-0.2,-0.2) grid (8.2,8.2);
        \foreach \x/\y in {1/1,3/7,5/5,7/3} {
          \draw [fill=black] (\x,\y) circle (0.2);
        }
        \draw [black,line width=1pt,->,>=latex']
        (1,1) .. controls +(0,12) and +(0,4) .. (3,7);
        \draw [black,line width=1pt,->,>=latex']
        (5,5) .. controls +(0,4) and +(0,6) .. (7,3);
      \end{scope}
      \begin{scope}[yshift=12cm]
        \draw[step=1cm,black!75,ultra thin,fill=black!50]
        (-0.2,-0.2) grid (8.2,8.2);
        \foreach \x/\y in {1/1,3/7,5/5,7/3} {
          \draw [fill=black] (\x,\y) circle (0.2);
        }
        \draw [black,line width=1pt,->,>=latex']
        (1,1) .. controls +(0,12) and +(0,4) .. (3,7);
        \draw [black,line width=1pt,<-,>=latex']
        (5,5) .. controls +(0,4) and +(0,6) .. (7,3);
      \end{scope}
      \begin{scope}[xshift=10cm]
        \draw[step=1cm,black!75,ultra thin,fill=black!50]
        (-0.2,-0.2) grid (8.2,8.2);
        \foreach \x/\y in {1/1,3/7,5/5,7/7} {
          \draw [fill=black] (\x,\y) circle (0.2);
        }
        \draw [black,line width=1pt,<-,>=latex']
        (1,1) .. controls +(0,12) and +(0,4) .. (3,7);
        \draw [black,line width=1pt,->,>=latex']
        (5,5) .. controls +(0,4) and +(0,6) .. (7,3);
      \end{scope}
      \begin{scope}[xshift=10cm,yshift=12cm]
        \draw[step=1cm,black!75,ultra thin,fill=black!50]
        (-0.2,-0.2) grid (8.2,8.2);
        \foreach \x/\y in {1/1,3/7,5/5,7/3} {
          \draw [fill=black] (\x,\y) circle (0.2);
        }
        \draw [black,line width=1pt,<-,>=latex']
        (1,1) .. controls +(0,12) and +(0,4) .. (3,7);
        \draw [black,line width=1pt,<-,>=latex']
        (5,5) .. controls +(0,4) and +(0,6) .. (7,3);
      \end{scope}
      %
      \node [align=center] at (9,-4.5) {$1432$};
    \end{tikzpicture}
    &
        \begin{tikzpicture}[scale=0.165,label/.style={anchor=base}]
      \begin{scope}[]
        \draw[step=1cm,black!75,ultra thin,fill=black!50]
        (-0.2,-0.2) grid (8.2,8.2);
        \foreach \x/\y in {1/3,3/5,5/1,7/7} {
          \draw [fill=black] (\x,\y) circle (0.2);
        }
        \draw [black,line width=1pt,->,>=latex']
        (1,3) .. controls +(0,6) and +(0,4) .. (3,5);
        \draw [black,line width=1pt,->,>=latex']
        (5,1) .. controls +(0,12) and +(0,4) .. (7,7);
      \end{scope}
      \begin{scope}[yshift=12cm]
        \draw[step=1cm,black!75,ultra thin,fill=black!50]
        (-0.2,-0.2) grid (8.2,8.2);
        \foreach \x/\y in {1/3,3/5,5/1,7/7} {
          \draw [fill=black] (\x,\y) circle (0.2);
        }
        \draw [black,line width=1pt,->,>=latex']
        (1,3) .. controls +(0,6) and +(0,4) .. (3,5);
        \draw [black,line width=1pt,<-,>=latex']
        (5,1) .. controls +(0,12) and +(0,4) .. (7,7);
      \end{scope}
      \begin{scope}[xshift=10cm]
        \draw[step=1cm,black!75,ultra thin,fill=black!50]
        (-0.2,-0.2) grid (8.2,8.2);
        \foreach \x/\y in {1/3,3/5,5/1,7/7} {
          \draw [fill=black] (\x,\y) circle (0.2);
        }
        \draw [black,line width=1pt,<-,>=latex']
        (1,3) .. controls +(0,6) and +(0,4) .. (3,5);
        \draw [black,line width=1pt,->,>=latex']
        (5,1) .. controls +(0,12) and +(0,4) .. (7,7);
      \end{scope}
      \begin{scope}[xshift=10cm,yshift=12cm]
        \draw[step=1cm,black!75,ultra thin,fill=black!50]
        (-0.2,-0.2) grid (8.2,8.2);
        \foreach \x/\y in {1/3,3/5,5/1,7/7} {
          \draw [fill=black] (\x,\y) circle (0.2);
        }
        \draw [black,line width=1pt,<-,>=latex']
        (1,3) .. controls +(0,6) and +(0,4) .. (3,5);
        \draw [black,line width=1pt,<-,>=latex']
        (5,1) .. controls +(0,12) and +(0,4) .. (7,7);
      \end{scope}
      %
      \node [align=center] at (9,-4.5) {$2314$};
    \end{tikzpicture}
    &
        \begin{tikzpicture}[scale=0.165,label/.style={anchor=base}]
      \begin{scope}[]
        \draw[step=1cm,black!75,ultra thin,fill=black!50]
        (-0.2,-0.2) grid (8.2,8.2);
        \foreach \x/\y in {1/3,3/5,5/7,7/1} {
          \draw [fill=black] (\x,\y) circle (0.2);
        }
        \draw [black,line width=1pt,->,>=latex']
        (1,3) .. controls +(0,6) and +(0,4) .. (3,5);
        \draw [black,line width=1pt,->,>=latex']
        (5,7) .. controls +(0,4) and +(0,12) .. (7,1);
      \end{scope}
      \begin{scope}[yshift=12cm]
        \draw[step=1cm,black!75,ultra thin,fill=black!50]
        (-0.2,-0.2) grid (8.2,8.2);
        \foreach \x/\y in {1/3,3/5,5/7,7/1} {
          \draw [fill=black] (\x,\y) circle (0.2);
        }
        \draw [black,line width=1pt,->,>=latex']
        (1,3) .. controls +(0,6) and +(0,4) .. (3,5);
        \draw [black,line width=1pt,<-,>=latex']
        (5,7) .. controls +(0,4) and +(0,12) .. (7,1);
      \end{scope}
      \begin{scope}[xshift=10cm]
        \draw[step=1cm,black!75,ultra thin,fill=black!50]
        (-0.2,-0.2) grid (8.2,8.2);
        \foreach \x/\y in {1/3,3/5,5/7,7/1} {
          \draw [fill=black] (\x,\y) circle (0.2);
        }
        \draw [black,line width=1pt,<-,>=latex']
        (1,3) .. controls +(0,6) and +(0,4) .. (3,5);
        \draw [black,line width=1pt,->,>=latex']
        (5,7) .. controls +(0,4) and +(0,12) .. (7,1);
      \end{scope}
      \begin{scope}[xshift=10cm,yshift=12cm]
        \draw[step=1cm,black!75,ultra thin,fill=black!50]
        (-0.2,-0.2) grid (8.2,8.2);
        \foreach \x/\y in {1/3,3/5,5/7,7/1} {
          \draw [fill=black] (\x,\y) circle (0.2);
        }
        \draw [black,line width=1pt,<-,>=latex']
        (1,3) .. controls +(0,6) and +(0,4) .. (3,5);
        \draw [black,line width=1pt,<-,>=latex']
        (5,7) .. controls +(0,4) and +(0,12) .. (7,1);
      \end{scope}
      %
      \node [align=center] at (9,-4.5) {$2341$};
    \end{tikzpicture}
    \\
    \begin{tikzpicture}[scale=0.165,label/.style={anchor=base}]
      \begin{scope}[]
        \draw[step=1cm,black!75,ultra thin,fill=black!50]
        (-0.2,-0.2) grid (8.2,8.2);
        \foreach \x/\y in {1/5,3/5,5/1,7/7} {
          \draw [fill=black] (\x,\y) circle (0.2);
        }
        \draw [black,line width=1pt,->,>=latex']
        (1,5) .. controls +(0,4) and +(0,6) .. (3,3);
        \draw [black,line width=1pt,->,>=latex']
        (5,1) .. controls +(0,12) and +(0,4) .. (7,7);
      \end{scope}
      \begin{scope}[yshift=12cm]
        \draw[step=1cm,black!75,ultra thin,fill=black!50]
        (-0.2,-0.2) grid (8.2,8.2);
        \foreach \x/\y in {1/5,3/3,5/1,7/7} {
          \draw [fill=black] (\x,\y) circle (0.2);
        }
        \draw [black,line width=1pt,->,>=latex']
        (1,5) .. controls +(0,4) and +(0,6) .. (3,3);
        \draw [black,line width=1pt,<-,>=latex']
        (5,1) .. controls +(0,12) and +(0,4) .. (7,7);
      \end{scope}
      \begin{scope}[xshift=10cm]
        \draw[step=1cm,black!75,ultra thin,fill=black!50]
        (-0.2,-0.2) grid (8.2,8.2);
        \foreach \x/\y in {1/5,3/3,5/1,7/7} {
          \draw [fill=black] (\x,\y) circle (0.2);
        }
        \draw [black,line width=1pt,<-,>=latex']
        (1,5) .. controls +(0,4) and +(0,6) .. (3,3);
        \draw [black,line width=1pt,->,>=latex']
        (5,1) .. controls +(0,12) and +(0,4) .. (7,7);
      \end{scope}
      \begin{scope}[xshift=10cm,yshift=12cm]
        \draw[step=1cm,black!75,ultra thin,fill=black!50]
        (-0.2,-0.2) grid (8.2,8.2);
        \foreach \x/\y in {1/5,3/3,5/1,7/7} {
          \draw [fill=black] (\x,\y) circle (0.2);
        }
        \draw [black,line width=1pt,<-,>=latex']
        (1,5) .. controls +(0,4) and +(0,6) .. (3,3);
        \draw [black,line width=1pt,<-,>=latex']
        (5,1) .. controls +(0,12) and +(0,4) .. (7,7);
      \end{scope}
      %
      \node [align=center] at (9,-4.5) {$3214$};
    \end{tikzpicture}
    &
    \begin{tikzpicture}[scale=0.165,label/.style={anchor=base}]
      \begin{scope}[]
        \draw[step=1cm,black!75,ultra thin,fill=black!50]
        (-0.2,-0.2) grid (8.2,8.2);
        \foreach \x/\y in {1/5,3/3,5/7,7/1} {
          \draw [fill=black] (\x,\y) circle (0.2);
        }
        \draw [black,line width=1pt,->,>=latex']
        (1,5) .. controls +(0,4) and +(0,6) .. (3,3);
        \draw [black,line width=1pt,->,>=latex']
        (5,7) .. controls +(0,4) and +(0,12) .. (7,1);
      \end{scope}
      \begin{scope}[yshift=12cm]
        \draw[step=1cm,black!75,ultra thin,fill=black!50]
        (-0.2,-0.2) grid (8.2,8.2);
        \foreach \x/\y in {1/5,3/3,5/7,7/1} {
          \draw [fill=black] (\x,\y) circle (0.2);
        }
        \draw [black,line width=1pt,->,>=latex']
        (1,5) .. controls +(0,4) and +(0,6) .. (3,3);
        \draw [black,line width=1pt,<-,>=latex']
        (5,7) .. controls +(0,4) and +(0,12) .. (7,1);
      \end{scope}
      \begin{scope}[xshift=10cm]
        \draw[step=1cm,black!75,ultra thin,fill=black!50]
        (-0.2,-0.2) grid (8.2,8.2);
        \foreach \x/\y in {1/5,3/3,5/7,7/1} {
          \draw [fill=black] (\x,\y) circle (0.2);
        }
        \draw [black,line width=1pt,<-,>=latex']
        (1,5) .. controls +(0,4) and +(0,6) .. (3,3);
        \draw [black,line width=1pt,->,>=latex']
        (5,7) .. controls +(0,4) and +(0,12) .. (7,1);
      \end{scope}
      \begin{scope}[xshift=10cm,yshift=12cm]
        \draw[step=1cm,black!75,ultra thin,fill=black!50]
        (-0.2,-0.2) grid (8.2,8.2);
        \foreach \x/\y in {1/5,3/3,5/7,7/1} {
          \draw [fill=black] (\x,\y) circle (0.2);
        }
        \draw [black,line width=1pt,<-,>=latex']
        (1,5) .. controls +(0,4) and +(0,6) .. (3,3);
        \draw [black,line width=1pt,<-,>=latex']
        (5,7) .. controls +(0,4) and +(0,12) .. (7,1);
      \end{scope}
      %
      \node [align=center] at (9,-4.5) {$3241$};
    \end{tikzpicture}
    &
    \begin{tikzpicture}[scale=0.165,label/.style={anchor=base}]
      \begin{scope}[]
        \draw[step=1cm,black!75,ultra thin,fill=black!50]
        (-0.2,-0.2) grid (8.2,8.2);
        \foreach \x/\y in {1/7,3/1,5/3,7/5} {
          \draw [fill=black] (\x,\y) circle (0.2);
        }
        \draw [black,line width=1pt,->,>=latex']
        (1,7) .. controls +(0,4) and +(0,12) .. (3,1);
        \draw [black,line width=1pt,->,>=latex']
        (5,3) .. controls +(0,6) and +(0,4) .. (7,5);
      \end{scope}
      \begin{scope}[yshift=12cm]
        \draw[step=1cm,black!75,ultra thin,fill=black!50]
        (-0.2,-0.2) grid (8.2,8.2);
        \foreach \x/\y in {1/7,3/1,5/3,7/5} {
          \draw [fill=black] (\x,\y) circle (0.2);
        }
        \draw [black,line width=1pt,->,>=latex']
        (1,7) .. controls +(0,4) and +(0,12) .. (3,1);
        \draw [black,line width=1pt,<-,>=latex']
        (5,3) .. controls +(0,6) and +(0,4) .. (7,5);
      \end{scope}
      \begin{scope}[xshift=10cm]
        \draw[step=1cm,black!75,ultra thin,fill=black!50]
        (-0.2,-0.2) grid (8.2,8.2);
        \foreach \x/\y in {1/7,3/1,5/3,7/5} {
          \draw [fill=black] (\x,\y) circle (0.2);
        }
        \draw [black,line width=1pt,<-,>=latex']
        (1,7) .. controls +(0,4) and +(0,12) .. (3,1);
        \draw [black,line width=1pt,->,>=latex']
        (5,3) .. controls +(0,6) and +(0,4) .. (7,5);
      \end{scope}
      \begin{scope}[xshift=10cm,yshift=12cm]
        \draw[step=1cm,black!75,ultra thin,fill=black!50]
        (-0.2,-0.2) grid (8.2,8.2);
        \foreach \x/\y in {1/7,3/1,5/3,7/5} {
          \draw [fill=black] (\x,\y) circle (0.2);
        }
        \draw [black,line width=1pt,<-,>=latex']
        (1,7) .. controls +(0,4) and +(0,12) .. (3,1);
        \draw [black,line width=1pt,<-,>=latex']
        (5,3) .. controls +(0,6) and +(0,4) .. (7,5);
      \end{scope}
      %
      \node [align=center] at (9,-4.5) {$4123$};
    \end{tikzpicture}
    &
    \begin{tikzpicture}[scale=0.165,label/.style={anchor=base}]
      \begin{scope}[]
        \draw[step=1cm,black!75,ultra thin,fill=black!50]
        (-0.2,-0.2) grid (8.2,8.2);
        \foreach \x/\y in {1/7,3/1,5/5,7/3} {
          \draw [fill=black] (\x,\y) circle (0.2);
        }
        \draw [black,line width=1pt,->,>=latex']
        (1,7) .. controls +(0,4) and +(0,12) .. (3,1);
        \draw [black,line width=1pt,->,>=latex']
        (5,5) .. controls +(0,4) and +(0,6) .. (7,3);
      \end{scope}
      \begin{scope}[yshift=12cm]
        \draw[step=1cm,black!75,ultra thin,fill=black!50]
        (-0.2,-0.2) grid (8.2,8.2);
        \foreach \x/\y in {1/7,3/1,5/5,7/3} {
          \draw [fill=black] (\x,\y) circle (0.2);
        }
        \draw [black,line width=1pt,->,>=latex']
        (1,7) .. controls +(0,4) and +(0,12) .. (3,1);
        \draw [black,line width=1pt,<-,>=latex']
        (5,5) .. controls +(0,4) and +(0,6) .. (7,3);
      \end{scope}
      \begin{scope}[xshift=10cm]
        \draw[step=1cm,black!75,ultra thin,fill=black!50]
        (-0.2,-0.2) grid (8.2,8.2);
        \foreach \x/\y in {1/7,3/1,5/5,7/3} {
          \draw [fill=black] (\x,\y) circle (0.2);
        }
        \draw [black,line width=1pt,<-,>=latex']
        (1,7) .. controls +(0,4) and +(0,12) .. (3,1);
        \draw [black,line width=1pt,->,>=latex']
        (5,5) .. controls +(0,4) and +(0,6) .. (7,3);
      \end{scope}
      \begin{scope}[xshift=10cm,yshift=12cm]
        \draw[step=1cm,black!75,ultra thin,fill=black!50]
        (-0.2,-0.2) grid (8.2,8.2);
        \foreach \x/\y in {1/7,3/1,5/5,7/3} {
          \draw [fill=black] (\x,\y) circle (0.2);
        }
        \draw [black,line width=1pt,<-,>=latex']
        (1,7) .. controls +(0,4) and +(0,12) .. (3,1);
        \draw [black,line width=1pt,<-,>=latex']
        (5,5) .. controls +(0,4) and +(0,6) .. (7,3);
      \end{scope}
      %
      \node [align=center] at (9,-4.5) {$4132$};
    \end{tikzpicture}
  \end{tabular}
  \caption{\label{fig:forbidden patterns - precedence}%
  The labeled patterns with consecutive edges avoided by any directed
  perfect matching on a permutation
  satisfying Property~$\mathbf{P_2}$.}
\end{figure}

\begin{Lemma}[Forbidden precedence patterns]
  \label{lemma:Forbidden patterns precedence}
  Let $\pi$ be a permutation and $\DMATCHING$ be a directed perfect
  matching on $\pi$ satisfying Property~$\mathbf{P_2}$.
  Then $\DMATCHING$ avoids the following labeled patterns
  \begin{equation}
    \label{equ:Forbidden crossing patterns}
    \begin{array}{cccccccccc}
      \LabeledPrecedenceRR{0.1}{1}{4}{2}{3}\;,\;\;& \LabeledPrecedenceRL{0.1}{1}{4}{2}{3}\;,\;\;&
      \LabeledPrecedenceLR{0.1}{1}{4}{2}{3}\;,\;\;& \LabeledPrecedenceLL{0.1}{1}{4}{2}{3}\;,\;\;&
      \LabeledPrecedenceRR{0.1}{1}{4}{3}{2}\;,\;\;& \LabeledPrecedenceRL{0.1}{1}{4}{3}{2}\;,\;\;&
      \LabeledPrecedenceLR{0.1}{1}{4}{3}{2}\;,\;\;& \LabeledPrecedenceLL{0.1}{1}{4}{3}{2}\;,\;\; \\
      \LabeledPrecedenceRR{0.1}{2}{3}{1}{4}\;,\;\;& \LabeledPrecedenceRL{0.1}{2}{3}{1}{4}\;,\;\;&
      \LabeledPrecedenceLR{0.1}{2}{3}{1}{4}\;,\;\;& \LabeledPrecedenceLL{0.1}{2}{3}{1}{4}\;,\;\;&
      \LabeledPrecedenceRR{0.1}{2}{3}{4}{1}\;,\;\;& \LabeledPrecedenceRL{0.1}{2}{3}{4}{1}\;,\;\;&
      \LabeledPrecedenceLR{0.1}{2}{3}{4}{1}\;,\;\;& \LabeledPrecedenceLL{0.1}{2}{3}{4}{1}\;,\;\; \\
      \LabeledPrecedenceRR{0.1}{3}{2}{1}{4}\;,\;\;& \LabeledPrecedenceRL{0.1}{3}{2}{1}{4}\;,\;\;&
      \LabeledPrecedenceLR{0.1}{3}{2}{1}{4}\;,\;\;& \LabeledPrecedenceLL{0.1}{3}{2}{1}{4}\;,\;\;&
      \LabeledPrecedenceRR{0.1}{3}{2}{4}{1}\;,\;\;& \LabeledPrecedenceRL{0.1}{3}{2}{4}{1}\;,\;\;&
      \LabeledPrecedenceLR{0.1}{3}{2}{4}{1}\;,\;\;& \LabeledPrecedenceLL{0.1}{3}{2}{4}{1}\;,\;\; \\
      \LabeledPrecedenceRR{0.1}{4}{1}{2}{3}\;,\;\;& \LabeledPrecedenceRL{0.1}{4}{1}{2}{3}\;,\;\;&
      \LabeledPrecedenceLR{0.1}{4}{1}{2}{3}\;,\;\;& \LabeledPrecedenceLL{0.1}{4}{1}{2}{3}\;,\;\;&
      \LabeledPrecedenceRR{0.1}{4}{1}{3}{2}\;,\;\;& \LabeledPrecedenceRL{0.1}{4}{1}{3}{2}\;,\;\;&
      \LabeledPrecedenceLR{0.1}{4}{1}{3}{2}\;,\;\;& \LabeledPrecedenceLL{0.1}{4}{1}{3}{2}\;;
    \end{array}
  \end{equation}
  see Figure~\ref{fig:forbidden patterns - precedence}.
\end{Lemma}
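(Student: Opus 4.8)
The plan is to reduce all $32$ listed patterns to a single observation about Property $\mathbf{P_2}$, rather than checking them one at a time. Suppose, towards a contradiction, that $\DMATCHING$ contains an occurrence of one of these labeled precedence patterns. By definition this occurrence consists of four positions $i_1 < i_2 < i_3 < i_4$ of $\pi$ and two arcs $e_1$, $e_2$ of $\DMATCHING$, where $e_1$ joins $i_1$ and $i_2$ and $e_2$ joins $i_3$ and $i_4$, and where $\STD(\pi(i_1)\pi(i_2)\pi(i_3)\pi(i_4))$ equals the permutation labelling the pattern; the orientation chosen for the pattern fixes which endpoint of $e_1$ and which endpoint of $e_2$ is a source and which is a sink.

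The key step is to observe the feature common to all $32$ labellings in the statement: in each of them the word $\sigma = \sigma_1\sigma_2\sigma_3\sigma_4$ satisfies $\{\sigma_1, \sigma_2\} \in \{\{1, 4\}, \{2, 3\}\}$. Equivalently, of the two arcs $e_1$ and $e_2$, one carries the smallest and the largest of the four values $\pi(i_1), \pi(i_2), \pi(i_3), \pi(i_4)$ and the other carries the two middle values; writing these four values in increasing order as $a < b < c < d$, one arc has value set $\{a, d\}$ and the other has value set $\{b, c\}$.

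From this I would deduce that Property $\mathbf{P_2}$ is violated no matter how $e_1$ and $e_2$ are oriented. The source of the arc with value set $\{a, d\}$ is $a$ or $d$, and its sink is the other value; likewise the source of the arc with value set $\{b, c\}$ is $b$ or $c$, and its sink is the other. Since $a$ lies below and $d$ lies above both $b$ and $c$, in each of the four orientations the two sources compare, as $\pi$-values, in the reverse order of the two sinks: if the $\{a, d\}$-arc has source $a$ it has the smaller source but the larger sink, and symmetrically if its source is $d$. This contradicts Property $\mathbf{P_2}$, which requires $\pi(i) < \pi(j)$ if and only if $\pi(i') < \pi(j')$ for any two distinct arcs $(i, i')$ and $(j, j')$. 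Hence no such occurrence can exist, and the lemma follows.

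I expect the only real effort to be bookkeeping: reading off, from the arrow directions built into the pattern macros, which endpoint of $e_1$ and $e_2$ is the source in each of the four orientations, and then verifying that the ``nested value sets'' feature above genuinely holds for all $32$ entries of the displayed array --- that is, that the value sets of the two arcs are $\{a, d\}$ and $\{b, c\}$, and never, say, $\{a, c\}$ and $\{b, d\}$ or $\{a, b\}$ and $\{c, d\}$. There is no genuine mathematical obstacle once this observation is isolated, since the remaining argument is uniform over the four orientations and the eight underlying permutations. The companion Lemma on forbidden crossing patterns admits a parallel proof, the relevant condition there being instead that the values at the left endpoints of the two arcs appear in the opposite relative order from the values at their right endpoints.
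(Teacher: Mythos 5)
Your proposal is correct and follows essentially the same route as the paper, which disposes of these lemmas by noting that any occurrence of a listed labelled pattern yields two arcs whose sources compare in the opposite order to their sinks, contradicting Property~$\mathbf{P_2}$. Your only addition is the organizing observation that in all eight underlying permutations one arc carries the value set $\{1,4\}$ and the other $\{2,3\}$, which lets you treat the $32$ patterns uniformly instead of pattern by pattern; this is a harmless repackaging of the paper's direct check.
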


  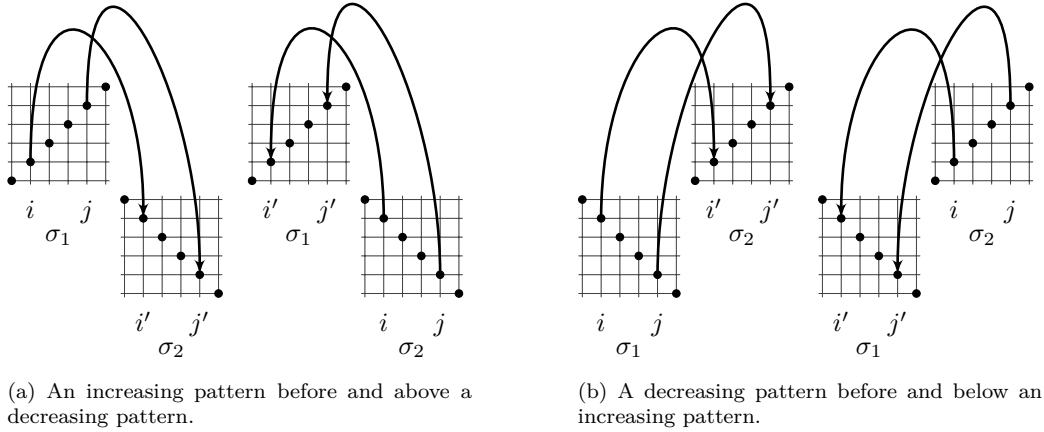
\begin{figure}[htbp!]
    \centering
    \subfigure[%
      An increasing pattern before and above a decreasing pattern.]{%
      \begin{tikzpicture}[scale=0.25,label/.style={anchor=base}]
        \draw[step=1cm,black!75,ultra thin,fill=black!50]
        (-0.2,5.8) grid (5.2,11.2);
        \foreach \x/\y in {0/6,1/7,2/8,3/9,4/10,5/11} {
          \draw [fill=black] (\x,\y) circle (0.2);
        }
        \draw[step=1cm,black!75,ultra thin,fill=black!50]
        (5.8,-0.2) grid (11.2,5.2);
        \foreach \x/\y in {6/5,7/4,8/3,9/2,10/1,11/0} {
          \draw [fill=black] (\x,\y) circle (0.2);
        }
        \node [label] (a) at (1,4) {$i$};
        \node [label] (b) at (4,4) {$j$};
        \node [label] (ap) at (7,-2) {$i'$};
        \node [label] (bp) at (10,-2) {$j'$};
        \node (nu1) at (2.5,3) {$\sigma_1$};
        \node (nu2) at (8.5,-3) {$\sigma_2$};
        \draw [black,line width=1pt,->,>=latex']
        (1,7) .. controls +(0,12) and +(0,10) .. (7,4);
        \draw [black,line width=1pt,->,>=latex']
        (4,10) .. controls +(0,12) and +(0,10) .. (10,1);
      \end{tikzpicture}
      \;
      \begin{tikzpicture}[scale=0.25,label/.style={anchor=base}]
        \draw[step=1cm,black!75,ultra thin,fill=black!50]
        (-0.2,5.8) grid (5.2,11.2);
        \foreach \x/\y in {0/6,1/7,2/8,3/9,4/10,5/11} {
          \draw [fill=black] (\x,\y) circle (0.2);
        }
        \draw[step=1cm,black!75,ultra thin,fill=black!50]
        (5.8,-0.2) grid (11.2,5.2);
        \foreach \x/\y in {6/5,7/4,8/3,9/2,10/1,11/0} {
          \draw [fill=black] (\x,\y) circle (0.2);
        }
        \node [label] (a) at (1,4) {$i'$};
        \node [label] (b) at (4,4) {$j'$};
        \node [label] (ap) at (7,-2) {$i$};
        \node [label] (bp) at (10,-2) {$j$};
        \node (nu1) at (2.5,3) {$\sigma_1$};
        \node (nu2) at (8.5,-3) {$\sigma_2$};
        \draw [black,line width=1pt,<-,>=latex']
        (1,7) .. controls +(0,12) and +(0,10) .. (7,4);
        \draw [black,line width=1pt,<-,>=latex']
        (4,10) .. controls +(0,12) and +(0,10) .. (10,1);
      \end{tikzpicture}
      \label{subfig:increasing before and above decreasing}
    }
    \qquad \qquad
    \subfigure[%
      A decreasing pattern before and below an increasing pattern.]{%
      \begin{tikzpicture}[scale=0.25,label/.style={anchor=base}]
        \draw[step=1cm,black!75,ultra thin,fill=black!50]
        (-0.2,-0.2) grid (5.2,5.2);
        \foreach \x/\y in {0/5,1/4,2/3,3/2,4/1,5/0} {
          \draw [fill=black] (\x,\y) circle (0.2);
        }
        \draw[step=1cm,black!75,ultra thin,fill=black!50]
        (5.8,5.8) grid (11.2,11.2);
        \foreach \x/\y in {6/6,7/7,8/8,9/9,10/10,11/11} {
          \draw [fill=black] (\x,\y) circle (0.2);
        }
        \node [label] (a) at (1,-2) {$i$};
        \node [label] (b) at (4,-2) {$j$};
        \node [label] (ap) at (7,4) {$i'$};
        \node [label] (bp) at (10,4) {$j'$};
        \node (nu1) at (2.5,-3) {$\sigma_1$};
        \node (nu2) at (8.5,3) {$\sigma_2$};
        \draw [black,line width=1pt,->,>=latex']
        (1,4) .. controls +(0,10) and +(0,12) .. (7,7);
        \draw [black,line width=1pt,->,>=latex']
        (4,1) .. controls +(0,10) and +(0,12) .. (10,10);
      \end{tikzpicture}
      \;
      \begin{tikzpicture}[scale=0.25,label/.style={anchor=base}]
        \draw[step=1cm,black!75,ultra thin,fill=black!50]
        (-0.2,-0.2) grid (5.2,5.2);
        \foreach \x/\y in {0/5,1/4,2/3,3/2,4/1,5/0} {
          \draw [fill=black] (\x,\y) circle (0.2);
        }
        \draw[step=1cm,black!75,ultra thin,fill=black!50]
        (5.8,5.8) grid (11.2,11.2);
        \foreach \x/\y in {6/6,7/7,8/8,9/9,10/10,11/11} {
          \draw [fill=black] (\x,\y) circle (0.2);
        }
        \node [label] (a) at (1,-2) {$i'$};
        \node [label] (b) at (4,-2) {$j'$};
        \node [label] (ap) at (7,4) {$i$};
        \node [label] (bp) at (10,4) {$j$};
        \node (nu1) at (2.5,-3) {$\sigma_1$};
        \node (nu2) at (8.5,3) {$\sigma_2$};
        \draw [black,line width=1pt,<-,>=latex']
        (1,4) .. controls +(0,10) and +(0,12) .. (7,7);
        \draw [black,line width=1pt,<-,>=latex']
        (4,1) .. controls +(0,10) and +(0,12) .. (10,10);
      \end{tikzpicture}
      \label{subfig:decreasing before and below increasing}
    }
    \caption{\label{fig:subfig:no (nu_1, nu_2)-arc}%
      Illustration of Corollary~\ref{corollary:at most one arc monotone}.}
  \end{figure}

  A useful corollary of Lemma~\ref{lemma:Forbidden patterns crossing} reads as follows.

  \begin{Corollary}
    \label{corollary:at most one arc monotone}
    Let $\pi = \pi_1 \, \sigma_1 \, \pi_2 \, \sigma_2 \, \pi_3$ be a
    permutation and
    $\DMATCHING$ be a directed perfect matching on $\pi$ satisfying Properties
    $\mathbf{P_1}$ and $\mathbf{P_2}$. The following assertions hold.
    \begin{enumerate}[label={\it (\roman*)},fullwidth]
    \item \label{item:at most one arc monotone 1}
      If $\sigma_1$ is increasing, $\sigma_2$ is decreasing, and
      $\sigma_1$ is above $\sigma_2$ (see
      Figure~\ref{subfig:increasing before and above decreasing}),
      then there is at most one arc between $\sigma_1$ and $\sigma_2$
      in $\DMATCHING$ (this arc can be a
      $(\sigma_1, \sigma_2)$-arc or a $(\sigma_2, \sigma_1)$-arc).
    \item \label{item:at most one arc monotone 2}
      If $\sigma_1$ is decreasing, $\sigma_2$ is increasing, and
      $\sigma_1$ is below $\sigma_2$ (see
      Figure~\ref{subfig:decreasing before and below increasing}),
      then there is at most one arc between $\sigma_1$ and $\sigma_2$
      in $\DMATCHING$ (this arc can be a
      $(\sigma_1, \sigma_2)$-arc or a $(\sigma_2, \sigma_1)$-arc).
    \end{enumerate}
  \end{Corollary}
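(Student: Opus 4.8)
The plan is to argue by contradiction: assume $\DMATCHING$ has two distinct arcs between $\sigma_1$ and $\sigma_2$, and exhibit a forbidden configuration. First I would fix notation. Each such arc has exactly one endpoint among the positions occupied by $\sigma_1$ and one among those of $\sigma_2$; write $p < p'$ for the two $\sigma_1$-endpoints, and let $q$ (resp.\ $q'$) be the $\sigma_2$-endpoint of the arc through $p$ (resp.\ through $p'$). Since the block $\sigma_1$ entirely precedes the block $\sigma_2$ in $\pi$ (they are separated by $\pi_2$), we have $p < p' < q$ and $p < p' < q'$, so there are exactly two possibilities for the relative order of the four positions: $p < p' < q' < q$, or $p < p' < q < q'$.

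In the first order, the arc $\{p, q\}$ contains the arc $\{p', q'\}$, so $\DMATCHING$ contains an occurrence of one of the four unlabeled patterns of $\mathcal{P}_{\mathrm{cont}}$, regardless of the orientations of the two arcs; this contradicts Property~$\mathbf{P_1}$. (This case uses neither the monotonicity of $\sigma_1, \sigma_2$ nor the above/below hypothesis.)

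The substantive case is $p < p' < q < q'$, in which the two arcs \emph{cross}. Here I would bring in the hypotheses of~\ref{item:at most one arc monotone 1}: since $\sigma_1$ is increasing, $\pi(p) < \pi(p')$; since $\sigma_2$ is decreasing, $\pi(q) > \pi(q')$; and since $\sigma_1$ is above $\sigma_2$, $\pi(q), \pi(q') < \pi(p), \pi(p')$. Chaining these gives $\pi(q') < \pi(q) < \pi(p) < \pi(p')$, so $\STD(\pi(p)\,\pi(p')\,\pi(q)\,\pi(q')) = 3421$, realized by a crossing configuration. To finish I would split on the relative orientation of the two arcs: if they point in opposite directions the configuration is an occurrence of $\CrossingLR$ or $\CrossingRL$, hence forbidden by Property~$\mathbf{P_1}$; if they point alike it is an occurrence of $\LabeledCrossingRR{0.1}{3}{4}{2}{1}$ or $\LabeledCrossingLL{0.1}{3}{4}{2}{1}$, forbidden by Lemma~\ref{lemma:Forbidden patterns crossing}. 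Either way we obtain a contradiction, so there is at most one arc between $\sigma_1$ and $\sigma_2$, which proves~\ref{item:at most one arc monotone 1}. For~\ref{item:at most one arc monotone 2} the same argument runs with $\sigma_1$ decreasing, $\sigma_2$ increasing and $\sigma_1$ below $\sigma_2$; in the crossing case one instead gets $\pi(p') < \pi(p) < \pi(q) < \pi(q')$, hence the pattern $2134$, and the crossing occurrence is again excluded by Property~$\mathbf{P_1}$ (opposite orientations) or by Lemma~\ref{lemma:Forbidden patterns crossing}, which also lists $\LabeledCrossingRR{0.1}{2}{1}{3}{4}$ and $\LabeledCrossingLL{0.1}{2}{1}{3}{4}$.

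The one point needing care — the main obstacle, such as it is — is the orientation case analysis for the two crossing arcs: one must check that each of the four orientation combinations falls under exactly one of the two prohibitions (two being the opposite-orientation crossing patterns of $\mathbf{P_1}$, two being the aligned labeled patterns $3421$, resp.\ $2134$, of Lemma~\ref{lemma:Forbidden patterns crossing}). One must also pair endpoints correctly: with $p < p'$ in $\sigma_1$, the crossing situation is the one where the arcs are $\{p, q\}$ and $\{p', q'\}$ with $q < q'$, whereas $q > q'$ gives the nesting situation; conflating the two is the only real way to err.
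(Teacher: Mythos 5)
Your proof is correct and follows essentially the same route as the paper's: Property $\mathbf{P_1}$ rules out the nested configuration and the oppositely-oriented crossings, and the remaining same-direction crossing configuration yields the labeled pattern $3421$ (resp.\ $2134$), which is excluded by Lemma~\ref{lemma:Forbidden patterns crossing} under Property $\mathbf{P_2}$. The paper's proof is just a terser statement of the same case analysis, which you spell out explicitly and accurately.
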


  \begin{proof}[Proof of Corollary~\ref{corollary:at most one arc monotone}]
    Suppose, aiming at a contradiction, that
    \ref{item:at most one arc monotone 1} does not hold.
    Since $\DMATCHING$ has
    Property~$\mathbf{P_1}$, it avoids the unlabelled patterns of
    $\mathcal{P}_1$.
    Then it follows that $\DMATCHING$ contains
    (see Figure~\ref{subfig:increasing before and above decreasing})
    either
    two crossing $(\sigma_1, \sigma_2)$-arcs (a $\LabeledCrossingRR{.08}{3}{4}{2}{1}$ labeled pattern) or
    two crossing $(\sigma_2, \sigma_1)$-arcs (a $\LabeledCrossingLL{.08}{3}{4}{2}{1}$ labeled pattern).
    Hence, according to Lemma~\ref{lemma:Forbidden patterns crossing},
    $\DMATCHING$ cannot have Property~$\mathbf{P_2}$.
    This is the sought-after contradiction.

    The proof for \ref{item:at most one arc monotone 2} is similar
    (see Figure~\ref{subfig:decreasing before and below increasing})
    replacing the labeled patterns
    $\LabeledCrossingRR{.08}{3}{4}{2}{1}$ and $\LabeledCrossingLL{.08}{3}{4}{2}{1}$
    by
    $\LabeledCrossingRR{.08}{2}{1}{3}{4}$ and $\LabeledCrossingLL{.08}{2}{1}{3}{4}$.
  \end{proof}

  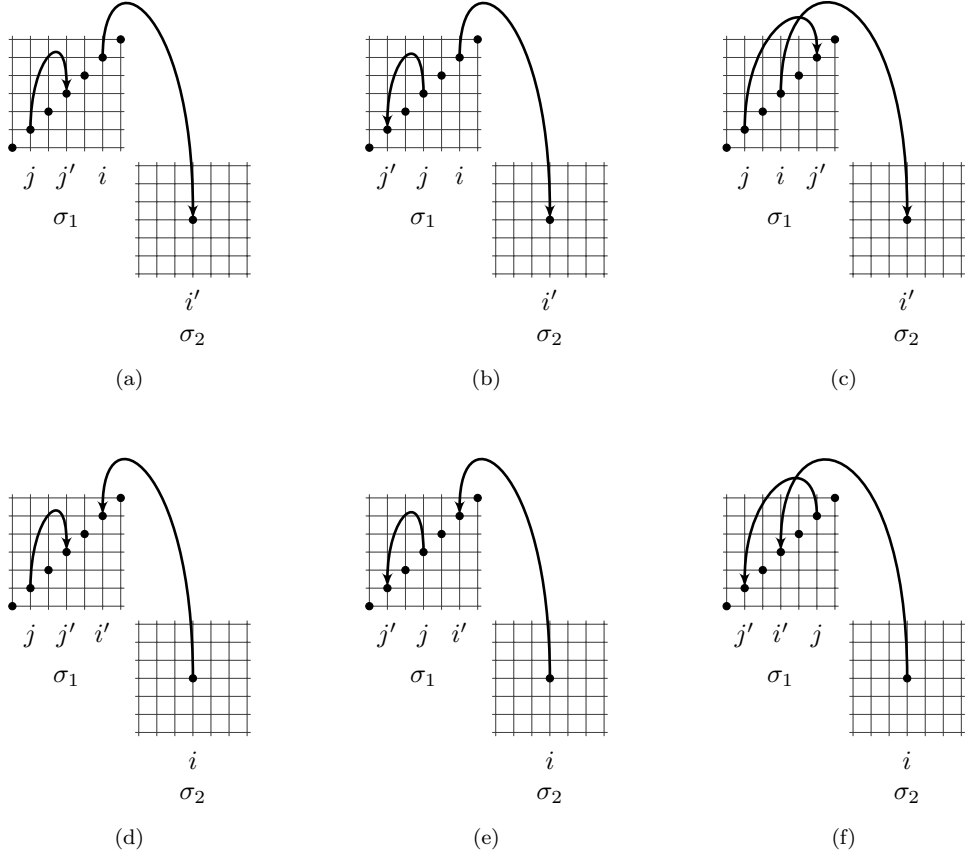
\begin{figure}[ht!]
    \centering
    \subfigure[]{%
      \begin{tikzpicture}[scale=0.24,label/.style={anchor=base}]
        \draw[step=1cm,black!75,ultra thin,fill=black!50]
        (-0.2,6.8) grid (6.2,13.2);
        \foreach \x/\y in {0/7,1/8,2/9,3/10,4/11,5/12,6/13} {
          \draw [fill=black] (\x,\y) circle (0.2);
        }
        \draw[step=1cm,black!75,ultra thin,fill=black!50]
        (6.8,-0.2) grid (13.2,6.2);
        \draw [fill=black] (10,3) circle (0.2);
        \node [label] (a) at (5,5) {$i$};
        \node [label] (ap) at (10,-2) {$i'$};
        \node [label] (b) at (1,5) {$j$};
        \node [label] (bp) at (3,5) {$j'$};
        \node (sigma1) at (3,3) {$\sigma_1$};
        \node (sigma2) at (10,-3.5) {$\sigma_2$};
        \draw [black,line width=1pt,->,>=latex']
        (1,8) .. controls +(0,4) and +(0,4) .. (3,10);
        \draw [black,line width=1pt,->,>=latex']
        (5,12) .. controls +(0,6) and +(0,12) .. (10,3);
      \end{tikzpicture}
      \label{subfig:precedence RR}
    }
    \qquad\qquad
    \subfigure[]{%
      \begin{tikzpicture}[scale=0.24,label/.style={anchor=base}]
        \draw[step=1cm,black!75,ultra thin,fill=black!50]
        (-0.2,6.8) grid (6.2,13.2);
        \foreach \x/\y in {0/7,1/8,2/9,3/10,4/11,5/12,6/13} {
          \draw [fill=black] (\x,\y) circle (0.2);
        }
        \draw[step=1cm,black!75,ultra thin,fill=black!50]
        (6.8,-0.2) grid (13.2,6.2);
        \draw [fill=black] (10,3) circle (0.2);
        \node [label] (a) at (5,5) {$i$};
        \node [label] (ap) at (10,-2) {$i'$};
        \node [label] (bp) at (1,5) {$j'$};
        \node [label] (b) at (3,5) {$j$};
        \node (sigma1) at (3,3) {$\sigma_1$};
        \node (sigma2) at (10,-3.5) {$\sigma_2$};
        \draw [black,line width=1pt,<-,>=latex']
        (1,8) .. controls +(0,4) and +(0,4) .. (3,10);
        \draw [black,line width=1pt,->,>=latex']
        (5,12) .. controls +(0,6) and +(0,12) .. (10,3);
      \end{tikzpicture}
      \label{subfig:precedence LR}
    }
    \qquad\qquad
    \subfigure[]{%
      \begin{tikzpicture}[scale=0.24,label/.style={anchor=base}]
        \draw[step=1cm,black!75,ultra thin,fill=black!50]
        (-0.2,6.8) grid (6.2,13.2);
        \foreach \x/\y in {0/7,1/8,2/9,3/10,4/11,5/12,6/13} {
          \draw [fill=black] (\x,\y) circle (0.2);
        }
        \draw[step=1cm,black!75,ultra thin,fill=black!50]
        (6.8,-0.2) grid (13.2,6.2);
        \draw [fill=black] (10,3) circle (0.2);
        \node [label] (a) at (3,5) {$i$};
        \node [label] (ap) at (10,-2) {$i'$};
        \node [label] (b) at (1,5) {$j$};
        \node [label] (bp) at (5,5) {$j'$};
        \node (sigma1) at (3,3) {$\sigma_1$};
        \node (sigma2) at (10,-3.5) {$\sigma_2$};
        \draw [black,line width=1pt,->,>=latex']
        (1,8) .. controls +(0,6) and +(0,4) .. (5,12);
        \draw [black,line width=1pt,->,>=latex']
        (3,10) .. controls +(0,8) and +(0,14) .. (10,3);
      \end{tikzpicture}
      \label{subfig:crossing RR}
    }
    \qquad\qquad
    \subfigure[]{%
      \begin{tikzpicture}[scale=0.24,label/.style={anchor=base}]
        \draw[step=1cm,black!75,ultra thin,fill=black!50]
        (-0.2,6.8) grid (6.2,13.2);
        \foreach \x/\y in {0/7,1/8,2/9,3/10,4/11,5/12,6/13} {
          \draw [fill=black] (\x,\y) circle (0.2);
        }
        \draw[step=1cm,black!75,ultra thin,fill=black!50]
        (6.8,-0.2) grid (13.2,6.2);
        \draw [fill=black] (10,3) circle (0.2);
        \node [label] (ap) at (5,5) {$i'$};
        \node [label] (a) at (10,-2) {$i$};
        \node [label] (b) at (1,5) {$j$};
        \node [label] (bp) at (3,5) {$j'$};
        \node (sigma1) at (3,3) {$\sigma_1$};
        \node (sigma2) at (10,-3.5) {$\sigma_2$};
        \draw [black,line width=1pt,->,>=latex']
        (1,8) .. controls +(0,4) and +(0,4) .. (3,10);
        \draw [black,line width=1pt,<-,>=latex']
        (5,12) .. controls +(0,6) and +(0,12) .. (10,3);
      \end{tikzpicture}
      \label{subfig:precedence RL}
    }
    \qquad\qquad
    \subfigure[]{%
      \begin{tikzpicture}[scale=0.24,label/.style={anchor=base}]
        \draw[step=1cm,black!75,ultra thin,fill=black!50]
        (-0.2,6.8) grid (6.2,13.2);
        \foreach \x/\y in {0/7,1/8,2/9,3/10,4/11,5/12,6/13} {
          \draw [fill=black] (\x,\y) circle (0.2);
        }
        \draw[step=1cm,black!75,ultra thin,fill=black!50]
        (6.8,-0.2) grid (13.2,6.2);
        \draw [fill=black] (10,3) circle (0.2);
        \node [label] (ap) at (5,5) {$i'$};
        \node [label] (a) at (10,-2) {$i$};
        \node [label] (bp) at (1,5) {$j'$};
        \node [label] (b) at (3,5) {$j$};
        \node (sigma1) at (3,3) {$\sigma_1$};
        \node (sigma2) at (10,-3.5) {$\sigma_2$};
        \draw [black,line width=1pt,<-,>=latex']
        (1,8) .. controls +(0,4) and +(0,4) .. (3,10);
        \draw [black,line width=1pt,<-,>=latex']
        (5,12) .. controls +(0,6) and +(0,12) .. (10,3);
      \end{tikzpicture}
      \label{subfig:precedence LL}
    }
    \qquad\qquad
    \subfigure[]{%
      \begin{tikzpicture}[scale=0.24,label/.style={anchor=base}]
        \draw[step=1cm,black!75,ultra thin,fill=black!50]
        (-0.2,6.8) grid (6.2,13.2);
        \foreach \x/\y in {0/7,1/8,2/9,3/10,4/11,5/12,6/13} {
          \draw [fill=black] (\x,\y) circle (0.2);
        }
        \draw[step=1cm,black!75,ultra thin,fill=black!50]
        (6.8,-0.2) grid (13.2,6.2);
        \draw [fill=black] (10,3) circle (0.2);
        \node [label] (ap) at (3,5) {$i'$};
        \node [label] (a) at (10,-2) {$i$};
        \node [label] (bp) at (1,5) {$j'$};
        \node [label] (b) at (5,5) {$j$};
        \node (sigma1) at (3,3) {$\sigma_1$};
        \node (sigma2) at (10,-3.5) {$\sigma_2$};
        \draw [black,line width=1pt,<-,>=latex']
        (1,8) .. controls +(0,6) and +(0,4) .. (5,12);
        \draw [black,line width=1pt,<-,>=latex']
        (3,10) .. controls +(0,8) and +(0,14) .. (10,3);
      \end{tikzpicture}
      \label{subfig:crossing LL}
    }
    \caption{\label{fig:sigma_1 increasing left above sigma_2}%
      Illustration of Lemma~\ref{lemma:sigma_1 increasing left above sigma_2}.}
  \end{figure}

  \begin{Lemma}
    \label{lemma:sigma_1 increasing left above sigma_2}
    Let $\pi = \pi_1 \, \sigma_1 \, \pi_2 \, \sigma_2 \, \pi_3$
    be a permutation where $\sigma_1$ is an increasing pattern
    and $\sigma_2$ is (right) below $\sigma_1$,
    and $\DMATCHING$ be a directed perfect matching on $\pi$
    that has Properties $\mathbf{P_1}$ and $\mathbf{P_2}$.
    If $\DMATCHING$ contains a $(\sigma_1, \sigma_2)$-arc
    or a $(\sigma_2, \sigma_1)$-arc, then it does not contain a
    $(\sigma_1, \sigma_1)$-arc.
  \end{Lemma}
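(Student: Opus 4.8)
Proof plan (forward-looking sketch).

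The plan is to argue by contradiction. Suppose $\DMATCHING$ contains an arc $e$ with one endpoint $c$ in $\sigma_1$ and one endpoint $d$ in $\sigma_2$, and simultaneously a $(\sigma_1,\sigma_1)$-arc $f$ with endpoints $a$ and $b$, the position of $a$ being smaller than that of $b$. Since $\DMATCHING$ is a matching, the four vertices $a$, $b$, $c$, $d$ are pairwise distinct, so $c$ occupies a position that is strictly to the left of $a$, strictly between $a$ and $b$, or strictly to the right of $b$. Two facts will be used in every case. First, since $\sigma_2$ follows $\sigma_1$ in $\pi$ and is below it, $d$ lies to the right of each of $a$, $b$, $c$ and carries a value smaller than those of $a$, $b$, $c$. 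Second, since $\sigma_1$ is increasing, the three $\sigma_1$-vertices $a$, $b$, $c$ carry values in the same order as their positions. Combining the two, whichever of the three sub-cases occurs, the subword of $\pi$ on the four positions, read from left to right, is order-isomorphic to the permutation $2341$ (the three $\sigma_1$-vertices receiving ranks $2$, $3$, $4$ and $d$ receiving rank $1$); what distinguishes the sub-cases is only how the arcs $e$ and $f$ sit on the four positions.

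I would then run through the three sub-cases. If $c$ is to the left of $a$, the positions of $e$ strictly enclose those of $f$, so $\DMATCHING$ has an occurrence of one of the four inclusion patterns of $\mathcal{P}_{\mathrm{cont}} \subseteq \mathcal{P}_1$, contradicting Property~$\mathbf{P_1}$. If $c$ is strictly between $a$ and $b$, then $f$ joins the first and third of the four positions while $e$ joins the second and fourth, so the arcs $e$ and $f$ cross: when they point in opposite directions this is an occurrence of $\CrossingLR$ or $\CrossingRL$, forbidden by Property~$\mathbf{P_1}$, and when they point in the same direction it is an occurrence of $\LabeledCrossingRR{.08}{2}{3}{4}{1}$ or $\LabeledCrossingLL{.08}{2}{3}{4}{1}$, forbidden by Lemma~\ref{lemma:Forbidden patterns crossing} via Property~$\mathbf{P_2}$. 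Finally, if $c$ is to the right of $b$, then $f$ joins the first two positions and $e$ the last two, a precedence configuration; whatever the orientations of $e$ and $f$, this is an occurrence of one of $\LabeledPrecedenceRR{.08}{2}{3}{4}{1}$, $\LabeledPrecedenceRL{.08}{2}{3}{4}{1}$, $\LabeledPrecedenceLR{.08}{2}{3}{4}{1}$, $\LabeledPrecedenceLL{.08}{2}{3}{4}{1}$, all forbidden by Lemma~\ref{lemma:Forbidden patterns precedence} via Property~$\mathbf{P_2}$. In each sub-case we reach a contradiction, so no $(\sigma_1,\sigma_1)$-arc can coexist with an arc between $\sigma_1$ and $\sigma_2$.

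I do not expect a genuine obstacle: once the single $\sigma_1$-endpoint $c$ of the crossing arc is located among the positions of $\sigma_1$, the induced size-$4$ pattern and the shape of the two arcs on it are immediate. The only point that requires care is keeping track of the orientations of $e$ and $f$, so that the pattern one lands on is indeed among those excluded by Property~$\mathbf{P_1}$ or by Lemmas~\ref{lemma:Forbidden patterns crossing} and~\ref{lemma:Forbidden patterns precedence}; the figures accompanying those lemmas make this transparent.
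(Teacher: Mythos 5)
Your proof is correct and follows essentially the same route as the paper: assume both arcs exist, use Property $\mathbf{P_1}$ to rule out the containment and opposite-direction crossing configurations, and observe that the remaining same-direction crossing and precedence configurations all induce the labeled pattern $2341$, which is excluded by the forbidden-crossing and forbidden-precedence lemmas via Property $\mathbf{P_2}$. Your explicit three-way case split on the position of the $\sigma_1$-endpoint of the mixed arc is just a more detailed spelling-out of the paper's terse enumeration of the six surviving labeled patterns.
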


  \begin{proof}[Proof of Lemma~\ref{lemma:sigma_1 increasing left above sigma_2}]
    Suppose, aiming at a contradiction, that
    $\DMATCHING$ contains a $(\sigma_1, \sigma_2)$-arc
    or a $(\sigma_2, \sigma_1)$-arc $(i, i')$, and a
    $(\sigma_1, \sigma_1)$-arc $(j, j')$.
    Since $\DMATCHING$ has Property~$\mathbf{P_1}$,
    it avoids the unlabeled patterns of $\mathcal{P}_1$.
    Therefore, $\DMATCHING$ contains one of the following labeled patterns:
    $\LabeledPrecedenceRR{.08}{2}{3}{4}{1}$,
    $\LabeledPrecedenceLR{.08}{2}{3}{4}{1}$,
    $\LabeledCrossingRR{.08}{2}{3}{4}{1}$,
    $\LabeledPrecedenceRL{.08}{2}{3}{4}{1}$,
    $\LabeledPrecedenceLL{.08}{2}{3}{4}{1}$ and
    $\LabeledCrossingLL{.08}{2}{3}{4}{1}$
    (see Figure~\ref{fig:sigma_1 increasing left above sigma_2}).
    Hence, according to Lemmas~\ref{lemma:Forbidden patterns crossing}
    and~\ref{lemma:Forbidden patterns precedence},
    $\DMATCHING$ cannot have Property~$\mathbf{P_2}$.
    This is the sought-after contradiction.
  \end{proof}

  \begin{Lemma} \label{lemma:reverse directed matching}
    Let $\pi$ be a permutation and $\DMATCHING$ be a directed perfect
    matching on $\pi$. If $\DMATCHING$ has properties $\mathbf{P_1}$
    and $\mathbf{P_2}$, then so does the directed perfect matching
    $\DMATCHING^\mathrm{r}$ obtained from $\DMATCHING$ by reversing
    each of its arcs.
  \end{Lemma}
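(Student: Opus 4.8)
The plan is to verify the three constituents of the claim in turn: that $\DMATCHING^\mathrm{r}$ is a directed perfect matching on $\pi$, that it has Property~$\mathbf{P_1}$, and that it has Property~$\mathbf{P_2}$. The first is immediate. Reversing every arc does not change the underlying set of edges, so any two arcs of $\DMATCHING^\mathrm{r}$ remain independent; moreover it turns every source of $\DMATCHING$ into a sink of $\DMATCHING^\mathrm{r}$ and every sink into a source, so every vertex of $\DMATCHING^\mathrm{r}$ is still a source or a sink. Hence $\DMATCHING^\mathrm{r} = (\mathcal{G}^\mathrm{r}, \pi)$, where $\mathcal{G}^\mathrm{r}$ has arc set $\{(i', i) : (i, i') \text{ is an arc of } \mathcal{G}\}$, is again a directed perfect matching on $\pi$.

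For Property~$\mathbf{P_1}$, the key preliminary observation I would record is that the set $\mathcal{P}_1$ of forbidden unlabeled patterns is stable under the operation of reversing all arcs of a pattern: inside $\mathcal{P}_{\mathrm{cont}}$ this operation exchanges $\InclusionLL$ with $\InclusionRR$ and $\InclusionLR$ with $\InclusionRL$, while among the two crossing patterns belonging to $\mathcal{P}_1$ it exchanges $\CrossingLR$ with $\CrossingRL$. Write $\mathcal{U}^\mathrm{r}$ for the pattern obtained from an unlabeled pattern $\mathcal{U}$ by reversing all of its arcs; the observation says $\mathcal{U} \in \mathcal{P}_1$ implies $\mathcal{U}^\mathrm{r} \in \mathcal{P}_1$. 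Suppose now, for contradiction, that $\DMATCHING^\mathrm{r}$ contains an unlabeled occurrence of some $\mathcal{U} \in \mathcal{P}_1$, witnessed by an increasing map $\phi$. By definition of occurrence, for every arc $(i, i')$ of $\mathcal{U}$ the pair $(\phi(i), \phi(i'))$ is an arc of $\mathcal{G}^\mathrm{r}$, i.e.\ $(\phi(i'), \phi(i))$ is an arc of $\mathcal{G}$; thus the same increasing map $\phi$ witnesses an unlabeled occurrence of $\mathcal{U}^\mathrm{r}$ in $\DMATCHING$. Since $\mathcal{U}^\mathrm{r} \in \mathcal{P}_1$, this contradicts the assumption that $\DMATCHING$ has Property~$\mathbf{P_1}$. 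Hence $\DMATCHING^\mathrm{r}$ avoids $\mathcal{P}_1$.

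For Property~$\mathbf{P_2}$, I would argue directly. The arcs of $\DMATCHING^\mathrm{r}$ are exactly the pairs $(i', i)$ as $(i, i')$ runs over the arcs of $\DMATCHING$. Take two distinct arcs $(i', i)$ and $(j', j)$ of $\DMATCHING^\mathrm{r}$, arising from distinct arcs $(i, i')$ and $(j, j')$ of $\DMATCHING$. Property~$\mathbf{P_2}$ for $\DMATCHING^\mathrm{r}$ requires that $\pi(i') < \pi(j')$ hold if and only if $\pi(i) < \pi(j)$, which is precisely the biconditional asserted by Property~$\mathbf{P_2}$ of $\DMATCHING$ for the arcs $(i, i')$ and $(j, j')$, the equivalence being symmetric in its two sides. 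Hence $\DMATCHING^\mathrm{r}$ also has Property~$\mathbf{P_2}$, and the lemma follows.

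I expect no serious obstacle here; the only step demanding attention is the stability claim for $\mathcal{P}_1$, which must be checked carefully against the precise orientation conventions of the pattern pictures --- in particular one must confirm that reversing all arcs never produces the same-direction crossing patterns $\CrossingLL$ or $\CrossingRR$ (which are \emph{not} in $\mathcal{P}_1$) out of a pattern of $\mathcal{P}_1$, and that no containment pattern is sent outside $\mathcal{P}_{\mathrm{cont}}$. Once that bookkeeping is done, the transfer of pattern-avoidance along arc reversal and the symmetry of the equivalence in Property~$\mathbf{P_2}$ make the rest routine.
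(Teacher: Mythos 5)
Your proof is correct and follows essentially the same route as the paper: Property $\mathbf{P_2}$ transfers because the defining biconditional is symmetric in sources and sinks, and Property $\mathbf{P_1}$ transfers because the set $\mathcal{P}_1$ is closed under reversing all arcs (with the pairings $\InclusionLL \leftrightarrow \InclusionRR$, $\InclusionLR \leftrightarrow \InclusionRL$, $\CrossingLR \leftrightarrow \CrossingRL$ you checked). You merely spell out in detail what the paper states in two sentences, so no further comment is needed.
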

  \begin{proof}[Proof of Lemma~\ref{lemma:reverse directed matching}]
    It is immediate that $\DMATCHING^\mathrm{r}$ satisfies
    Property $\mathbf{P_2}$, since, for any two arcs $(i, i')$ and
    $(j, j')$ of $\DMATCHING$, we have $\pi(i) < \pi(j)$ if and only if
    $\pi(i') < \pi(j')$. As for Property~$\mathbf{P_1}$, it is enough to
    observe that the set of unlabeled patterns $\mathcal{P}_1$ is closed
    by arc reversals.
  \end{proof}

  A direct interpretation of Lemma~\ref{lemma:reverse directed matching}
  is that, if a permutation $\pi$ is a square, one can exchange the roles
  of the two order-isomorphic patterns that cover~$\pi$. This can also
  be seen as a consequence of
  Proposition~\ref{prop:shuffle_associative_commutative} about the
  commutativity of $\SHUFFLE$. Besides, an immediate but useful
  consequence of Lemma~\ref{lemma:reverse directed matching} reads as
  follows.

  \begin{Corollary}
    \label{corollary:reverse directed matching}
    Let $\pi$ be a permutation and $i$ and $i'$ be two distinct indexes
    of $\pi$. There exists a directed perfect matching on $\pi$
    with Properties $\mathbf{P_1}$ and $\mathbf{P_2}$ that contains the
    arc $(i, i')$ if and only of there exists a directed perfect
    matching on $\pi$ with Properties $\mathbf{P_1}$ and $\mathbf{P_2}$
    that contains the arc $(i', i)$.
  \end{Corollary}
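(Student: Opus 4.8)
The plan is to deduce this immediately from Lemma~\ref{lemma:reverse directed matching}, which already packages the only real content. First I would assume, without loss of generality, that there exists a directed perfect matching $\DMATCHING$ on $\pi$ satisfying Properties $\mathbf{P_1}$ and $\mathbf{P_2}$ and containing the arc $(i, i')$; the reverse implication will follow by symmetry, so there is no loss in treating only this case.

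Next I would form $\DMATCHING^{\mathrm{r}}$, the directed perfect matching obtained from $\DMATCHING$ by reversing each of its arcs. By Lemma~\ref{lemma:reverse directed matching}, $\DMATCHING^{\mathrm{r}}$ is again a directed perfect matching on $\pi$ satisfying Properties $\mathbf{P_1}$ and $\mathbf{P_2}$. Moreover, since $(i, i')$ is an arc of $\DMATCHING$, its reversal $(i', i)$ is by construction an arc of $\DMATCHING^{\mathrm{r}}$. This exhibits the required matching with Properties $\mathbf{P_1}$ and $\mathbf{P_2}$ containing the arc $(i', i)$, establishing one direction of the equivalence.

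Finally, the converse implication is obtained by applying the very same argument to a matching containing $(i', i)$; equivalently, one observes that arc reversal is an involution, so $\left(\DMATCHING^{\mathrm{r}}\right)^{\mathrm{r}} = \DMATCHING$, and the two directions are genuinely symmetric. I expect no real obstacle here: the statement is a one-step corollary, and the only facts it uses — that $\mathcal{P}_1$ is closed under arc reversals and that the defining condition of $\mathbf{P_2}$ is symmetric in the two endpoints of each arc — are exactly what Lemma~\ref{lemma:reverse directed matching} records.
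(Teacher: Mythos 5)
Your proposal is correct and matches the paper's reasoning: the paper presents this corollary as an immediate consequence of Lemma~\ref{lemma:reverse directed matching}, obtained exactly as you do by reversing every arc of the given matching (and noting the symmetry/involutivity of arc reversal for the converse). Nothing is missing.
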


  Having disposed of these preliminary observations, we now turn to
  stating and proving the \NP-hardness of the targeted problem.

  \begin{Proposition} \label{proposition:hardness}
    Deciding whether a permutation is a square is \NP-complete.
  \end{Proposition}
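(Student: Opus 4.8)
The plan is to prove membership in $\NP$ and then to reduce the pattern involvement problem, which is $\NP$-complete by~\cite{Bose:Buss:Lubiw:1998}, to the recognition of square permutations. For membership, a certificate for a permutation $\pi \in S_{2n}$ being a square is a partition $[2n] = P_1 \sqcup P_2$ with $|P_1| = |P_2| = n$; one checks in polynomial time that $\STD(\pi_{|P_1}) = \STD(\pi_{|P_2})$, and by definition this certifies that $\pi$ is a square (equivalently, by Lemma~\ref{lemma:matching}, the certificate can be taken to be a directed perfect matching on $\pi$ with properties $\mathbf{P_1}$ and $\mathbf{P_2}$).

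For the hardness part, I would fix an instance of pattern involvement, that is a pattern $\sigma \in S_k$ and a text $\pi \in S_n$, and construct in polynomial time a permutation $\Pi = \Pi(\sigma, \pi)$ with the property that $\Pi$ is a square if and only if $\sigma$ occurs in $\pi$. The permutation $\Pi$ would be assembled as a concatenation of a bounded number of blocks: a block order-isomorphic to $\sigma$, a block order-isomorphic to $\pi$ (or to its complement $\bar\pi$, using Proposition~\ref{prop:square_stability} to pass freely between the two), and several long monotone padding blocks, namely increasing runs $\nearrow_N$ and decreasing runs $\searrow_N$. The value ranges of all blocks would be shifted so that each padding block lies entirely above, or entirely below, each other block exactly in the configurations required to apply Corollary~\ref{corollary:at most one arc monotone} and Lemma~\ref{lemma:sigma_1 increasing left above sigma_2}, and the lengths would be chosen so that $|\Pi|$ is even (a permutation of odd size is never a square) and so that, in the two copies of any would-be square root, the monotone blocks are long enough that the only admissible pairing of the padding is the intended parallel one.

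By Lemma~\ref{lemma:matching} it is enough to analyze directed perfect matchings $\DMATCHING$ on $\Pi$ satisfying $\mathbf{P_1}$ and $\mathbf{P_2}$. The completeness direction should be direct: given an occurrence of $\sigma$ in $\pi$, I would build the matching whose arcs join the $\sigma$-block to the corresponding positions inside the $\pi$-block and join each source padding block to its designated sink padding block, and then verify by inspection that it avoids $\mathcal{P}_1$ and satisfies $\mathbf{P_2}$; hence $\Pi$ is a square. For the soundness direction, starting from an arbitrary matching $\DMATCHING$ with $\mathbf{P_1}$ and $\mathbf{P_2}$, I would use Corollary~\ref{corollary:at most one arc monotone} to bound the number of arcs between badly placed monotone blocks, Lemma~\ref{lemma:sigma_1 increasing left above sigma_2} to prevent a monotone block from being matched partly to itself once it emits an arc elsewhere, and Lemmas~\ref{lemma:Forbidden patterns crossing} and~\ref{lemma:Forbidden patterns precedence} (together with Lemma~\ref{lemma:reverse directed matching} and Corollary~\ref{corollary:reverse directed matching} to dispose of arc orientations) to show that the padding must be routed exactly as in the completeness construction. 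This pins down the arcs incident to the $\sigma$-block: they are forced to land inside the $\pi$-block, and by $\mathbf{P_2}$ they do so in an order-isomorphic fashion, so they trace an occurrence of $\sigma$ in $\pi$. Combining the two directions and observing that $\Pi$ is built in polynomial time gives $\NP$-hardness, and with the membership argument, $\NP$-completeness.

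I expect the crux to be the design of the gadget: one has to choose the number, lengths, heights and orientations of the padding blocks so that, under $\mathbf{P_1}$ and $\mathbf{P_2}$, (i) the padding is completely rigid, with no slack left anywhere, and (ii) the single remaining degree of freedom in the matching is precisely the choice of an embedding of $\sigma$ into $\pi$. Verifying (i) and (ii) is exactly the point at which all of the preparatory lemmas are consumed, and the soundness case analysis it entails is where the real work lies; by contrast, the $\NP$ membership and the completeness direction are routine.
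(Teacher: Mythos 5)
Your overall strategy coincides with the paper's (membership via the partition/matching certificate, hardness by reduction from pattern involvement, analysis of matchings through Lemma~\ref{lemma:matching}, Corollary~\ref{corollary:at most one arc monotone}, Lemma~\ref{lemma:sigma_1 increasing left above sigma_2} and the forbidden-pattern lemmas), but there is a genuine gap: the reduction gadget, which is the entire content of the hardness proof, is never actually constructed, and the blueprint you do sketch would not work. You propose one block order-isomorphic to $\sigma$, one block order-isomorphic to $\pi$, and monotone padding. In any square decomposition of such a permutation, the $\sigma$-block would indeed be forced to match into an occurrence of $\sigma$ inside the $\pi$-block, but then the remaining $n-k$ positions of the $\pi$-block, which form an \emph{arbitrary} pattern of size $n-k$, must themselves be matched to an order-isomorphic subword elsewhere; the only material left is monotone padding, which cannot be order-isomorphic to an arbitrary leftover pattern. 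So with one copy of each the equivalence ``$\Pi$ square $\iff$ $\sigma$ occurs in $\pi$'' fails in general, and no choice of lengths or heights of monotone blocks repairs this.

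The paper's key idea, which your proposal is missing, is a duplication: the target permutation $\mu$ contains \emph{two} copies of $\sigma$ (the framed $\sigma'=((k+1)\,\sigma\,(k+2))$ shifted, and a plain $\sigma''$) and \emph{two} copies of $\pi$ (the framed $\pi'$ and a plain $\pi''$), arranged so that any admissible matching is forced into the chain $\sigma'\to\pi'\to\pi''\to\sigma''$: the $k+2$ arcs $\sigma'\to\pi'$ select an occurrence of $\sigma$ in $\pi$, the $n-k$ arcs $\pi'\to\pi''$ absorb exactly the unused positions of $\pi'$, and the $k$ arcs $\pi''\to\sigma''$ absorb the occurrence inside $\pi''$. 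Forcing this routing is what the eight monotone blocks $\nu_1,\nu'_1,\dots,\nu_4,\nu'_4$, their odd lengths $N_1>N_2>N_3>N_4$ (odd so a block cannot be matched entirely to itself), and their precise vertical placement are for, and establishing it occupies the long sequence of structural claims in the paper. Your ``(i) rigidity of the padding, (ii) single degree of freedom'' desiderata are the right goals, but stating them is not a proof; without an explicit gadget realizing them — in particular without the $\sigma',\pi',\pi'',\sigma''$ doubling or an equivalent device — the reduction does not exist yet.
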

  \begin{proof}[Proof of Proposition~\ref{proposition:hardness}]
    This decision problem is certainly in $\NP$. To prove that it
    is \NP-complete, we propose a reduction from the pattern involvement problem
    which is known to be \NP-complete \cite{Bose:Buss:Lubiw:1998}.

    Let $\pi \in S_n$ and $\sigma \in S_k$ be two permutations.
    Let us set
    \begin{equation}\begin{split}
        N_4 &= 2(2n + k + 2) + 3, \\
        N_3 &= 2(2N_4 + 2n + 2k + 4) + 3, \\
        N_2 &= 2(2N_3 + 2N_4 + 2n + 2k + 4) + 3, \\
        N_1 &= 2(2N_2 + 2N_3 + 2N_4 + 2n + 2k + 4) + 3.
    \end{split}\end{equation}
    Notice that $N_1, N_2, N_3$ and $N_4$ are polynomials in $n$.
    The crucial properties are that
    \begin{enumerate}[label={\it (\roman*)},fullwidth]
    \item the integers $N_1, N_2, N_3$ and $N_4$ are odd;
    \item the relation
      \begin{equation}
        N_i > \left(\sum_{i < j \leq 4} 2N_j\right) + 2n + 2k + 4
      \end{equation}
      holds for every $i \in [k]$.
    \end{enumerate}

    To construct a new permutation $\mu$ from $\pi$ and $\sigma$,
    we now turn to defining various gadgets (sequences of integers)
    that will act as building blocks.
    Recall that, for any permutation $p = p_1 \, p_2 \cdots \, p_x$ of $[x]$
    and any non-negative integer $y$,
    $p \; [y]$ stand for the sequence
    $(p_1 + y)\, (p + y) \, \cdots \, (p_x + y)$).
    Define
    \begin{equation}\begin{split}
        \sigma'  & =
        ((k+1) \; \sigma \; (k+2)) \; [2N_2 + N_4 + 2n + k + 2], \\
        \pi'     & =
        ((n+1) \; \pi \; (n+2)) \; [2N_2 + N_4 + n + k + 2], \\
        \sigma'' & = \sigma \; [2N_2 + N_4], \\
        \pi''    & = \pi \; [2N_2 + N_4 + k], \\
        \nu_1    & =
        \nearrow_{N_1} \; [2N_2 + 2N_3 + 2N_4 + 2n + 2k + 4], \\
        \nu'_1   & =
        \nearrow_{N_1} \; [N_1 + 2N_2 + 2N_3 + 2N_4 + 2n + 2k + 4], \\
        \nu_2    & = \searrow_{N_2} \; [N_2], \\
        \nu'_2   & = \searrow_{N_2}, \\
        \nu_3    & = \nearrow_{N_3} \; [2N_2 + 2N_4 + 2n + 2k + 4], \\
        \nu'_3   & =
        \nearrow_{N_3} \; [2N_2 + N_3 + 2N_4 + 2n + 2k + 4], \\
        \nu_4    & = \searrow_{N_4} \; [2N_2 + N_4 + 2n + 2k + 4], \\
        \nu'_4   & = \searrow_{N_4} \; [2N_2]\text{.}
    \end{split}\end{equation}
    We are now in position to define our target permutation $\mu$
    (see Figure~\ref{fig:reduction} for an illustration) as
    \begin{equation}
      \mu = \nu_1 \; \nu_2 \; \nu'_1 \; \nu_3 \; \sigma'
      \; \nu_4 \; \nu'_2 \; \nu'_3 \; \pi' \; \nu'_4
      \; \pi'' \; \sigma''.
    \end{equation}
    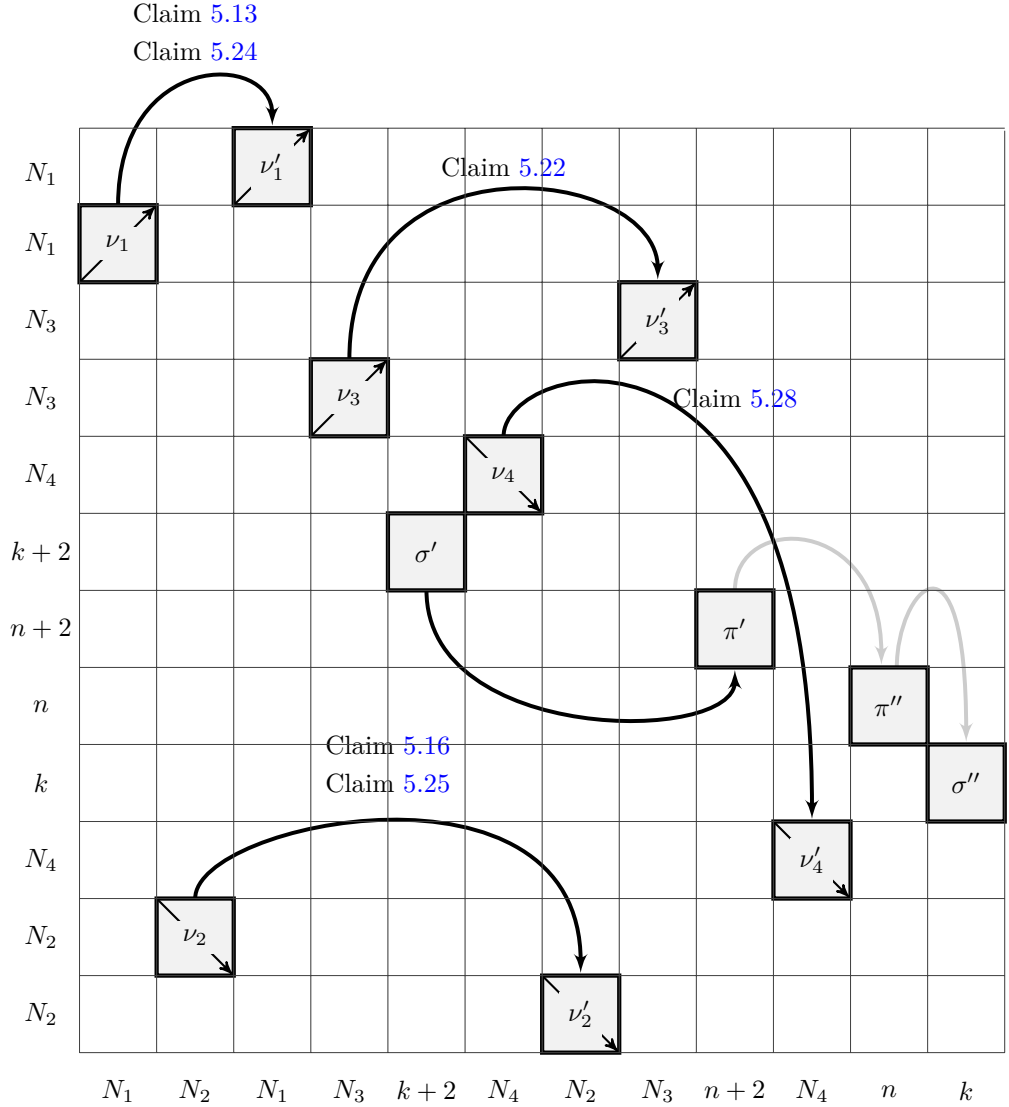
\begin{figure}[ht!]
      \centering
      \begin{tikzpicture}[scale=1.025,>=stealth',shorten >=1pt,
          main node/.style={align=center},
          cell/.style={draw,ultra thick,fill=black!5},
          structure link/.style={line width=1.5pt},
          pattern link/.style={line width=1.5pt,black!20},
          monotone/.style={->,thick}]
        \draw [pattern link,->,>=latex'] (8.5,6) .. controls +(0,1)
        and +(0,2) .. (10.4,5);
        \draw [pattern link,->,>=latex'] (10.6,5) .. controls +(0,1)
        and +(0,3) .. (11.5,4);
        \draw [structure link,->,>=latex'] (0.5,11) .. controls +(0,2)
        and +(0,1) .. (2.5,12);
        \node (claim arc 1) at (1.5,13.5)
              {Claim~\ref{claim:(nu_1, nu'_1)-arc}};
              \node (claim arc 1) at (1.5,13)
                    {Claim~\ref{claim:all (nu_1, nu'_1)-arcs}};
                    \draw [structure link,->,>=latex'] (1.5,2) .. controls +(0,1)
                    and +(0,3) .. (6.5,1);
                    \node (claim arc 2) at (4,4)
                          {Claim~\ref{claim:(nu_2, nu'_2)-arc}};
                          \node (claim arc 2) at (4,3.5)
                                {Claim~\ref{claim:all (nu_2, nu'_2)-arcs}};
                                \draw [structure link,->,>=latex'] (3.5,9) .. controls +(0,3)
                                and +(0,1.5) .. (7.5,10);
                                \node (claim arc 1) at (5.5,11.5)
                                      {Claim~\ref{claim:at least one (nu_3, nu'_3)-arc}};
                                      \draw [structure link,->,>=latex'] (5.5,8) .. controls +(0,1)
                                      and +(0,7) .. (9.5,3);
                                      \node (claim arc 1) at (8.5,8.5)
                                            {Claim~\ref{claim:at least one (nu_4, nu'_4)-arc}};
                                            \draw [structure link,->,>=latex'] (4.5,6) .. controls +(0,-2)
                                            and +(0,-1) .. (8.5,5);
                                            \draw [cell] (0,10) -- (1,10) -- (1,11) -- (0,11) -- cycle;
                                            \draw [monotone] (0,10) -- ++(1,1)
                                            node [midway,fill=white,fill=black!5] {$\nu_1$};
                                            \draw [cell] (2,11) -- (3,11) -- (3,12) -- (2,12) -- cycle;
                                            \draw [monotone] (2,11) -- ++(1,1)
                                            node [midway,fill=white,fill=black!5] {$\nu'_1$};
                                            \draw [cell] (1,1) -- (2,1) -- (2,2) -- (1,2) -- cycle;
                                            \draw [monotone] (1,2) -- ++(1,-1)
                                            node [midway,fill=white,fill=black!5] {$\nu_2$};
                                            \draw [cell] (6,0) -- (7,0) -- (7,1) -- (6,1) -- cycle;
                                            \draw [monotone] (6,1) -- ++(1,-1)
                                            node [midway,fill=white,fill=black!5] {$\nu'_2$};
                                            \draw [cell] (3,8) -- (4,8) -- (4,9) -- (3,9) -- cycle;
                                            \draw [monotone] (3,8) -- ++(1,1)
                                            node [midway,fill=white,fill=black!5] {$\nu_3$};
                                            \draw [cell] (7,9) -- (8,9) -- (8,10) -- (7,10) -- cycle;
                                            \draw [monotone] (7,9) -- ++(1,1)
                                            node [midway,fill=white,fill=black!5] {$\nu'_3$};
                                            \draw [cell] (5,7) -- (6,7) -- (6,8) -- (5,8) -- cycle;
                                            \draw [monotone] (5,8) -- ++(1,-1)
                                            node [midway,fill=white,fill=black!5] {$\nu_4$};
                                            \draw [cell] (9,2) -- (10,2) -- (10,3) -- (9,3) -- cycle;
                                            \draw [monotone] (9,3) -- ++(1,-1)
                                            node [midway,fill=white,fill=black!5] {$\nu'_4$};
                                            \draw [cell] (4,6) -- (5,6) -- (5,7) -- (4,7) -- cycle;
                                            \node [main node] (sigma1) at (4.5,6.5) {$\sigma'$};
                                            \draw [cell] (8,5) -- (9,5) -- (9,6) -- (8,6) -- cycle;
                                            \node [main node] (pi1) at (8.5,5.5) {$\pi'$};
                                            \draw [cell] (10,4) -- (11,4) -- (11,5) -- (10,5) -- cycle;
                                            \node [main node] (sigma2) at (10.5,4.5) {$\pi''$};
                                            \draw [cell] (11,3) -- (12,3) -- (12,4) -- (11,4) -- cycle;
                                            \node [main node] (pi2) at (11.5,3.5) {$\sigma''$};
                                            \draw[step=1cm,black!75,ultra thin,fill=black!50]
                                            (0,0) grid (12,12);
                                            \foreach \y/\N in
                                                     {0.5/N_2,1.5/N_2,2.5/N_4,3.5/k,4.5/n,5.5/n+2,6.5/k+2,
                                                       7.5/N_4,8.5/N_3,9.5/N_3,10.5/N_1,11.4/N_1} {
                                                       \node at (-0.5,\y) (R\y) {$\N$};
                                                     }
                                                     \foreach \x/\N in
                                                              {0.5/N_1,1.5/N_2,2.5/N_1,3.5/N_3,4.5/k+2,5.5/N_4,
                                                                6.5/N_2,7.5/N_3,8.5/n+2,9.5/N_4,10.5/n,11.5/k} {
                                                                \node at (\x,-0.5) (C\x) {$\N$};
                                                              }
      \end{tikzpicture}
      \caption{\label{fig:reduction}%
        Schematic representation of the permutation $\mu$ used in the
        proof of Proposition~\ref{proposition:hardness}. Black arcs
        denote the presence of at least one arc between two  bunches of
        positions in $\mu$. Grey arcs denote arcs that are only
        considered in the forward direction of the proof.}
    \end{figure}

    It is immediate that $\mu$ can be constructed in polynomial-time in
    $n$ and $k$. We claim that $\sigma$ occurs in $\pi$ if and only if
    there exists a directed perfect matching $\DMATCHING$ on $\mu$
    that has Properties $\mathbf{P_1}$ and $\mathbf{P_2}$ (that is,
    by Lemma~\ref{lemma:matching}, $\mu$ is a square).

    Suppose first that $\sigma$ occurs in $\pi$ and fix any occurrence.
    Construct a directed matching $\DMATCHING$ on $\mu$ as follows
    (all arcs are oriented to the right):
    \begin{enumerate}[label={\it (\arabic*)},fullwidth]
    \item $\DMATCHING$ contains $N_1$ pairwise crossing
      $(\nu_1, \nu'_1)$-arcs.
    \item $\DMATCHING$ contains $N_2$ pairwise crossing
      $(\nu_2, \nu'_2)$-arcs.
    \item $\DMATCHING$ contains $N_3$ pairwise crossing
      $(\nu_3, \nu'_3)$-arcs.
    \item $\DMATCHING$ contains $N_4$ pairwise crossing
      $(\nu_4, \nu'_4)$-arcs.
    \item $\DMATCHING$ contains $k+2$ pairwise crossing
      $(\sigma', \pi')$-arcs as depicted in
      Figure~\ref{fig:subfig:sigma' - pi' - pi'' - sigma''}.
      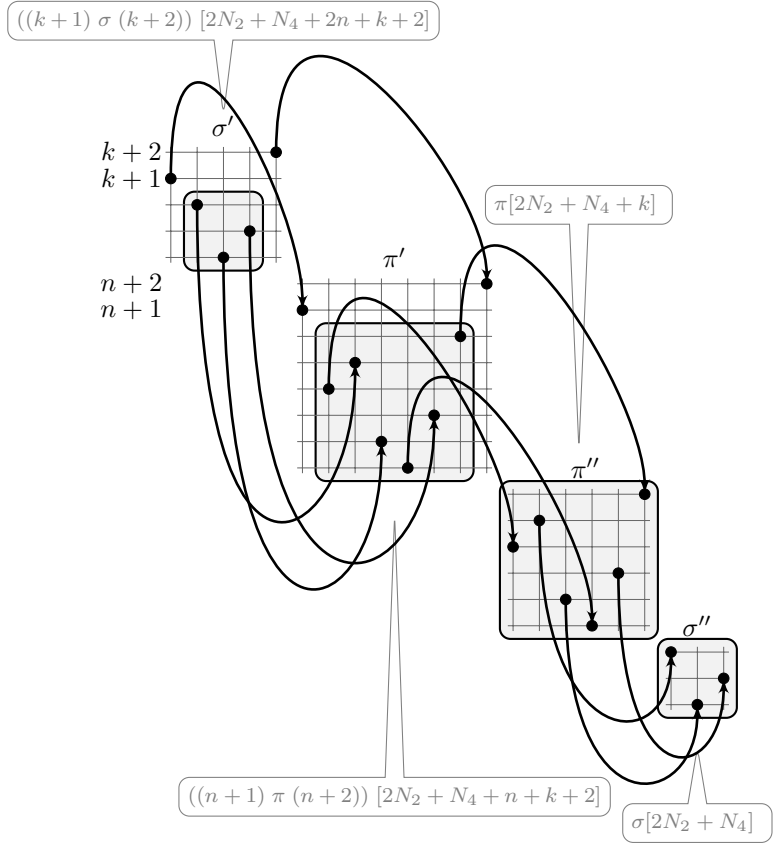
\begin{figure}[ht!]
        \centering
        \begin{tikzpicture}[yscale=.35,xscale=.35,
            label/.style={anchor=base},
            cell/.style={draw,thick,fill=black!5}]
          \draw [cell,rounded corners] (1.5,16.5) -- (4.5,16.5)
          -- (4.5,19.5) -- (1.5,19.5) -- cycle;
          \draw[step=1cm,black!60,ultra thin,fill=black!10]
          (0.8,16.8) grid (5.2,21.2);
          \foreach \x/\y in {1/20,2/19,3/17,4/18,5/21} {
            \draw [fill=black] (\x,\y) circle (0.2);
          }
          \node (pip) at (3,22) {$\sigma'$};
          \node[draw,text width=5.5cm,text justified,color=black!50,
            rounded corners,rectangle callout,font=\footnotesize,
            callout relative pointer={(0.0cm,-1cm)}] at (3,26)
               {%
                 $((k+1) \; \sigma \; (k+2)) \;
                 [2N_2 + N_4 + 2n + k + 2]$
               };%
               \node at (-0.5,20) {$k+1$};
               \node at (-0.5,21) {$k+2$};
               \draw [cell,rounded corners] (6.5,8.5) -- (12.5,8.5)
               -- (12.5,14.5) -- (6.5,14.5) -- cycle;
               \draw[step=1cm,black!60,ultra thin,fill=black!10]
               (5.8,8.8) grid (13.2,16.2);
               \foreach \x/\y in {6/15,7/12,8/13,9/10,10/9,
                 11/11,12/14,13/16} {
                 \draw [fill=black] (\x,\y) circle (0.2);
               }
               \node (pip) at (9.5,17) {$\pi'$};
               \node[draw,text width=5.5cm,text justified,color=black!50,
                 rounded corners, rectangle callout,font=\footnotesize,
                 callout relative pointer={(0.0cm,3.5cm)}] at (9.5,-3.5)
                    {%
                      $((n+1) \; \pi \; (n+2)) \; [2N_2 + N_4 + n + k + 2]$
                    };%
                    \node at (-0.5,15) {$n+1$};
                    \node at (-0.5,16) {$n+2$};
                    \draw [cell,rounded corners] (13.5,2.5) -- (19.5,2.5)
                    -- (19.5,8.5) -- (13.5,8.5) -- cycle;
                    \draw[step=1cm,black!60,ultra thin,fill=black!10]
                    (13.8,2.8) grid (19.2,8.2);
                    \foreach \x/\y in {14/6,15/7,16/4,17/3,18/5,19/8} {
                      \draw [fill=black] (\x,\y) circle (0.2);
                    }
                    \node (pipp) at (16.75,9) {$\pi''$};
                    \node[draw,text width=2.25cm,text justified,color=black!50,
                      rounded corners, rectangle callout,font=\footnotesize,
                      callout relative pointer={(0.0cm,-3cm)}] at (16.5,19)
                         {%
                           $\pi [2N_2 + N_4 + k]$
                         };%
                         \draw [cell,rounded corners] (19.5,-0.5) -- (22.5,-.5)
                         -- (22.5,2.5) -- (19.5,2.5) -- cycle;
                         \draw[step=1cm,black!60,ultra thin,fill=black!10]
                         (19.8,-0.2) grid (22.2,2.2);
                         \foreach \x/\y in {20/2,21/0,22/1} {
                           \draw [fill=black] (\x,\y) circle (0.2);
                         }
                         \node (sigmapp) at (21,3) {$\sigma''$};
                         \node[draw,text width=1.75cm,text justified,color=black!50,
                           rounded corners,rectangle callout,font=\footnotesize,
                           callout relative pointer={(0.0cm,0.75cm)}] at (21,-4.5)
                              {%
                                $\sigma [2N_2 + N_4]$
                              };%
                              \draw [black,line width=1pt,->,>=latex']
                              (1,20) .. controls +(0,9) and +(0,4) .. (6,15);
                              \draw [black,line width=1pt,->,>=latex']
                              (5,21) .. controls +(0,9) and +(0,4) .. (13,16);
                              \draw [black,line width=1pt,->,>=latex']
                              (2,19) .. controls +(0,-17) and +(0,-7) .. (8,13);
                              \draw [black,line width=1pt,->,>=latex']
                              (3,17) .. controls +(0,-17) and +(0,-7) .. (9,10);
                              \draw [black,line width=1pt,->,>=latex']
                              (4,18) .. controls +(0,-17) and +(0,-7) .. (11,11);
                              \draw [black,line width=1pt,->,>=latex']
                              (7,12) .. controls +(0,9) and +(0,4) .. (14,6);
                              \draw [black,line width=1pt,->,>=latex']
                              (10,9) .. controls +(0,9) and +(0,4) .. (17,3);
                              \draw [black,line width=1pt,->,>=latex']
                              (12,14) .. controls +(0,9) and +(0,4) .. (19,8);
                              \draw [black,line width=1pt,->,>=latex']
                              (15,7) .. controls +(0,-9) and +(0,-4) .. (20,2);
                              \draw [black,line width=1pt,->,>=latex']
                              (16,4) .. controls +(0,-9) and +(0,-4) .. (21,0);
                              \draw [black,line width=1pt,->,>=latex']
                              (18,5) .. controls +(0,-9) and +(0,-4) .. (22,1);
        \end{tikzpicture}
        \caption{\label{fig:subfig:sigma' - pi' - pi'' - sigma''}%
          Illustration of the directed perfect matching
          $\DMATCHING$ between gadgets $\sigma'$, $\pi'$, $\pi''$ and
          $\sigma''$ assuming two input permutation $\sigma = 312$ and
          $\pi = 4\mathbf{5}\mathbf{2}1\mathbf{3}6$ (where a specific
          occurrence of $\sigma$ in $\pi$ is depicted in bold).}
      \end{figure}
      More precisely,
      \begin{enumerate}[label={\it (\roman*)}]
      \item the first position of $\sigma'$ (\emph{i.e.},
        $(2N_1+N_2+N_3) + 1$) is linked to the first position of
        $\pi'$ (\emph{i.e.},
        $(2N_1 + 2N_2 + 2N_3 + N_4 + k + 2) + 1$),
      \item the last position of $\sigma'$ (\emph{i.e.},
        $(2N_1+N_2+N_3) + k+2$) is linked to the last position of
        $\pi'$ (\emph{i.e.},
        $(2N_1 + 2N_2 + 2N_3 + N_4 + k + 2) + n+2$), and all other
        positions in $\sigma'$ are linked by means of $k$ pairwise
        crossing arcs to the positions in $\pi'$ that correspond to
        the fixed occurrence of $\sigma$ in $\pi$. (Notice that we
        use here the fact that $\sigma$ occurs in $\pi$).
      \end{enumerate}
    \item $\DMATCHING$ contains $n-k$ pairwise crossing
      $(\pi', \pi'')$-arcs as depicted in
      Figure~\ref{fig:subfig:sigma' - pi' - pi'' - sigma''}.
      More precisely, all positions in $\pi'$ that do not correspond
      to the fixed occurrence of $\sigma$ in $\pi$ are linked by means
      of $n-k$ pairwise crossing arcs to the positions in $\pi''$ that
      do not correspond to the fixed occurrence of $\sigma$ in $\pi$.
    \item $\DMATCHING$ contains $k$ pairwise crossing
      $(\pi'', \sigma'')$-arcs as depicted in
      Figure~\ref{fig:subfig:sigma' - pi' - pi'' - sigma''}.
      More precisely, the positions in $\pi''$ that correspond to
      the fixed occurrence of $\sigma$ in $\pi$ are linked by means of
      $k$ pairwise crossing arcs to all positions in $\sigma''$.
      (Notice that, again, we use here the fact that $\sigma$ occurs
      in $\pi$).
    \end{enumerate}
    It can be easily checked (probably referring to
    Figure~\ref{fig:reduction}) that $\DMATCHING$ is perfect and has
    Properties $\mathbf{P_1}$ and $\mathbf{P_2}$.

    Conversely, suppose that there exists an directed perfect matching
    $\DMATCHING$ on $\mu$ that has Properties $\mathbf{P_1}$ and
    $\mathbf{P_2}$. We show that $\sigma$ occurs as a pattern in $\pi$.
    Whereas the directed perfect matching $\DMATCHING$ may not be as
    regular as  in the forward direction, the main idea is to prove that
    $\DMATCHING$ contains enough structure (more precisely, $k+2$
    $(\sigma', \pi')$-arcs) so that we can conclude that $\sigma$ occurs
    in $\pi$. We have divided the reverse direction into a set of basic
    claims that progressively defines and refines the overall structure
    of $\DMATCHING$.

    \begin{Claim}
      \label{claim:no (nu'_1, nu_1)-arc}
      We may assume that there is no $(\nu'_1, \nu_1)$-arc in $\DMATCHING$.
    \end{Claim}

    \begin{proof}[Proof of Claim~\ref{claim:no (nu'_1, nu_1)-arc}]
      We first observe that, according to Property~$\mathbf{P_1}$,
      since $\DMATCHING$ avoids the unlabeled patterns of $\mathcal{P}_1$,
      $\DMATCHING$ cannot contain both
      a $(\nu_1, \nu'_1)$-arc and a $(\nu'_1, \nu_1)$-arc.
      Now, if $\DMATCHING$ does not contain a $(\nu'_1, \nu_1)$-arc
      we are done.
      Otherwise, $\DMATCHING$ does contain some $(\nu'_1, \nu_1)$-arcs
      and
      no $(\nu_1, \nu'_1)$-arc, and the result follows from
      Lemma~\ref{lemma:reverse directed matching}.
    \end{proof}

    \begin{figure}[t!]
      \centering
      \subfigure[]{%
        \begin{tikzpicture}
          [
            scale=0.2,
            label/.style={anchor=base}
          ]
          \draw[step=1cm,black!75,ultra thin,fill=black!50] (-0.2,6.8) grid (6.2,13.2);
          \foreach \x/\y in {0/7,1/8,2/9,3/10,4/11,5/12,6/13} {
            \draw [fill=black] (\x,\y) circle (0.2);
          }
          \draw[step=1cm,black!75,ultra thin,fill=black!50] (6.8,-0.2) grid (13.2,6.2);
          \foreach \x/\y in {7/6,8/5,9/4,10/3,11/2,12/1,13/0} {
            \draw [fill=black] (\x,\y) circle (0.2);
          }
          \draw[step=1cm,black!75,ultra thin,fill=black!50] (13.8,13.8) grid (20.2,20.2);
          \foreach \x/\y in {14/14,15/15,16/16,17/17,18/18,19/19,20/20} {
            \draw [fill=black] (\x,\y) circle (0.2);
          }
          \node [label] (i) at (0,5) {$i$};
          \node [label] (ip) at (9,-2) {$i'$};
          \node [label] (j) at (11,-2) {$j$};
          \node [label] (jp) at (20,12) {$j'$};
          \node (nu1) at (3,3.5) {$\nu_1$};
          \node (nu2) at (10,-3.5) {$\nu_2$};
          \node (nu1') at (17,10.5) {$\nu'_1$};
          \draw [black,line width=1pt,->,>=latex']
          (1,8) .. controls +(0,20) and +(0,4) .. (14,14);
          \draw [black,line width=1pt,->,>=latex']
          (2,9) .. controls +(0,20) and +(0,4) .. (15,15);
          \draw [black,line width=1pt,->,>=latex']
          (3,10) .. controls +(0,20) and +(0,4) .. (16,16);
          \draw [black,line width=1pt,->,>=latex']
          (4,11) .. controls +(0,20) and +(0,4) .. (17,17);
          \draw [black,line width=1pt,->,>=latex']
          (5,12) .. controls +(0,20) and +(0,4) .. (18,18);
          \draw [black,line width=1pt,->,>=latex']
          (6,13) .. controls +(0,20) and +(0,4) .. (19,19);
          \draw [black,line width=1pt,->,>=latex']
          (0,7) .. controls +(0,10) and +(0,16) .. (9,4);
          \draw [black,line width=1pt,->,>=latex']
          (11,2) .. controls +(0,28) and +(0,10) .. (20,20);
        \end{tikzpicture}
        \label{subfig:no (nu_1, nu_2)-arc - 1}
      }
      \qquad
      \subfigure[]{%
        \begin{tikzpicture}
          [
            scale=0.2,
            label/.style={anchor=base}
          ]
          \draw[step=1cm,black!75,ultra thin,fill=black!50] (-0.2,6.8) grid (6.2,13.2);
          \foreach \x/\y in {0/7,1/8,2/9,3/10,4/11,5/12,6/13} {
            \draw [fill=black] (\x,\y) circle (0.2);
          }
          \draw[step=1cm,black!75,ultra thin,fill=black!50] (6.8,-0.2) grid (13.2,6.2);
          \foreach \x/\y in {7/6,8/5,9/4,10/3,11/2,12/1,13/0} {
            \draw [fill=black] (\x,\y) circle (0.2);
          }
          \draw[step=1cm,black!75,ultra thin,fill=black!50] (13.8,13.8) grid (20.2,20.2);
          \foreach \x/\y in {14/14,15/15,16/16,17/17,18/18,19/19,20/20} {
            \draw [fill=black] (\x,\y) circle (0.2);
          }
          \node [label] (i) at (0,5) {$i$};
          \node [label] (ip) at (11,-2) {$i'$};
          \node [label] (j) at (9,-2) {$j$};
          \node [label] (jp) at (20,12) {$j'$};
          \node (nu1) at (3,3.5) {$\nu_1$};
          \node (nu2) at (10,-3.5) {$\nu_2$};
          \node (nu1') at (17,10.5) {$\nu'_1$};
          \draw [black,line width=1pt,->,>=latex']
          (1,8) .. controls +(0,20) and +(0,4) .. (14,14);
          \draw [black,line width=1pt,->,>=latex']
          (2,9) .. controls +(0,20) and +(0,4) .. (15,15);
          \draw [black,line width=1pt,->,>=latex']
          (3,10) .. controls +(0,20) and +(0,4) .. (16,16);
          \draw [black,line width=1pt,->,>=latex']
          (4,11) .. controls +(0,20) and +(0,4) .. (17,17);
          \draw [black,line width=1pt,->,>=latex']
          (5,12) .. controls +(0,20) and +(0,4) .. (18,18);
          \draw [black,line width=1pt,->,>=latex']
          (6,13) .. controls +(0,20) and +(0,4) .. (19,19);
          \draw [black,line width=1pt,->,>=latex']
          (0,7) .. controls +(0,10) and +(0,16) .. (11,2);
          \draw [black,line width=1pt,->,>=latex']
          (9,4) .. controls +(0,28) and +(0,10) .. (20,20);
        \end{tikzpicture}
        \label{subfig:no (nu_1, nu_2)-arc - 2}
      }
      \qquad
      \subfigure[]{%
        \begin{tikzpicture}
          [
            scale=0.2,
            label/.style={anchor=base}
          ]
          \draw[step=1cm,black!75,ultra thin,fill=black!50] (-0.2,6.8) grid (6.2,13.2);
          \foreach \x/\y in {0/7,1/8,2/9,3/10,4/11,5/12,6/13} {
            \draw [fill=black] (\x,\y) circle (0.2);
          }
          \draw[step=1cm,black!75,ultra thin,fill=black!50] (6.8,-0.2) grid (13.2,6.2);
          \foreach \x/\y in {7/6,8/5,9/4,10/3,11/2,12/1,13/0} {
            \draw [fill=black] (\x,\y) circle (0.2);
          }
          \draw[step=1cm,black!75,ultra thin,fill=black!50] (13.8,13.8) grid (20.2,20.2);
          \foreach \x/\y in {14/14,15/15,16/16,17/17,18/18,19/19,20/20} {
            \draw [fill=black] (\x,\y) circle (0.2);
          }
          \node [label] (i) at (0,5) {$i$};
          \node [label] (ip) at (10,-2) {$i'$};
          \node [label] (j) at (20,12) {$j$};
          \node [label] (jp) at (24,-6) {$j'$};
          \node (nu1) at (3,3.5) {$\nu_1$};
          \node (nu2) at (10,-3.5) {$\nu_2$};
          \node (nu1') at (17,10.5) {$\nu'_1$};
          \draw [black,line width=1pt,->,>=latex']
          (1,8) .. controls +(0,20) and +(0,4) .. (14,14);
          \draw [black,line width=1pt,->,>=latex']
          (2,9) .. controls +(0,20) and +(0,4) .. (15,15);
          \draw [black,line width=1pt,->,>=latex']
          (3,10) .. controls +(0,20) and +(0,4) .. (16,16);
          \draw [black,line width=1pt,->,>=latex']
          (4,11) .. controls +(0,20) and +(0,4) .. (17,17);
          \draw [black,line width=1pt,->,>=latex']
          (5,12) .. controls +(0,20) and +(0,4) .. (18,18);
          \draw [black,line width=1pt,->,>=latex']
          (6,13) .. controls +(0,20) and +(0,4) .. (19,19);
          \draw [black,line width=1pt,->,>=latex']
          (0,7) .. controls +(0,10) and +(0,16) .. (10,3);
          \draw [black,line width=1pt,->,>=latex']
          (20,20) .. controls +(0,10) and +(0,28) .. (24,-4);
        \end{tikzpicture}
        \label{subfig:no (nu_1, nu_2)-arc - 3}
      }
      \qquad
      \subfigure[]{%
        \begin{tikzpicture}
          [
            scale=0.2,
            label/.style={anchor=base}
          ]
          \draw[step=1cm,black!75,ultra thin,fill=black!50] (-0.2,6.8) grid (6.2,13.2);
          \foreach \x/\y in {0/7,1/8,2/9,3/10,4/11,5/12,6/13} {
            \draw [fill=black] (\x,\y) circle (0.2);
          }
          \draw[step=1cm,black!75,ultra thin,fill=black!50] (6.8,-0.2) grid (13.2,6.2);
          \foreach \x/\y in {7/6,8/5,9/4,10/3,11/2,12/1,13/0} {
            \draw [fill=black] (\x,\y) circle (0.2);
          }
          \draw[step=1cm,black!75,ultra thin,fill=black!50] (13.8,13.8) grid (20.2,20.2);
          \foreach \x/\y in {14/14,15/15,16/16,17/17,18/18,19/19,20/20} {
            \draw [fill=black] (\x,\y) circle (0.2);
          }
          \node [label] (i) at (0,5) {$i$};
          \node [label] (ip) at (10,-2) {$i'$};
          \node [label] (j) at (20,12) {$j$};
          \node [label] (jp) at (24,-6) {$j'$};
          \node (nu1) at (3,3.5) {$\nu_1$};
          \node (nu2) at (10,-3.5) {$\nu_2$};
          \node (nu1') at (17,10.5) {$\nu'_1$};
          \draw [black,line width=1pt,->,>=latex']
          (1,8) .. controls +(0,20) and +(0,4) .. (14,14);
          \draw [black,line width=1pt,->,>=latex']
          (2,9) .. controls +(0,20) and +(0,4) .. (15,15);
          \draw [black,line width=1pt,->,>=latex']
          (3,10) .. controls +(0,20) and +(0,4) .. (16,16);
          \draw [black,line width=1pt,->,>=latex']
          (4,11) .. controls +(0,20) and +(0,4) .. (17,17);
          \draw [black,line width=1pt,->,>=latex']
          (5,12) .. controls +(0,20) and +(0,4) .. (18,18);
          \draw [black,line width=1pt,->,>=latex']
          (6,13) .. controls +(0,20) and +(0,4) .. (19,19);
          \draw [black,line width=1pt,->,>=latex']
          (0,7) .. controls +(0,10) and +(0,16) .. (10,3);
          \draw [black,line width=1pt,<-,>=latex']
          (20,20) .. controls +(0,10) and +(0,28) .. (24,-4);
        \end{tikzpicture}
        \label{subfig:no (nu_1, nu_2)-arc - 4}
      }
      \caption{\label{fig:subfig:no (nu_1, nu_2)-arc}%
        Illustration of Claim~\ref{claim:no (nu_1, nu_2)-arc, no (nu_2, nu_1)-arc}.
      }
    \end{figure}
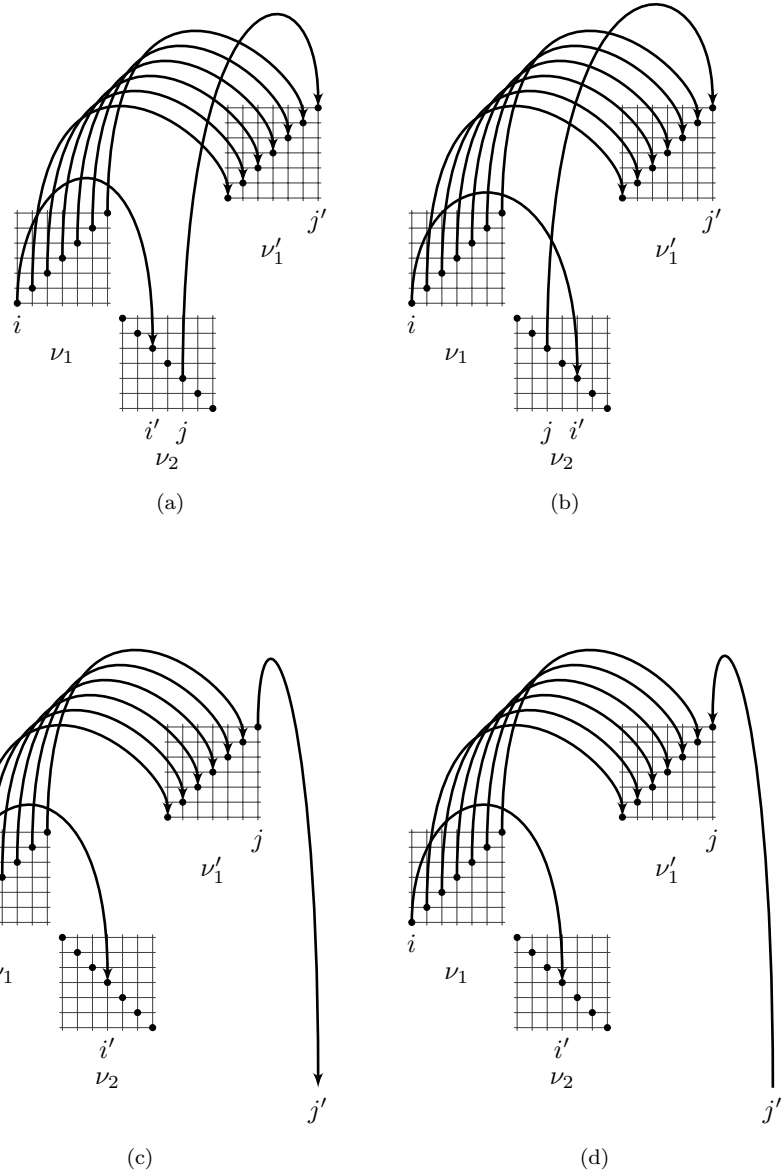

    \begin{Claim}
      \label{claim:no (nu_1, nu_2)-arc, no (nu_2, nu_1)-arc}
      There is neither a $(\nu_1, \nu_2)$-arc nor a $(\nu_2, \nu_1)$-arc
      in $\DMATCHING$.
    \end{Claim}

    \begin{proof}[Proof of Claim~\ref{claim:no (nu_1, nu_2)-arc, no (nu_2, nu_1)-arc}]
      First, according to
      Corollary~\ref{corollary:at most one arc monotone},
      there exists at most one arc between $\nu_1$ and $\nu_2$ in
      $\DMATCHING$ (this arc can be a $(\nu_1, \nu_2)$-arc or a
      $(\nu_2, \nu_1)$-arc).
      Suppose now, aiming at a contradiction, that there exists
      either one $(\nu_1, \nu_2)$-arc or one $(\nu_2, \nu_1)$-arc, say $a = (i, i')$,
      in $\DMATCHING$.
      In this case,
      according to Lemma~\ref{lemma:sigma_1 increasing left above sigma_2},
      $\DMATCHING$ does not contain any $(\nu_1, \nu_1)$-arc.
      We now claim that
      $\DMATCHING$ contains
      $N_1-1$ pairwise crossing $(\nu_1, \nu'_1)$-arcs (and $i=1$)
      if $a$ is a $(\nu_1, \nu_2)$-arc, or
      $N_1-1$ pairwise crossing $(\nu'_1, \nu_1)$-arcs (and $i'=1$)
      if $a$ is a $(\nu_2, \nu_1)$-arc
      (recall here that \CrossingLR\ and \CrossingRL\ are forbidden patterns in $\DMATCHING$).
      Indeed, observe first that
      $N_1-1
      > |\nu_3| + |\sigma'| + |\nu_4| + |\nu'_2| + |\nu'_3| + |\pi'| + |\nu'_4| + |\pi''| + |\sigma_1|$.
      Therefore, there exists at least
      one $(\nu_1, \nu'_1)$-arc if $a$ is a $(\nu_1, \nu_2)$-arc or
      at least one $(\nu'_1, \nu_1)$-arc if $a$ is a $(\nu_2, \nu_1)$-arc.
      Hence,
      if $\DMATCHING$ does not contain
      $N_1-1$ pairwise crossing $(\nu_1, \nu'_1)$-arcs or
      $N_1-1$ pairwise crossing $(\nu'_1, \nu_1)$-arcs,
      then it contains one of the following labeled patterns:
      $\LabeledPrecedenceLL{.08}{2}{3}{4}{1}$,
      $\LabeledPrecedenceLR{.08}{2}{3}{4}{1}$,
      $\LabeledPrecedenceRL{.08}{2}{3}{4}{1}$,
      $\LabeledPrecedenceRR{.08}{2}{3}{4}{1}$,
      $\LabeledCrossingLL{.08}{2}{3}{4}{1}$ and
      $\LabeledCrossingRR{.08}{2}{3}{4}{1}$.
      Applying Lemma~\ref{lemma:Forbidden patterns crossing} and
      Lemma~\ref{lemma:Forbidden patterns precedence}
      yields a contradiction.
      Then it follows that
      $\DMATCHING$ contains
      $N_1-1$ pairwise crossing $(\nu_1, \nu'_1)$-arcs (and $i=1$)
      if $a$ is a $(\nu_1, \nu_2)$-arc, or
      $N_1-1$ pairwise crossing $(\nu'_1, \nu_1)$-arcs (and $i'=1$)
      if $a$ is a $(\nu_2, \nu_1)$-arc.
      But it follows from Claim~\ref{claim:no (nu'_1, nu_1)-arc}
      that $\DMATCHING$ does not contain any $(\nu'_1, \nu_1)$-arc,
      and hence
      $\DMATCHING$ contains
      $N_1-1$ pairwise crossing $(\nu_1, \nu'_1)$-arcs
      and $a$ is a $(\nu_1, \nu_2)$-arc
      (since \CrossingLR\ is forbidden).

      We now observe that $|\nu_1| = |\nu'_1| = N_1$.
      Hence, since $\DMATCHING$ is perfect,
      there exists a position in $\nu'_1$ that is
      not involved in a $(\nu_1, \nu'_1)$-arc in $\DMATCHING$.
      We rule out this configuration by considering two cases:
      \begin{itemize}
      \item There exists a $(\nu_2, \nu'_1)$-arc $(j,j')$ in $\DMATCHING$
        (we cannot have a $(\nu'_1, \nu_2)$-arc since the unlabeled pattern
        \CrossingLR\ is forbidden),
        see Figure~\ref{subfig:no (nu_1, nu_2)-arc - 1} and
        Figure~\ref{subfig:no (nu_1, nu_2)-arc - 2}.
        Then it follows that $\DMATCHING$ contains the labeled pattern
        $\LabeledCrossingRR{.08}{2}{1}{3}{4}$
        (with arc $(j, j')$ and any $(\nu, \nu'_1)$-arc).
        Applying Lemma~\ref{lemma:Forbidden patterns crossing}
        yields the sought-after contradiction.

      \item There exists an arc $(j,j')$
        $j = 2N_1 + N_2$ and $j' > 2N_1 + N_2$,
        or
        $j' = 2N_1 + N_2$ and $j > 2N_1 + N_2$,
        see Figure~\ref{subfig:no (nu_1, nu_2)-arc - 3} and
        Figure~\ref{subfig:no (nu_1, nu_2)-arc - 4}.
        Then it follows that $\DMATCHING$ contains one of the two following
        labeled patterns:
        $\LabeledPrecedenceRR{.08}{3}{2}{4}{1}$ and
        $\LabeledPrecedenceLL{.08}{3}{2}{4}{1}$
        (with arc $(i, i')$ and arc ($j, j')$).
        Applying Lemma~\ref{lemma:Forbidden patterns precedence}
        yields the sought-after contradiction.
      \end{itemize}
    \end{proof}

    \begin{Claim}
      \label{claim:(nu_1, nu'_1)-arc}
      There is at least one $(\nu_1, \nu'_1)$-arc in $\DMATCHING$.
    \end{Claim}

    \begin{proof}[Proof of Claim~\ref{claim:(nu_1, nu'_1)-arc}]
      Suppose, aiming at a contradiction, that there is no
      $(\nu_1, \nu'_1)$-arc in $\DMATCHING$.
      Then it follows that there exists an arc $(i, i')$ in $\DMATCHING$
      that is neither a
      $(\nu_1, \nu_1)$-arc (since $N_1$ is odd)
      nor a $(\nu_1, \nu_2)$-arc (Claim~\ref{claim:no (nu_1, nu_2)-arc, no (nu_2, nu_1)-arc})
      nor a $(\nu_1, \nu'_1)$ (by our contradiction hypothesis).
      (In other words, $i \leq N_1$ and $i' > 2N_1 + N_2$,
      or $i' \leq N_1$ and $i > 2N_1 + N_2$.)
      Therefore, since $\DMATCHING$ is containment-free
      (\emph{i.e.}, it avoids the unlabeled patterns of
      $\mathcal{P}_{\mathrm{cont}}$),
      there is neither a $(\nu_2, \nu_2)$-arc nor a $(\nu'_1, \nu'_1)$-arc
      in $\DMATCHING$.
      Then it follows that $\DMATCHING$ contains
      either
      $N_1$ arcs $(j, j')$ with $N_1+N_2 < j \leq 2N_1+N_2$ and $j' > 2N_1+N_2$
      (if $i \leq N_1$ and $i' > 2N_1 + N_2$),
      or
      $N_1$ arcs $(i, i')$ with $N_1+N_2 < j' \leq 2N_1+N_2$ and $j > 2N_1+N_2$
      (if $i' \leq N_1$ and $i > 2N_1 + N_2$),
      otherwise $\DMATCHING$ would not be containment-free.
      But $|\nu'_1| = N_1
      > |\nu_3| + |\sigma'| + |\nu_4| + |\nu'_2| + |\nu'_3| + |\pi'| + |\nu'_4|
      + |\pi''| + |\sigma''|$.
      This is a contradiction.
    \end{proof}

    The above claim will be complemented in upcoming Claim~\ref{claim:all (nu_1, nu'_1)-arcs}.

    \begin{Claim}
      \label{claim:no (nu_2, nu_2)-arc}
      There is no $(\nu_2, \nu_2)$-arc in $\DMATCHING$.
    \end{Claim}

    \begin{proof}[Proof of Claim~\ref{claim:no (nu_2, nu_2)-arc}]
      Combine Claim~\ref{claim:(nu_1, nu'_1)-arc} together with
      the fact that $\DMATCHING$ is containment-free
      (\emph{i.e.}, it avoids the unlabeled patterns of
      $\mathcal{P}_{\mathrm{cont}}$).
    \end{proof}

    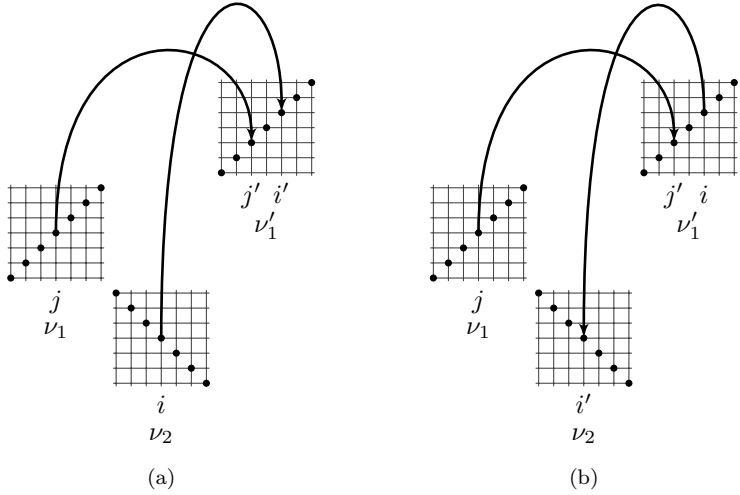
\begin{figure}[t!]
      \centering
      \subfigure[]{%
        \begin{tikzpicture}
          [
            scale=0.2,
            label/.style={anchor=base}
          ]
          \draw[step=1cm,black!75,ultra thin,fill=black!50] (-0.2,6.8) grid (6.2,13.2);
          \foreach \x/\y in {0/7,1/8,2/9,3/10,4/11,5/12,6/13} {
            \draw [fill=black] (\x,\y) circle (0.2);
          }
          \draw[step=1cm,black!75,ultra thin,fill=black!50] (6.8,-0.2) grid (13.2,6.2);
          \foreach \x/\y in {7/6,8/5,9/4,10/3,11/2,12/1,13/0} {
            \draw [fill=black] (\x,\y) circle (0.2);
          }
          \draw[step=1cm,black!75,ultra thin,fill=black!50] (13.8,13.8) grid (20.2,20.2);
          \foreach \x/\y in {14/14,15/15,16/16,17/17,18/18,19/19,20/20} {
            \draw [fill=black] (\x,\y) circle (0.2);
          }
          \node [label] (j) at (3,5) {$j$};
          \node [label] (jp) at (16,12) {$j'$};
          \node [label] (i) at (10,-2) {$i$};
          \node [label] (ip) at (18,12) {$i'$};
          \node (nu1) at (3,3.5) {$\nu_1$};
          \node (nu2) at (10,-3.5) {$\nu_2$};
          \node (nu1') at (17,10.5) {$\nu'_1$};
          \draw [black,line width=1pt,->,>=latex']
          (3,10) .. controls +(0,16) and +(0,8) .. (16,16);
          \draw [black,line width=1pt,->,>=latex']
          (10,3) .. controls +(0,28) and +(0,10) .. (18,18);
        \end{tikzpicture}
        \label{subfig:no (nu_2, nu'_)-arc - 1}
      }
      \qquad\qquad
      \subfigure[]{%
        \begin{tikzpicture}
          [
            scale=0.2,
            label/.style={anchor=base}
          ]
          \draw[step=1cm,black!75,ultra thin,fill=black!50] (-0.2,6.8) grid (6.2,13.2);
          \foreach \x/\y in {0/7,1/8,2/9,3/10,4/11,5/12,6/13} {
            \draw [fill=black] (\x,\y) circle (0.2);
          }
          \draw[step=1cm,black!75,ultra thin,fill=black!50] (6.8,-0.2) grid (13.2,6.2);
          \foreach \x/\y in {7/6,8/5,9/4,10/3,11/2,12/1,13/0} {
            \draw [fill=black] (\x,\y) circle (0.2);
          }
          \draw[step=1cm,black!75,ultra thin,fill=black!50] (13.8,13.8) grid (20.2,20.2);
          \foreach \x/\y in {14/14,15/15,16/16,17/17,18/18,19/19,20/20} {
            \draw [fill=black] (\x,\y) circle (0.2);
          }
          \node [label] (j) at (3,5) {$j$};
          \node [label] (jp) at (16,12) {$j'$};
          \node [label] (ip) at (10,-2) {$i'$};
          \node [label] (i) at (18,12) {$i$};
          \node (nu1) at (3,3.5) {$\nu_1$};
          \node (nu2) at (10,-3.5) {$\nu_2$};
          \node (nu1') at (17,10.5) {$\nu'_1$};
          \draw [black,line width=1pt,->,>=latex']
          (3,10) .. controls +(0,16) and +(0,8) .. (16,16);
          \draw [black,line width=1pt,<-,>=latex']
          (10,3) .. controls +(0,28) and +(0,10) .. (18,18);
        \end{tikzpicture}
        \label{subfig:no (nu_2, nu'_)-arc - 2}
      }
      \qquad
      \caption{\label{fig:no (nu_2, nu'_1)-arc}%
        Illustration of Claim~\ref{claim:no (nu_2, nu'_1)-arc, no (nu'_1, nu_2)-arc}.}
    \end{figure}

    \begin{Claim}
      \label{claim:no (nu_2, nu'_1)-arc, no (nu'_1, nu_2)-arc}
      There is neither a $(\nu_2, \nu'_1)$-arc
      nor a $(\nu'_1, \nu_2)$-arc in $\DMATCHING$.
    \end{Claim}

    \begin{proof}[Proof of Claim~\ref{claim:no (nu_2, nu'_1)-arc, no (nu'_1, nu_2)-arc}]
      First, according to
      Corollary~\ref{corollary:at most one arc monotone}--\ref{item:at most one arc monotone 2},
      there exists either
      at most one $(\nu_2, \nu'_1)$-arc and no $(\nu'_1, \nu_2)$-arc,
      or
      at most one $(\nu'_1, \nu_2)$-arc and no $(\nu_2, \nu'_1)$-arc
      $(i, i')$ in $\DMATCHING$ (see Figure~\ref{fig:no (nu_2, nu'_1)-arc}).
      Now from Claim~\ref{claim:(nu_1, nu'_1)-arc}, there exists at least
      one $(\nu_1, \nu'_1)$-arc, say $(j, j')$, in $\DMATCHING$.
      Hence,
      since $\DMATCHING$ is containment-free
      (\emph{i.e.}, it avoids the unlabeled patterns of
      $\mathcal{P}_{\mathrm{cont}}$),
      $\DMATCHING$ contains one of the following labeled patterns:
      $\LabeledCrossingRR{.08}{2}{1}{3}{4}$ and
      $\LabeledCrossingRL{.08}{2}{1}{3}{4}$.
      Applying Lemma~\ref{lemma:Forbidden patterns crossing}
      yields the sought-after contradiction.
    \end{proof}

    \begin{Claim}
      \label{claim:(nu_2, nu'_2)-arc}
      There is at least one $(\nu_2, \nu'_2)$-arc in $\DMATCHING$.
    \end{Claim}

    \begin{proof}[Proof of Claim~\ref{claim:(nu_2, nu'_2)-arc}]
      First,
      according to Claim~\ref{claim:(nu_1, nu'_1)-arc}, there exists at least
      one $(\nu_1, \nu'_1)$-arc in $\DMATCHING$ and hence,
      since $\DMATCHING$ avoids the unlabeled pattern \CrossingRL\, (Property~$\mathbf{P_1}$)
      there is no $(\nu'_2, \nu_2)$-arc in $\DMATCHING$.
      Now, suppose, aiming at a contradiction, that there is no
      $(\nu_2, \nu'_2)$-arc in $\DMATCHING$.
      Notice that there is neither
      a $(\nu_1, \nu_2)$-arc (Claim~\ref{claim:no (nu_1, nu_2)-arc, no (nu_2, nu_1)-arc})
      nor a $(\nu_2, \nu_1)$-arc (Claim~\ref{claim:no (nu_1, nu_2)-arc, no (nu_2, nu_1)-arc})
      nor a $(\nu_2, \nu_2)$-arc (Claim~\ref{claim:no (nu_2, nu_2)-arc})
      nor a $(\nu_2, \nu'_1)$-arc (Claim~\ref{claim:no (nu_2, nu'_1)-arc, no (nu'_1, nu_2)-arc})
      nor a $(\nu'_1, \nu_2)$-arc (Claim~\ref{claim:no (nu_2, nu'_1)-arc, no (nu'_1, nu_2)-arc})
      in $\DMATCHING$.
      But $|\nu_2| = N_2 >
      |\nu_3| + |\sigma'| + |\nu_4| + |\nu'_3| + |\pi'| + |\nu'_4|
      + |\pi''| + |\sigma''|$.
      Hence $\DMATCHING$ cannot be a directed perfect matching,
      thereby contradicting our hypothesis about~$\DMATCHING$.
    \end{proof}

    \begin{Claim}
      \label{claim:no arc below (nu_2, nu'_2)-arc}
      There is
      neither a $(\nu'_1, \nu'_1)$-arc,
      nor a $(\nu'_1, \nu_3)$-arc,
      nor a $(\nu_3, \nu'_1)$-arc,
      nor a $(\nu'_1, \sigma')$-arc,
      nor a $(\sigma', \nu'_1)$-arc,
      nor a $(\nu'_1, \nu_4)$-arc
      nor a $(\nu_4, \nu'_1)$-arc
      nor a $(\nu_3, \nu_3)$-arc,
      nor a $(\nu_3, \sigma')$-arc,
      nor a $(\sigma', \nu_3)$-arc,
      nor a $(\nu_3, \nu_4)$-arc,
      nor a $(\nu_4, \nu_3)$-arc,
      nor a $(\sigma', \sigma')$-arc,
      nor a $(\sigma', \nu_4)$-arc,
      nor a $(\nu_4, \sigma')$-arc,
      nor a $(\nu_4, \nu_4)$-arc
      in $\DMATCHING$.
    \end{Claim}

    \begin{proof}[Proof of Claim~\ref{claim:no arc below (nu_2, nu'_2)-arc}]
      Combine Claim~\ref{claim:(nu_2, nu'_2)-arc} with the fact that
      $\DMATCHING$ is containment-free
      (\emph{i.e.}, it avoids the unlabeled patterns of
      $\mathcal{P}_{\mathrm{cont}}$).
    \end{proof}

    \begin{figure}[t!]
      \centering
      \subfigure[]{%
        \begin{tikzpicture}
          [
            scale=0.2,
            label/.style={anchor=base}
          ]
          \draw[step=1cm,black!75,ultra thin,fill=black!50] (-0.2,6.8) grid (6.2,13.2);
          \foreach \x/\y in {0/7,1/8,2/9,3/10,4/11,5/12,6/13} {
            \draw [fill=black] (\x,\y) circle (0.2);
          }
          \draw[step=1cm,black!75,ultra thin,fill=black!50] (6.8,-0.2) grid (13.2,6.2);
          \foreach \x/\y in {7/6,8/5,9/4,10/3,11/2,12/1,13/0} {
            \draw [fill=black] (\x,\y) circle (0.2);
          }
          \draw [fill=black] (16,16) circle (0.2);
          \node [label] (i) at (9,-2) {$i$};
          \node [label] (ip) at (16,14) {$i'$};
          \node [anchor=west] (where) at (15,12) {$\nu'_3$, $\pi'$, $\nu'_4$, $\pi''$ or $\sigma''$};
          \node [label] (j) at (3,5) {$j$};
          \node [label] (jp) at (11,-2) {$j'$};
          \node (nu1) at (3,3.5) {$\nu_2$};
          \node (nu2) at (10,-3.5) {$\nu'_2$};
          \draw [black,line width=1pt,->,>=latex']
          (3,10) .. controls +(0,12) and +(0,16) .. (11,2);
          \draw [black,line width=1pt,->,>=latex']
          (9,4) .. controls +(0,16) and +(0,6) .. (16,16);
        \end{tikzpicture}
        \label{subfig:no (nu'_2, ?)-arc - 1}
      }
      \qquad
      \subfigure[]{%
        \begin{tikzpicture}
          [
            scale=0.2,
            label/.style={anchor=base}
          ]
          \draw[step=1cm,black!75,ultra thin,fill=black!50] (-0.2,6.8) grid (6.2,13.2);
          \foreach \x/\y in {0/7,1/8,2/9,3/10,4/11,5/12,6/13} {
            \draw [fill=black] (\x,\y) circle (0.2);
          }
          \draw[step=1cm,black!75,ultra thin,fill=black!50] (6.8,-0.2) grid (13.2,6.2);
          \foreach \x/\y in {7/6,8/5,9/4,10/3,11/2,12/1,13/0} {
            \draw [fill=black] (\x,\y) circle (0.2);
          }
          \draw [fill=black] (16,16) circle (0.2);
          \node [label,align=left] (i) at (11,-2) {$i$};
          \node [label] (ip) at (16,14) {$i'$};
          \node [anchor=west] (where) at (15,12) {$\nu'_3$, $\pi'$, $\nu'_4$, $\pi''$ or $\sigma''$};
          \node [label] (j) at (3,5) {$j$};
          \node [label] (jp) at (9,-2) {$j'$};
          \node (nu1) at (3,3.5) {$\nu_2$};
          \node (nu2) at (10,-3.5) {$\nu'_2$};
          \draw [black,line width=1pt,->,>=latex']
          (3,10) .. controls +(0,12) and +(0,16) .. (9,4);
          \draw [black,line width=1pt,->,>=latex']
          (11,2) .. controls +(0,16) and +(0,6) .. (16,16);
        \end{tikzpicture}
        \label{subfig:no (nu'_2, ?)-arc - 2}
      }
      \qquad\qquad\qquad
      \subfigure[]{%
        \begin{tikzpicture}
          [
            scale=0.2,
            label/.style={anchor=base}
          ]
          \draw[step=1cm,black!75,ultra thin,fill=black!50] (-0.2,6.8) grid (6.2,13.2);
          \foreach \x/\y in {0/7,1/8,2/9,3/10,4/11,5/12,6/13} {
            \draw [fill=black] (\x,\y) circle (0.2);
          }
          \draw[step=1cm,black!75,ultra thin,fill=black!50] (6.8,-0.2) grid (13.2,6.2);
          \foreach \x/\y in {7/6,8/5,9/4,10/3,11/2,12/1,13/0} {
            \draw [fill=black] (\x,\y) circle (0.2);
          }
          \draw [fill=black] (16,16) circle (0.2);
          \node [label] (ip) at (9,-2) {$i'$};
          \node [label] (i) at (16,14) {$i$};
          \node [anchor=west] (where) at (15,12) {$\nu'_3$, $\pi'$, $\nu'_4$, $\pi''$ or $\sigma''$};
          \node [label] (j) at (3,5) {$j$};
          \node [label] (jp) at (11,-2) {$j'$};
          \node (nu1) at (3,3.5) {$\nu_2$};
          \node (nu2) at (10,-3.5) {$\nu'_2$};
          \draw [black,line width=1pt,->,>=latex']
          (3,10) .. controls +(0,12) and +(0,16) .. (11,2);
          \draw [black,line width=1pt,<-,>=latex']
          (9,4) .. controls +(0,16) and +(0,6) .. (16,16);
        \end{tikzpicture}
        \label{subfig:no (nu'_2, ?)-arc - 3}
      }
      \qquad
      \subfigure[]{%
        \begin{tikzpicture}
          [
            scale=0.2,
            label/.style={anchor=base}
          ]
          \draw[step=1cm,black!75,ultra thin,fill=black!50] (-0.2,6.8) grid (6.2,13.2);
          \foreach \x/\y in {0/7,1/8,2/9,3/10,4/11,5/12,6/13} {
            \draw [fill=black] (\x,\y) circle (0.2);
          }
          \draw[step=1cm,black!75,ultra thin,fill=black!50] (6.8,-0.2) grid (13.2,6.2);
          \foreach \x/\y in {7/6,8/5,9/4,10/3,11/2,12/1,13/0} {
            \draw [fill=black] (\x,\y) circle (0.2);
          }
          \draw [fill=black] (16,16) circle (0.2);
          \node [label] (ip) at (11,-2) {$i'$};
          \node [label] (i) at (16,14) {$i$};
          \node [anchor=west] (where) at (15,12) {$\nu'_3$, $\pi'$, $\nu'_4$, $\pi''$ or $\sigma''$};
          \node [label] (j) at (3,5) {$j$};
          \node [label] (jp) at (9,-2) {$j'$};
          \node (nu1) at (3,3.5) {$\nu_2$};
          \node (nu2) at (10,-3.5) {$\nu'_2$};
          \draw [black,line width=1pt,->,>=latex']
          (3,10) .. controls +(0,12) and +(0,16) .. (9,4);
          \draw [black,line width=1pt,<-,>=latex']
          (11,2) .. controls +(0,16) and +(0,6) .. (16,16);
        \end{tikzpicture}
        \label{subfig:no (nu'_2, ?)-arc - 4}
      }
      \caption{\label{fig:subfig:no (nu_'2, ?)-arc}%
        Illustration of Claim~\ref{claim: no arc right of nu'_2}.
      }
    \end{figure}

    \begin{Claim}
      \label{claim: no arc right of nu'_2}
      There is
      neither a $(\nu'_2, \nu'_3)$-arc,
      nor a $(\nu'_3, \nu'_2)$-arc,
      nor a $(\nu'_2, \pi')$-arc,
      nor a $(\pi', \nu'_2)$-arc,
      nor a $(\nu'_2, \nu'_4)$-arc,
      nor a $(\nu'_4, \nu'_2)$-arc,
      nor a $(\nu'_2, \pi'')$-arc,
      nor a $(\pi'', \nu'_2)$-arc
      nor a $(\nu'_2, \sigma'')$-arc,
      nor a $(\sigma'', \nu'_2)$-arc
      in $\DMATCHING$.
    \end{Claim}

    \begin{proof}[Proof of Claim~\ref{claim: no arc right of nu'_2}]
      Suppose aiming at a contradiction that $\DMATCHING$ contains
      a $(\nu'_2, \nu'_3)$-arc,
      a $(\nu'_3, \nu'_2)$-arc,
      a $(\nu'_2, \pi')$-arc,
      a $(\pi', \nu'_2)$-arc,
      a $(\nu'_2, \nu'_4)$-arc,
      a $(\nu'_4, \nu'_2)$-arc,
      a $(\nu'_2, \pi'')$-arc,
      a $(\pi'', \nu'_2)$-arc
      a $(\nu'_2, \sigma'')$-arc or
      a $(\sigma'', \nu'_2)$-arc, say $(i, i')$.
      We now observe that
      $\nu'_3$, $\pi'$, $\nu'_4$, $\pi''$ and $\sigma''$ are all
      right above of both $\nu_2$ and $\nu'_2$.
      Furthermore, according to Claim~\ref{claim:(nu_2, nu'_2)-arc},
      there exists a $(\nu_2, \nu'_2)$-arc, say $(j, j')$.
      Then, it follow that $\DMATCHING$ contains one of the following
      labeled patterns:
      $\LabeledCrossingRR{.08}{3}{2}{1}{4}$,
      $\LabeledPrecedenceRR{.08}{3}{2}{1}{4}$,
      $\LabeledCrossingRL{.08}{3}{2}{1}{4}$ and
      $\LabeledPrecedenceRL{.08}{3}{2}{1}{4}$
      (see Figure~\ref{fig:subfig:no (nu_'2, ?)-arc}).
      Applying Lemma~\ref{lemma:Forbidden patterns precedence}
      and Lemma~\ref{lemma:Forbidden patterns crossing}
      yields the sought-after contradiction.
    \end{proof}

    \begin{figure}[t!]
      \centering
      \begin{tikzpicture}
        [
          scale=0.2,
          label/.style={anchor=base}
        ]
        \draw[step=1cm,black!75,ultra thin,fill=black!50] (-0.2,6.8) grid (6.2,13.2);
        \foreach \x/\y in {0/13,1/12,2/11,3/10,4/9,5/8,6/7} {
          \draw [fill=black] (\x,\y) circle (0.2);
        }
        \draw[step=1cm,black!75,ultra thin,fill=black!50] (6.8,13.8) grid (13.2,20.2);
        \foreach \x/\y in {7/14,8/15,9/16,10/17,11/18,12/19,13/20} {
          \draw [fill=black] (\x,\y) circle (0.2);
        }
        \draw[step=1cm,black!75,ultra thin,fill=black!50] (13.8,-0.2) grid (20.2,6.2);
        \foreach \x/\y in {14/6,15/5,16/4,17/3,18/2,19/1,20/0} {
          \draw [fill=black] (\x,\y) circle (0.2);
        }
        \node [label] (i) at (3,5) {$i$};
        \node [label] (ip) at (16,-2) {$i'$};
        \node [label] (j) at (10,12) {$j$};
        \node [label] (jp) at (18,-2) {$j'$};
        \node (nu1) at (3,3.5) {$\nu_2$};
        \node (nu3) at (10,10.5) {$\nu_3$};
        \node (nu2') at (17,-3.5) {$\nu'_2$};
        \draw [black,line width=1pt,->,>=latex']
        (3,10) .. controls +(0,20) and +(0,30) .. (16,4);
        \draw [black,line width=1pt,->,>=latex']
        (10,17) .. controls +(0,12) and +(0,26) .. (18,2);
      \end{tikzpicture}
      \caption{\label{fig:no (nu_3, nu'_2)-arc, no (nu'_3, nu_3)-arc}%
        Illustration of Claim~\ref{claim:no (nu_3, nu'_2)-arc, no (nu'_2, nu_3)-arc}.
      }
    \end{figure}

    \begin{Claim}
      \label{claim:no (nu_3, nu'_2)-arc, no (nu'_2, nu_3)-arc}
      There is neither a $(\nu_3, \nu'_2)$-arc
      nor a $(\nu'_2, \nu_3)$-arc in $\DMATCHING$.
    \end{Claim}

    \begin{proof}[Proof of Claim~\ref{claim:no (nu_3, nu'_2)-arc, no (nu'_2, nu_3)-arc}]
      Suppose, aiming at a contradiction, that there exists
      $(\nu_3, \nu'_2)$-arc or a $(\nu'_2, \nu_3)$-arc
      $(j, j')$ in $\DMATCHING$.
      According to Claim~\ref{claim:(nu_2, nu'_2)-arc}, there
      exists at least one $(\nu_2, \nu'_2)$-arc $(i, i')$ in $\DMATCHING$.
      Since $\DMATCHING$ avoids the unlabeled pattern $\CrossingRL$
      (Property~$\mathbf{P_1}$),
      there is no $(\nu'_2, \nu_3)$-arc in $\DMATCHING$
      (see Figure~\ref{fig:no (nu_3, nu'_2)-arc, no (nu'_3, nu_3)-arc}),
      and hence $(j, j')$ is a $(\nu_3, \nu'_2)$-arc.
      Then it follows that $\DMATCHING$ contains the labeled pattern
      $\LabeledCrossingRR{.08}{3}{4}{2}{1}$
      (see Figure~\ref{fig:no (nu_3, nu'_2)-arc, no (nu'_3, nu_3)-arc}).
      Applying Lemma~\ref{lemma:Forbidden patterns crossing}
      yields the sought-after contradiction.
    \end{proof}

    \begin{Claim}
      \label{claim:at most one (nu_2, nu_3)-arc or at most one (nu_3, nu_2)-arc}
      There is at most one $(\nu_2, \nu_3)$-arc or at most one $(\nu_3, \nu_2)$-arc
      in $\DMATCHING$.
    \end{Claim}

    \begin{proof}[Proof of Claim~\ref{claim:at most one (nu_2, nu_3)-arc or at most one (nu_3, nu_2)-arc}]
      Apply Corollary~\ref{corollary:at most one arc monotone}.
    \end{proof}

    We will see soon (upcoming Claim~\ref{claim:all (nu_2, nu'_2)-arcs})
    that there exists actually no $(\nu_2, \nu_3)$-arc
    in $\DMATCHING$.

    \begin{figure}[t!]
      \centering
      \begin{tikzpicture}
        [
          scale=0.2,
          label/.style={anchor=base}
        ]
        \draw[step=1cm,black!75,ultra thin,fill=black!50] (-0.2,6.8) grid (6.2,13.2);
        \foreach \x/\y in {0/7,1/8,2/9,3/10,4/11,5/12,6/13} {
          \draw [fill=black] (\x,\y) circle (0.2);
        }
        \draw[step=1cm,black!75,ultra thin,fill=black!50] (6.8,13.8) grid (13.2,20.2);
        \foreach \x/\y in {7/14,8/15,9/16,10/17,11/18,12/19,13/20} {
          \draw [fill=black] (\x,\y) circle (0.2);
        }
        \draw[step=1cm,black!75,ultra thin,fill=black!50] (13.8,-0.2) grid (20.2,6.2);
        \foreach \x/\y in {14/0,15/1,16/2,17/3,18/4,19/5,20/6} {
          \draw [fill=black] (\x,\y) circle (0.2);
        }
        \node [label] (i) at (2,5) {$i$};
        \node [label] (ip) at (10,12) {$i'$};
        \node [label] (j) at (4,5) {$j$};
        \node [label] (jp) at (17,-2) {$j'$};
        \node (nu1) at (3,3.5) {$\nu_1$};
        \node (nu3) at (10,10.5) {$\nu'_1$};
        \node (nu2') at (17,-3.5) {$\nu_3$};
        \draw [black,line width=1pt,->,>=latex']
        (2,9) .. controls +(0,16) and +(0,6) .. (10,17);
        \draw [black,line width=1pt,->,>=latex']
        (4,11) .. controls +(0,14) and +(0,30) .. (17,3);
      \end{tikzpicture}
      \caption{\label{fig:no (nu_1, nu_3)-arc, no (nu_3, nu_1)-arc}%
        Illustration of Claim~\ref{claim:no (nu_1, nu_3)-arc, no (nu_3, nu_1)-arc}
      }
    \end{figure}
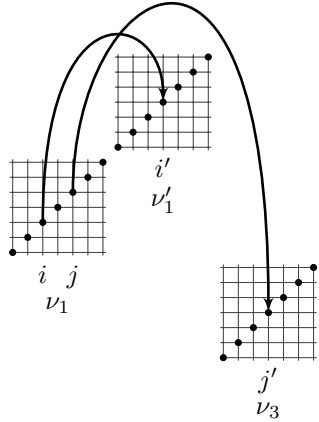

    \begin{Claim}
      \label{claim:no (nu_1, nu_3)-arc, no (nu_3, nu_1)-arc}
      There is neither a $(\nu_1, \nu_3)$-arc, nor a $(\nu_3, \nu_1)$-arc
      in $\DMATCHING$.
    \end{Claim}

    \begin{proof}[Proof of Claim~\ref{claim:no (nu_1, nu_3)-arc, no (nu_3, nu_1)-arc}]
      Suppose, aiming at a contradiction, that there exists
      a $(\nu_1, \nu_3)$-arc or a $(\nu_3, \nu_1)$-arc, say $(j, j')$, in $\DMATCHING$.
      According to Claim~\ref{claim:(nu_1, nu'_1)-arc}, there
      exists at least one $(\nu_1, \nu'_1)$-arc, say $(i, i')$, in $\DMATCHING$.
      Since $\DMATCHING$ avoids the unlabeled pattern $CrossingRL$
      (Property~$\mathbf{P_1}$),
      there is no $(\nu_3, \nu_1)$-arc in $\DMATCHING$
      (see Figure~\ref{fig:no (nu_1, nu_3)-arc, no (nu_3, nu_1)-arc}),
      and hence $(j, j')$ is a $(\nu_1, \nu_3)$-arc.
      Then it follows that $\DMATCHING$ contains the labeled pattern
      $\LabeledCrossingRR{.08}{2}{3}{4}{1}$
      (see Figure~\ref{fig:no (nu_1, nu_3)-arc, no (nu_3, nu_1)-arc}).
      Applying Lemma~\ref{lemma:Forbidden patterns crossing}
      yields the sought-after contradiction.
    \end{proof}

    \begin{Claim}
      \label{claim:at least one (nu_3, nu'_3)-arc}
      There exists a $(\nu_3, \nu'_3)$-arc in $\DMATCHING$.
    \end{Claim}

    \begin{proof}[Proof of Claim~\ref{claim:at least one (nu_3, nu'_3)-arc}]
      First, according to Claim~\ref{claim:(nu_2, nu'_2)-arc}, there
      exists at least one $(\nu_2, \nu'_2)$-arc in $\DMATCHING$.
      Since $\DMATCHING$ avoids the unlabeled pattern \CrossingRL\,
      (Property~$\mathbf{P_1}$)
      there is no $(\nu'_3, \nu_3)$-arc in $\DMATCHING$.
      Now, suppose, aiming at a contradiction, that there is no
      $(\nu_3, \nu'_3)$-arc in $\DMATCHING$.
      Combining
      Claim~\ref{claim:no arc below (nu_2, nu'_2)-arc},
      Claim~\ref{claim:no (nu_3, nu'_2)-arc, no (nu'_2, nu_3)-arc},
      Claim~\ref{claim:at most one (nu_2, nu_3)-arc or at most one (nu_3, nu_2)-arc}
      Claim~\ref{claim:no (nu_1, nu_3)-arc, no (nu_3, nu_1)-arc}
      together with our hypothesis,
      we conclude that $N_3-1$ positions in $\nu_3$ are involved
      in arcs of $\DMATCHING$ that are
      neither ($\nu_1, \nu_3)$-arcs,
      nor ($\nu_3, \nu_1)$-arcs,
      nor $(\nu_2, \nu_3)$-arcs,
      nor $(\nu_3, \nu_2)$-arcs,
      nor $(\nu'_1, \nu_3)$-arcs,
      nor $(\nu_3, \nu'_1)$-arcs,
      nor $(\nu_3, \nu_3)$-arcs,
      nor $(\nu_3, \sigma')$-arcs,
      nor $(\sigma', \nu_3)$-arcs,
      nor $(\nu_3, \nu_4)$-arcs,
      nor $(\nu_4, \nu_3)$-arcs,
      nor $(\nu_3, \nu'_3)$-arcs,
      nor $(\nu'_3, \nu_3)$-arcs.
      But $N_3 - 1 >
      |\pi'| + |\nu'_4| + |\pi''| + |\sigma''|
      = N_4 + 2n + 2k + 4$, and hence
      $\DMATCHING$ is not a perfect matching.
      This is the sought-after contradiction.
    \end{proof}

    \begin{Claim}
      \label{claim:no arc below one (nu_3, nu'_3)-arc}
      There is
      neither a $(\sigma', \sigma')$-arc,
      nor a $(\sigma', \nu_4)$-arc,
      nor a $(\nu_4, \sigma')$-arc,
      nor a $(\sigma', \nu'_2)$-arc,
      nor a $(\nu'_2, \sigma')$-arc,
      nor a $(\nu_4, \nu_4)$-arc
      nor a $(\nu_4, \nu'_2)$-arc,
      nor a $(\nu'_2, \nu_4)$-arc,
      nor a $(\nu'_2, \nu'_2)$-arc
      in $\DMATCHING$.
    \end{Claim}

    \begin{proof}[Proof of Claim~\ref{claim:no arc below one (nu_3, nu'_3)-arc}]
      Combine Claim~\ref{claim:at least one (nu_3, nu'_3)-arc} together with the fact that
      $\DMATCHING$ is containment-free
      (\emph{i.e.}, it avoids the unlabeled patterns
      of $\mathcal{P}_{\mathrm{cont}}$).
    \end{proof}

    The next two claims state that $\DMATCHING$ actually contains
    all $(\nu_1, \nu'_1)$-arcs and all $(\nu_2, \nu'_2)$-arcs.

    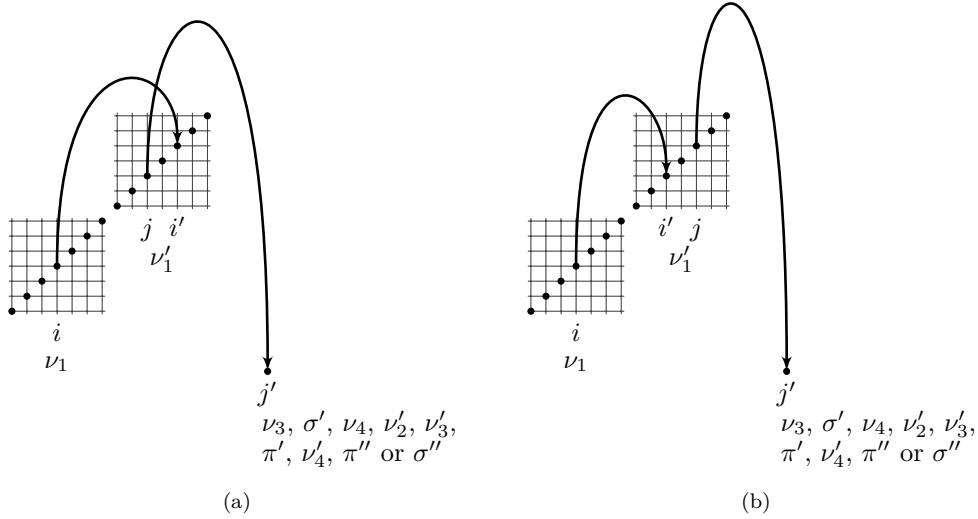
\begin{figure}[t!]
      \centering
      \subfigure[]{%
        \begin{tikzpicture}
          [
            scale=0.2,
            label/.style={anchor=base}
          ]
          \draw[step=1cm,black!75,ultra thin,fill=black!50] (-0.2,6.8) grid (6.2,13.2);
          \foreach \x/\y in {0/7,1/8,2/9,3/10,4/11,5/12,6/13} {
            \draw [fill=black] (\x,\y) circle (0.2);
          }
          \draw[step=1cm,black!75,ultra thin,fill=black!50] (6.8,13.8) grid (13.2,20.2);
          \foreach \x/\y in {7/14,8/15,9/16,10/17,11/18,12/19,13/20} {
            \draw [fill=black] (\x,\y) circle (0.2);
          }
          \draw [fill=black] (17,3) circle (0.2);
          \node [label] (i) at (3,5) {$i$};
          \node [label] (ip) at (11,12) {$i'$};
          \node [label] (j) at (9,12) {$j$};
          \node [label] (jp) at (17,1) {$j'$};
          \node (nu1) at (3,3.5) {$\nu_1$};
          \node (nu3) at (10,10.5) {$\nu'_1$};
          \node [anchor=west] (where) at (16,-.5) {$\nu_3$, $\sigma'$, $\nu_4$, $\nu'_2$, $\nu'_3$,};
          \node [anchor=west] (where) at (16,-2.5) {$\pi'$, $\nu'_4$, $\pi''$ or $\sigma''$};
          \draw [black,line width=1pt,->,>=latex']
          (3,10) .. controls +(0,16) and +(0,6) .. (11,18);
          \draw [black,line width=1pt,->,>=latex']
          (9,16) .. controls +(0,14) and +(0,30) .. (17,3);
        \end{tikzpicture}
        \label{subfig:all (nu_1, nu'_1)-arcs - 1}
      }
      \qquad
      \subfigure[]{%
        \begin{tikzpicture}
          [
            scale=0.2,
            label/.style={anchor=base}
          ]
          \draw[step=1cm,black!75,ultra thin,fill=black!50] (-0.2,6.8) grid (6.2,13.2);
          \foreach \x/\y in {0/7,1/8,2/9,3/10,4/11,5/12,6/13} {
            \draw [fill=black] (\x,\y) circle (0.2);
          }
          \draw[step=1cm,black!75,ultra thin,fill=black!50] (6.8,13.8) grid (13.2,20.2);
          \foreach \x/\y in {7/14,8/15,9/16,10/17,11/18,12/19,13/20} {
            \draw [fill=black] (\x,\y) circle (0.2);
          }
          \draw [fill=black] (17,3) circle (0.2);
          \node [label] (i) at (3,5) {$i$};
          \node [label] (ip) at (9,12) {$i'$};
          \node [label] (j) at (11,12) {$j$};
          \node [label] (jp) at (17,1) {$j'$};
          \node (nu1) at (3,3.5) {$\nu_1$};
          \node (nu3) at (10,10.5) {$\nu'_1$};
          \node [anchor=west] (where) at (16,-.5) {$\nu_3$, $\sigma'$, $\nu_4$, $\nu'_2$, $\nu'_3$,};
          \node [anchor=west] (where) at (16,-2.5) {$\pi'$, $\nu'_4$, $\pi''$ or $\sigma''$};
          \draw [black,line width=1pt,->,>=latex']
          (3,10) .. controls +(0,16) and +(0,6) .. (9,16);
          \draw [black,line width=1pt,->,>=latex']
          (11,18) .. controls +(0,14) and +(0,30) .. (17,3);
        \end{tikzpicture}
        \label{subfig:all (nu_1, nu'_1)-arcs - 2}
      }
      \caption{\label{fig:all (nu_1, nu'_1)-arcs}%
        Illustration of Claim~\ref{claim:all (nu_1, nu'_1)-arcs}.
      }
    \end{figure}

    \begin{Claim}
      \label{claim:all (nu_1, nu'_1)-arcs}
      $\DMATCHING$ contains $N_1$ pairwise crossing $(\nu_1, \nu'_1)$-arcs.
    \end{Claim}

    \begin{proof}[Proof of Claim~\ref{claim:all (nu_1, nu'_1)-arcs}]
      First, according to Claim~\ref{claim:(nu_2, nu'_2)-arc},
      $\DMATCHING$ contains at least one $(\nu_1, \nu'_1)$-arc.
      Now, suppose, aiming at a contradiction, that
      $\DMATCHING$ does not contain $N_1$ $(\nu_1, \nu'_1)$-arcs.
      Combining
      Claim~\ref{claim:no (nu_1, nu_2)-arc, no (nu_2, nu_1)-arc},
      Claim~\ref{claim:no (nu_2, nu'_1)-arc, no (nu'_1, nu_2)-arc},
      Claim~\ref{claim:no (nu_2, nu'_1)-arc, no (nu'_1, nu_2)-arc} and
      Claim~\ref{claim:no arc below (nu_2, nu'_2)-arc},
      we conclude that $\DMATCHING$ contains one of the two following
      labeled patterns:
      $\LabeledCrossingRR{.08}{2}{3}{4}{1}$ and
      $\LabeledPrecedenceRR{.08}{2}{3}{4}{1}$
      (see Figure~\ref{fig:all (nu_1, nu'_1)-arcs}).
      Applying Lemma~\ref{lemma:Forbidden patterns crossing} or
      Lemma~\ref{lemma:Forbidden patterns precedence}
      yields the sought-after contradiction.
    \end{proof}

    \begin{Claim}
      \label{claim:all (nu_2, nu'_2)-arcs}
      $\DMATCHING$ contains $N_2$ pairwise crossing $(\nu_2, \nu'_2)$-arcs.
    \end{Claim}

    \begin{proof}[Proof of Claim~\ref{claim:all (nu_2, nu'_2)-arcs}]
      The key idea is to focus on $\nu'_2$
      and combine
      Claim~\ref{claim:(nu_2, nu'_2)-arc},
      Claim~\ref{claim:no (nu_3, nu'_2)-arc, no (nu'_2, nu_3)-arc},
      Claim~\ref{claim: no arc right of nu'_2} and
      Claim~\ref{claim:no arc below one (nu_3, nu'_3)-arc}.
    \end{proof}

    \begin{Claim}
      \label{claim:no (nu_2, nu_4)-arc, no (nu_4, nu_2)-arc}
      There is neither a $(\nu_2, \nu_4)$-arc nor a
      $(\nu_4, \nu_2)$-arc in $\DMATCHING$.
    \end{Claim}

    \begin{proof}[Proof of Claim~\ref{claim:no (nu_2, nu_4)-arc, no (nu_4, nu_2)-arc}]
      According to Claim~\ref{claim:all (nu_2, nu'_2)-arcs},
      all positions in $\nu_2$ and $\nu'_2$ are involved in
      $(\nu_2, \nu'_2)$-arcs in $\DMATCHING$.
    \end{proof}

    \begin{figure}[t!]
      \centering
      \subfigure[]{%
        \begin{tikzpicture}
          [
            scale=0.2,
            label/.style={anchor=base}
          ]
          \draw[step=1cm,black!75,ultra thin,fill=black!50] (-0.2,6.8) grid (6.2,13.2);
          \foreach \x/\y in {0/7,1/8,2/9,3/10,4/11,5/12,6/13} {
            \draw [fill=black] (\x,\y) circle (0.2);
          }
          \draw[step=1cm,black!75,ultra thin,fill=black!50] (6.8,-0.2) grid (13.2,6.2);
          \foreach \x/\y in {7/6,8/5,9/4,10/3,11/2,12/1,13/0} {
            \draw [fill=black] (\x,\y) circle (0.2);
          }
          \draw[step=1cm,black!75,ultra thin,fill=black!50] (13.8,13.8) grid (20.2,20.2);
          \foreach \x/\y in {14/14,15/15,16/16,17/17,18/18,19/19,20/20} {
            \draw [fill=black] (\x,\y) circle (0.2);
          }
          \node [label] (j) at (3,5) {$j$};
          \node [label] (jp) at (16,12) {$j'$};
          \node [label] (i) at (10,-2) {$i$};
          \node [label] (ip) at (18,12) {$i'$};
          \node (nu1) at (3,3.5) {$\nu'_1$};
          \node (nu2) at (10,-3.5) {$\nu_4$};
          \node (nu1') at (17,10.5) {$\nu'_3$};
          \draw [black,line width=1pt,->,>=latex']
          (3,10) .. controls +(0,16) and +(0,8) .. (16,16);
          \draw [black,line width=1pt,->,>=latex']
          (10,3) .. controls +(0,28) and +(0,10) .. (18,18);
        \end{tikzpicture}
        \label{subfig:no (nu_4, nu'_3)-arc, no (nu'_3, nu_4)-arc - 1}
      }
      \qquad\qquad
      \subfigure[]{%
        \begin{tikzpicture}
          [
            scale=0.2,
            label/.style={anchor=base}
          ]
          \draw[step=1cm,black!75,ultra thin,fill=black!50] (-0.2,6.8) grid (6.2,13.2);
          \foreach \x/\y in {0/7,1/8,2/9,3/10,4/11,5/12,6/13} {
            \draw [fill=black] (\x,\y) circle (0.2);
          }
          \draw[step=1cm,black!75,ultra thin,fill=black!50] (6.8,-0.2) grid (13.2,6.2);
          \foreach \x/\y in {7/6,8/5,9/4,10/3,11/2,12/1,13/0} {
            \draw [fill=black] (\x,\y) circle (0.2);
          }
          \draw[step=1cm,black!75,ultra thin,fill=black!50] (13.8,13.8) grid (20.2,20.2);
          \foreach \x/\y in {14/14,15/15,16/16,17/17,18/18,19/19,20/20} {
            \draw [fill=black] (\x,\y) circle (0.2);
          }
          \node [label] (j) at (3,5) {$j$};
          \node [label] (jp) at (16,12) {$j'$};
          \node [label] (ip) at (10,-2) {$i'$};
          \node [label] (i) at (18,12) {$i$};
          \node (nu1) at (3,3.5) {$\nu'_1$};
          \node (nu2) at (10,-3.5) {$\nu_4$};
          \node (nu1') at (17,10.5) {$\nu'_3$};
          \draw [black,line width=1pt,->,>=latex']
          (3,10) .. controls +(0,16) and +(0,8) .. (16,16);
          \draw [black,line width=1pt,<-,>=latex']
          (10,3) .. controls +(0,28) and +(0,10) .. (18,18);
        \end{tikzpicture}
        \label{subfig:no (nu_4, nu'_3)-arc, no (nu'_3, nu_4)-arc - 2}
      }
      \qquad
      \caption{\label{fig:no (nu_4, nu'_3)-arc, no (nu'_3, nu_4)-arc}%
        Illustration of Claim~\ref{claim:no (nu_4, nu'_3)-arc, no (nu'_3, nu_4)-arc}.}
    \end{figure}
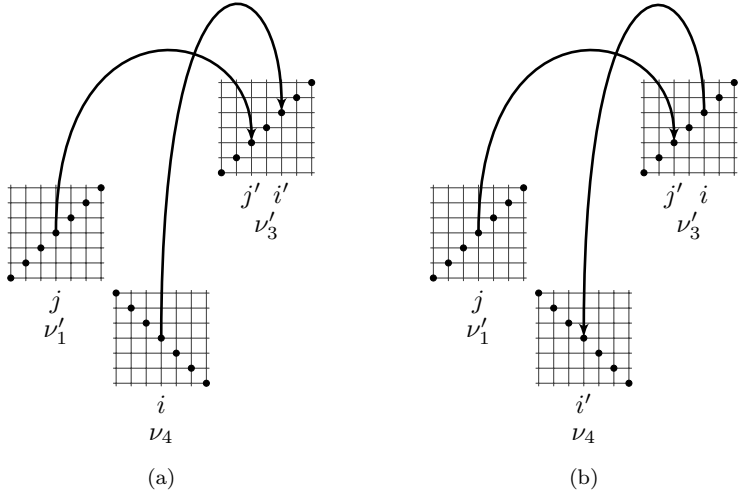

    \begin{Claim}
      \label{claim:no (nu_4, nu'_3)-arc, no (nu'_3, nu_4)-arc}
      There is neither a $(\nu_4, \nu'_3)$-arc
      nor a $(\nu'_3, \nu_4)$-arc in $\DMATCHING$.
    \end{Claim}

    \begin{proof}[Proof of Claim~\ref{claim:no (nu_4, nu'_3)-arc, no (nu'_3, nu_4)-arc}]
      First, according to
      Corollary~\ref{corollary:at most one arc monotone}--\ref{item:at most one arc monotone 2},
      there exists either
      at most one $(\nu_4, \nu'_3)$-arc and no $(\nu'_3, \nu_4)$-arc,
      or
      at most one $(\nu'_3, \nu_4)$-arc and no $(\nu_4, \nu'_3)$-arc
      $(i, i')$ in $\DMATCHING$ (see Figure~\ref{fig:no (nu_4, nu'_3)-arc, no (nu'_3, nu_4)-arc}).
      Now from Claim~\ref{claim:at least one (nu_3, nu'_3)-arc}, there exists at least
      one $(\nu_3, \nu'_3)$-arc, say $(j, j')$, in $\DMATCHING$.
      Hence,
      since $\DMATCHING$ is containment-free
      (\emph{i.e.}, it avoids the unlabeled patterns of
      $\mathcal{P}_{\mathrm{cont}}$),
      $\DMATCHING$ contains one of the two following labeled patterns:
      $\LabeledCrossingRR{.08}{2}{1}{3}{4}$
      and
      $\LabeledCrossingRL{.08}{2}{1}{3}{4}$.
      Applying Lemma~\ref{lemma:Forbidden patterns crossing}
      yields the sought-after contradiction.
    \end{proof}

    \begin{Claim}
      \label{claim:at least one (nu_4, nu'_4)-arc}
      There is at least one $(\nu_4, \nu'_4)$-arc in $\DMATCHING$.
    \end{Claim}

    \begin{proof}[Proof of Claim~\ref{claim:at least one (nu_4, nu'_4)-arc}]
      First, according to Claim~\ref{claim:at least one (nu_3, nu'_3)-arc},
      there is at least one $(\nu_3, \nu'_3$)-arc in $\DMATCHING$.
      Therefore, since $\DMATCHING$ avoids the unlabeled pattern $\CrossingRL$
      (Property~$\mathbf{P_1}$),
      there is no $(\nu'_4, \nu_4)$-arc in $\DMATCHING$.
      Now, suppose, aiming at a contradiction, that there is no
      $(\nu_4, \nu'_4)$-arc in $\DMATCHING$.
      First, according to
      Claim~\ref{claim:all (nu_1, nu'_1)-arcs} and
      Claim~\ref{claim:all (nu_2, nu'_2)-arcs},
      there is neither
      a $(\nu_1, \nu_4)$-arc
      nor a $(\nu_4, \nu_1)$-arc
      nor a $(\nu'_1, \nu_4)$-arc
      nor a $(\nu_2, \nu_4)$-arc
      nor a $(\nu_4, \nu_2)$-arc
      nor a $(\nu'_1, \nu_4)$-arc
      nor a $(\nu_4, \nu'_1)$-arc
      nor a $(\nu_3, \nu_4)$-arc
      not a $(\nu_4, \nu_3)$-arc
      nor a $(\sigma', \nu_4)$-arc
      nor a $(\nu_4, \sigma')$-arc
      nor a $(\nu_4, \nu_4)$-arc
      nor a $(\nu_4, \nu'_2)$-abstract
      nor a $(\nu'_2, \nu_4)$-arc in $\DMATCHING$.
      Furthermore, according to
      Claim~\ref{claim:no (nu_4, nu'_3)-arc, no (nu'_3, nu_4)-arc},
      there is neither
      a $(\nu_4, \nu'_3)$-arc
      nor a $\nu'_3, \nu_4)$-arc in $\DMATCHING$.
      But $N_4 > |\pi'| + |\pi''| + |\sigma''|$, and
      hence $\DMATCHING$ is not a direct perfect matching,
      which contradicts our hypothesis about $\DMATCHING$.
    \end{proof}

    \begin{Claim}
      \label{claim:no arc below (nu_4, nu'_4)-arc}
      There is neither
      a $(\pi', \pi')$-arc
      nor a $(\sigma', \pi'')$-arc
      nor a $(\pi'', \sigma')$-arc
      nor a $(\sigma', \sigma'')$-arc
      nor a $(\sigma'', \sigma')$-arc
      in $\DMATCHING$.
    \end{Claim}

    \begin{proof}[Proof of Claim~\ref{claim:no arc below (nu_4, nu'_4)-arc}]
      Combine Claim~\ref{claim:at least one (nu_4, nu'_4)-arc} together with the
      fact that $\DMATCHING$ has Property~$\mathbf{P_1}$ and hence is containment-free
      (\emph{i.e.}, it avoids the unlabeled patterns of $\mathcal{P}_{\mathrm{cont}}$).
    \end{proof}

    \begin{Claim}
      \label{claim:no (sigma', nu'_3)-arc, no (nu'_3, sigma')-arc}
      There is neither
      a $(\sigma', \nu'_3)$-arc
      nor a $(\nu'_3, \sigma')$-arc
      in $\DMATCHING$.
    \end{Claim}

    \begin{proof}[Proof of Claim~\ref{claim:no (sigma', nu'_3)-arc, no (nu'_3, sigma')-arc}]
      First, according to Claim~\ref{claim:at least one (nu_3, nu'_3)-arc},
      there is at least one $(\nu_3, \nu'_3$)-arc in $\DMATCHING$.
      Therefore, since $\DMATCHING$ avoids the unlabeled pattern $\CrossingRL$ (Property~$\mathbf{P_1}$),
      there is no $(\nu'_3, \sigma')$-arc in $\DMATCHING$.
      Now, suppose, aiming at a contradiction, that there is
      a $(\sigma', \nu'_3)$-arc in $\DMATCHING$.
      Hence,
      since $\DMATCHING$ is containment-free
      (\emph{i.e.}, it avoids the unlabeled patterns of
      $\mathcal{P}_{\mathrm{cont}}$),
      $\DMATCHING$ contains the labeled pattern
      $\LabeledCrossingRR{.08}{2}{1}{3}{4}$.
      Applying Lemma~\ref{lemma:Forbidden patterns crossing}
      yields the sought-after contradiction.
    \end{proof}

    \begin{Claim}
      \label{claim:no (sigma', nu'_4)-arc, no (nu'_4, sigma')-arc}
      There is neither
      a $(\sigma', \nu'_4)$-arc
      nor a $(\nu'_4, \sigma')$-arc
      in $\DMATCHING$.
    \end{Claim}

    \begin{proof}[Proof of Claim~\ref{claim:no (sigma', nu'_4)-arc, no (nu'_4, sigma')-arc}]
      First, according to Claim~\ref{claim:at least one (nu_4, nu'_4)-arc},
      there is at least one $(\nu_4, \nu'_4$)-arc in $\DMATCHING$.
      Therefore, since $\DMATCHING$
      is containment-free
      (\emph{i.e.}, it avoids the unlabeled patterns of
      $\mathcal{P}_{\mathrm{cont}}$),
      and avoids $\CrossingRL$ (Property~$\mathbf{P_1}$),
      there is no $(\nu'_4, \sigma')$-arc in $\DMATCHING$.
      Now, suppose, aiming at a contradiction, that there is
      a $(\sigma', \nu'_4)$-arc in $\DMATCHING$.
      Hence,
      $\DMATCHING$ contains the labeled pattern
      $\LabeledCrossingRR{.08}{3}{4}{2}{1}$
      Applying Lemma~\ref{lemma:Forbidden patterns crossing}
      yields the sought-after contradiction.
    \end{proof}

    \begin{Claim}
      \label{claim:no (pi', sigma', nu'_4)-arc}
      There is no $(\pi', \sigma')$-arc
      in $\DMATCHING$.
    \end{Claim}

    \begin{proof}[Proof of Claim~\ref{claim:no (pi', sigma', nu'_4)-arc}]
      Combine Claim~\ref{claim:at least one (nu_4, nu'_4)-arc} together with the
      fact that $\DMATCHING$ avoids the unlabeled pattern $\CrossingLR$
      (Property~$\mathbf{P_1}$).
    \end{proof}

    Combining the above claims, we conclude that there are
    $k+2$ $(\sigma', \pi')$-arcs in $\DMATCHING$.
    Recall that
    \begin{equation}
        \sigma' = ((k+1) \; \sigma \; (k+2)) \; [2N_2 + N_4 + 2n + k + 2]
    \end{equation}
    and that
    \begin{equation}
        \pi' = ((n+1) \; \pi \; (n+2)) \; [2N_2 + N_4 + n + k + 2].
    \end{equation}
    Then it follows we have
    at least $k$ (possibly $k+1$ or $k+2$) independent $(\sigma', \pi')$-arcs
    $(a, a')$ in $\DMATCHING$ with
    \begin{equation}
        2N_1 + N_2 + N_3 + 1 < a < 2N_1 + N_2 + N_3 + k + 2
    \end{equation}
    and
    \begin{equation}
        2N_1 + N_2 + N_3 + (k + 2) + 1 < a' < 2N_1 + N_2 + N_3 + (k + 2) + n + 2.
    \end{equation}
    Therefore, by our hypothesis about $\DMATCHING$,
    $\sigma$ occurs as a pattern in~$\pi$.
  \end{proof}


  \section{Conclusion and perspectives}
  \label{section:Conclusion}

  There are a number of further directions of investigation in this
  general subject. They cover several areas: algorithmic, combinatorics,
  and algebra. Let us mention several ---not necessarily new--- open
  problems that are, in our opinion, the most interesting. How many
  permutations of $S_{2n}$ are squares? How many $(213,231)$-avoiding
  permutations of $S_{2n}$ are squares? (Equivalently, by
  Proposition~\ref{prop:bijection_binary_to_permutations_squares},
  how many binary strings of length $2n$ are squares; see also Problem~4
  in \cite{Henshall:Rampersad:Shallit:2011})? How hard is the problem of
  deciding whether a $(213,231)$-avoiding permutation is a square
  (Problem~4 in \cite{Henshall:Rampersad:Shallit:2011},
  see also \cite{Buss:Soltys:2014,Rizzi:Vialette:CSR:2013})?
  Given two permutations $\pi$ and~$\sigma$, how hard is the problem of
  deciding whether $\sigma$ is a square root of~$\pi$?
  As for algebra, one can ask for a complete algebraic study of
  $\QQ[S]$ as a graded associative algebra for the shuffle  product
  $\SHUFFLE$. Describing a generating family for $\QQ[S]$, defining
  multiplicative bases of $\QQ[S]$, and determining whether $\QQ[S]$ is
  free as an associative algebra are worthwhile questions.


  \bibliographystyle{alpha}
  \bibliography{Bibliography}


\end{document}